\title{Realizability of Free Spaces of Curves} 
\author{Hugo A. Akitaya} {Department of Computer Science, University of Massachusetts Lowell, USA}{hugo\_akitaya@uml.edu}{https://orcid.org/0000-0002-6827-2200}{}
\author{Maike Buchin}{Department of Computer Science, Ruhr University Bochum, Germany}{maike.buchin@rub.de}{https://orcid.org/0000-0002-3446-4343}{}
\author{Majid Mirzanezhad}{Transportation Research Institute, University of Michigan\,--\, Ann Arbor, USA}{miirza@umich.edu}{https://orcid.org/0000-0002-2950-673X}{}
\author{Leonie Ryvkin}{Department of Mathematics and Computer Science, Eindhoven University of Technology, the Netherlands}{leonie.ryvkin@tue.nl}{https://orcid.org/0000-0002-7036-1341}{}
\author{Carola Wenk}{Department of Computer Science, Tulane University, Louisiana, USA}{cwenk@tulane.edu}{https://orcid.org/0000-0001-9275-5336}{}
\authorrunning{H. Akitaya and M. Buchin and M. Mirzanezhad and L. Ryvkin and C. Wenk}
\keywords{Fr\'echet distance, Distance Geometry, free space diagram, inverse problem} 
\newcommand{\frechet}{Fr\'echet\xspace}
\newcommand{\fd}{\frechet distance\xspace}
\newcommand{\dfd}{discrete \frechet distance\xspace}
\newcommand{\fsd}{free space diagram\xspace}
\newcommand{\fsm}{free space matrix\xspace}
\newcommand{\nex}{\mathsf{Next}}
\newcommand{\pre}{\mathsf{Prev}}
\newcommand{\Reals}{\ensuremath{\mathbb{R}}}
\newcommand{\eps}{\ensuremath{\varepsilon}}
\newcommand{\ourremark}[2]{}
\newcommand{\majid}[1]{\ourremark{Majid}{#1}}
\newcommand{\todo}[1]{\ourremark{TODO}{#1}}
\newcommand{\highlight}[1]{\textcolor{blue}{#1}}
\newcommand{\figref}[1]{Figure~\ref{#1}}
\newcommand{\tabref}[1]{Table~\ref{#1}}
\newcommand{\secref}[1]{Section~\ref{#1}}
\newcommand{\lemref}[1]{Lemma~\ref{#1}}
\newcommand{\thmref}[1]{Theorem~\ref{#1}}
\newcommand{\obsref}[1]{Observation~\ref{#1}}
\newcommand{\corref}[1]{Corollary~\ref{#1}}
\begin{document}

 \maketitle

\begin{abstract}
The free space diagram is a popular tool to 
compute the well-known Fréchet distance. 
As the Fréchet distance is used in many different fields, 
many variants have been established to cover the specific needs of these applications. 
Often the question arises whether a certain pattern in the free space diagram is \emph{realizable}, 
i.e., whether there exists a pair of polygonal chains whose free space diagram corresponds to it. 
The answer to this question may help in deciding the computational complexity of these distance measures, 
as well as allowing to design more efficient algorithms for restricted input classes that avoid certain 
free space patterns.
Therefore we study the inverse problem: Given a potential \fsd, do there exist curves that generate this diagram? \majid{Have we not already defined the problem above?}

Our problem of interest is closely tied to the classic Distance Geometry problem. We settle the complexity of Distance Geometry in $\Reals^{> 2}$, showing $\exists\Reals$-hardness.
We use this to show that for curves in $\Reals^{\ge 2}$ the realizability problem is $\exists\Reals$-complete, both for continuous and for discrete Fr\'echet distance. 
%
%
We prove that the continuous case in $\Reals^1$ is only weakly NP-hard, and we provide a pseudo-polynomial time algorithm and show that it is fixed-parameter tractable.
Interestingly, for the discrete case in $\Reals^1$ we show that the problem becomes solvable in polynomial time.
\end{abstract}

\setcounter{page}{1}

\section{Introduction}
\label{sec:intro}


The \fd is arguably the most popular 
distance measure for curves in computational geometry and has been studied extensively in the last years. It has application in various fields, including geographic data analysis and the comparison of protein chains~\cite{eghmm-nsmbp-02, glw-mpstd-07, skb-fdbas-07, jxz-pssad-08}.  
For the latter application, typically the well-established variant, 
the \dfd, is used. 
The standard tool for computing the \fd of two curves is the {\em \fsd}, which is the cross-product of the parameter spaces of the curves partitioned into {\em free space} and its complement, where free space is the sublevel set of the distance function for a given $\eps>0$. 
For two piecewise linear curves of $m$ and $n$ line segments parameterized by their natural arc-length parametrization, it is well-known that the \fsd\ consists of $mn$ cells, and the free space in each cell has the shape of a cropped ellipse \cite{altgodau}. 
The \fd is at most $\eps$ if and only if there exists a monotone path in the \fsd\ that covers the parameter spaces of both curves.
Hence, to compute the \fd one searches for such a path in the \fsd. 

For different applications, many variants of the \fd have been developed, which are typically also computed using the \fsd.
The \dfd relies on (a discretization of the \fsd) the {\em \fsm}: For two discretized curves given by $n$ and $m$ points, respectively, the $n\times m$ free space matrix with entries $a_{i,j}\in\{0,1\}$ captures whether two points $i,j$ of different curves lie within $\eps>0$ distance of one another, or not. 
The \dfd of two curves is at most $\eps$ iff there exists a monotone ``path'' of 1 entries connecting opposing corners in the \fsm.

Runtimes of the resulting algorithms usually depend on the complexity of the \fsd or \fsm~\cite{dhw-afdrc-12, ahkww-fdcr-06, adaptiveDisFrechet-18}. It is known that neither \fd nor \dfd can be computed in subquadratic time unless SETH fails~\cite{bringmann, fastweak, bw-adfd-15}. 
However, there are several faster algorithms 
for special curve classes such as~\cite{knauer,driemel}, which exploit a special structure of the free space diagram. 
%
%
These complexity bounds always consider the worst-case complexity of the \fsd, but not every \fsd\ can be realized by a pair of curves. 
Also, some variants of the \fd have proven to be NP-hard to decide~\cite{isaac,shortcut2}, some of which build certain free space diagrams for the reduction. 

Here we therefore study the inverse problem: Given a (potential) \fsd  (\fsm), do there exist curves (ordered point sets) that generate this \fsd or matrix? 
To our knowledge this {\em free space realizability problem} has so far only been studied for special cases~\cite{phdleo, eurocg21, eurocg22}. 
Understanding it 
will give structural insights into free spaces and the computation of the \fd, in particular for special curve classes.  
Note that although we gained a good understanding of the realizability problem in the settings described below, other settings remain to be investigated further.
We are particularly interested in studying settings where less information is given in the free space diagram or matrix, e.g. only some cells or only cell boundaries are provided with the input.
\subparagraph*{Overview.}
%
We give results (see \tabref{tab:resultsOverview}) 
for both the continuous and the discrete variant of the problem, with \fsd and \fsm inputs, 
and curves may be realized in $\Reals^d$, with $d=1$ or $d\geq 2$.
%
%
We show that for curves in $\Reals^{\ge 2}$ the realizability problem is $\exists\Reals$-complete both for the continuous case (\secref{sec:continuous2DHardness}) and for the discrete case (\secref{sec:discrete2DHardness}).
For the continuous case in $\Reals^2$, algorithms are known only for special cases~\cite{phdleo, eurocg21}. 
%
For curves in $\Reals^1$, the problem in the discrete case interestingly becomes solvable in polynomial time (\secref{sec:discrete1Dpoly}), while in the continuous case, it is weakly NP-hard (\secref{sec:continuous1D}) and fixed-parameter tractable, 
and we also provide a pseudo-polynomial time algorithm 
(\secref{sec:continuous1D}).

\begin{table}[htbp]
    \centering
    \begin{tabular}{|p{3cm}|l|l|}
    \hline
    \small
     Input & $\Reals^d$   & Results  \\\hline\hline
     \multirow{5}{*}{\begin{minipage}{3cm} Free Space Diagram \end{minipage}} & $d\ge 2$ &  $\exists\Reals$-complete (\highlight{\thmref{thm:er-hardness}}) \textbf{\secref{sec:continuous2DHardness}}\\
     && Algorithms for special cases 
     (\cite{phdleo, eurocg21})\\\cline{2-3}
     & & weakly NP-hard (\highlight{\thmref{thm:1D-continuous-hardness}}) \textbf{\secref{sec:continuous1D}}\\
     & $d=1$ & FPT $O(mn2^k)$ (\highlight{\thmref{thm:alg-fpt-continuous1d}}) \textbf{\secref{sec:continuous1D}}\\
     && Pseudo-poly time (\highlight{\thmref{thm:pseudo-poly}}) \textbf{\secref{sec:continuous1D}}\\\hline
     \multirow{2}{*}{\begin{minipage}{3cm} Free Space Matrix \end{minipage}} & $d\ge 2$ & $\exists\Reals$-complete (\highlight{\thmref{thm:ER-discrete}}) \textbf{\secref{sec:discrete2DHardness}}\\\cline{2-3}
     & $d=1$ & 
     $O(nm^2)$
     (\highlight{\thmref{thm:uig-runtime}}) \textbf{\secref{sec:discrete1Dpoly}}\\\hline        
    \end{tabular}
    \caption{Overview of our and known results. \majid{any explanation for $k$? \# of empty strips?}}
    \label{tab:resultsOverview}
\end{table}

\subparagraph*{Related problems.}
The exploration of inverse problems is a recurrent subject in computational geometry, often applied to
recognition and reconstruction problems. Notable examples 
are the \emph{inverse Voronoi Diagram} problem~\cite{ash1985recognizing} and the \emph{visibility graph recognition} problem~\cite{boomari2018recognizing}. 
Our problem of interest can be viewed as a curve embeddability problem given certain proximity criteria on edge lengths and point-to-point distances between two curves in their free space diagram. 
Our results have a significant impact on problems involving distance constraints on geometric graphs, such as \emph{Distance Geometry} and \emph{Disk Intersection Graphs}. 

{\bf Distance Geometry.}
Our results closely relate to problems involving distance constraints on geometric graphs, such as the 
\emph{Distance Geometry} problem, which is a classic inverse problem in computational geometry. It involves embedding an abstract weighted graph in $\Reals^d$ Euclidean space such that the Euclidean length of each edge corresponds to its weight~\cite{saxe1979embeddability}. This problem is equivalent to \emph{Linkage Realizability}, where the edges correspond to rigid bars in a mechanical linkage~\cite{saxe1979embeddability}. 
While distance geometry was shown to be NP-hard in the 70s, its membership in NP remained open until Schaefer showed $\exists\Reals$-completeness in 2012~\cite{schaefer2012realizability} for $\Reals^2$. Whether the problem remains $\exists\Reals$-hard in higher dimensions was posed as an open problem by Schaefer. 
To show that the continuous version of the free space realizability  
problem in $\Reals^{\ge 2}$ is $\exists\Reals$-hard we reduce from distance geometry, using a gadget from~\cite{saxe1979embeddability}. 

{\bf Unit Sphere Graphs.}
The sphericity of a graph is the minimum dimension for which the graph has a unit sphere representation. Unit sphere graph realizability is known to be $\exists\Reals$-hard. Havel first introduced the study of sphericity in the context of molecular conformation~\cite{havel1982combinatorial}. A unit sphere graph is an intersection graph of unit spheres and can be seen as a complete graph where each edge is marked with a distance constraint $\le 1$ or $>1$. The problem of realizing a free space matrix corresponds to realizing a complete bipartite graph where each edge is marked with a distance constraint $\le 1$ or $>1$ (Section~\ref{sec:discrete2DHardness}). This defines a class of graphs, as do unit disk graphs. In contrast to a unit disk graph, there are pairs of vertices (the ones in the same partite set) whose distances are dispensable. 
A similar class are visibility graphs, the recognition of which is also known to be $\exists\Reals$-complete~\cite{ch-rcpvg-17}. 

{\bf Bipartite Distance-Constrained Graphs.}
Modeling distance constraints in general (non-complete) graphs is useful when data is unavailable between every pair of nodes or due to the topology of the underlying network. E.g., heteronuclear NMR is used to obtain less cluttered and less noisy data~\cite{noggle2012nuclear}. This allows the inference of distances between two different types of atoms, and thus, the distances constraints form a bipartite graph. 





\section{Preliminaries}\label{sec:realdef}
\subparagraph*{Continuous case.}
Let $P=(p_0,\ldots,p_n)$ and $Q=(q_0,\ldots,q_m)$ be polygonal curves in $\Reals^d$ of lengths $\ell_P$ and $\ell_Q$, continuously parameterized by arc-length, i.e. $p_i=P(i)$ and $q_j=Q(j)$ for $i \in [n]$, $j \in [m]$, where $[n]= \{1, \cdots, n\}$.
Given $\eps>0$, their \emph{free space} is defined as $F_\eps(P,Q) = \{(r,t) \mid \Vert P(r)-Q(t)\Vert \leq \eps\}$. The \emph{free space diagram} puts this information in an $m \times n$ grid: We define $D_\eps(P,Q)$ as the colored rectangle $R=[0, \ell_P]\times [0, \ell_Q]\subseteq \Reals^2$, where a point $(p,q)\in R$ is colored white iff $(p,q)\in F_\eps(P,Q)$.
The grid $X$, which is the set of segments $\{p_i\times[0,\ell_Q] \mid i\in \{0, \dots, n\}\} \cup \{[0,\ell_P]\times q_j \mid j\in \{0, \dots, m\}\}$, subdivides $R$ into $n \times m$ cells $C_{i,j}$.
We call a single cell $C_{i,j}$ \emph{empty} 
if $C_{i,j} \cap F_\eps = \emptyset$ and \emph{full} 
if $C_{i,j} \cap F_\eps = C_{i,j}$. 
If $\emptyset \neq C_{i,j} \cap F_\eps \neq C_{i,j}$, the cell $C_{i,j}$ is called \emph{partially full}. 
A diagram $D_\eps$ is called \emph{realizable} if there exist curves $P, Q$ such that $D_\eps(P,Q)=D_\eps$.
%
We assume that the exact lengths of all cell boundaries and the equation of each component's boundary curve 
are part of the input. 
We consider the real RAM computation model throughout the paper.
In \figref{fig:realintroexample}, the leftmost diagram is not realizable:
Upon fixing the placement of segments corresponding to cell $C_{1,1}$, we have two options to place the remaining segments such that cell $C_{2,2}$ is realized.
This either induces components in none of the remaining cells, or in both.

\begin{figure}[ht]
	\centering
	\includegraphics[scale=0.6]{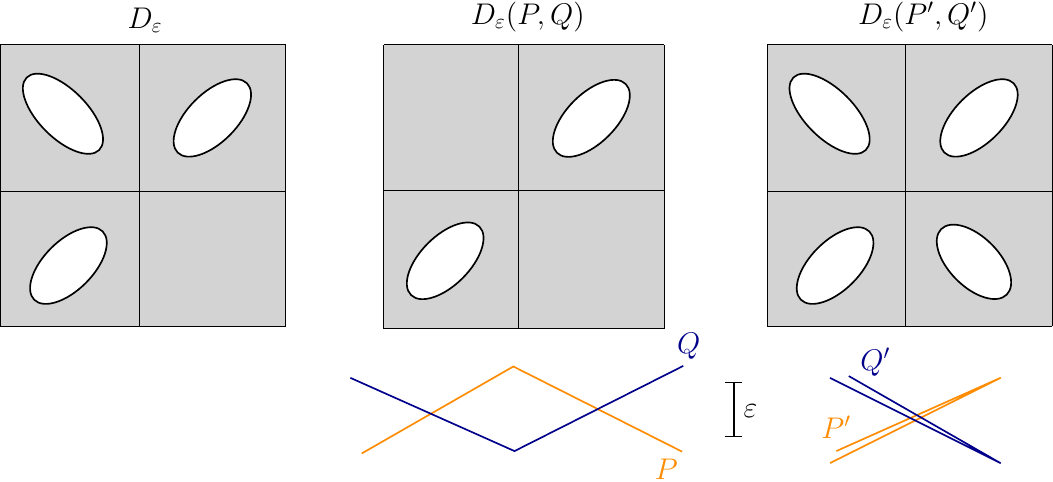}
	\caption{Given diagram $D_\eps$, there are two ways to place curves $P,Q$ in $\Reals^2$. Neither realizes $D_\eps$.}
	\label{fig:realintroexample}
\end{figure}

In the following, we denote the 
ball of radius $r$ centered at a point $x
\in \Reals^d$ as $\mathcal{B}_r(x) :=\{p \in \mathbb{R}^d \mid \Vert x-p\Vert \leq r\}$.
The \emph{$\eps$-neighborhood} of an object $X$ is given by $\bigcup_{x \in X}\mathcal{B}_\eps(x)$. 
We denote by $s_i^P=\overline{p_{i-1}p_i}$, $i\in [n]$, the line segment connecting consecutive vertices of $P$, and define $s_i^Q$ analogously.
%
In $\Reals^1$, 
if two consecutive segments $s^P_i, s^P_{i+1}$ have different orientations (segments are placed on top of each other), we say the curve \emph{folds} at the common \emph{folding vertex} $p_i$.
Else, we say that the curve is \emph{straight} at $p_i$.
It is known that the free space of two lines has the shape of a cropped ellipse with axis at $\pm 45^\circ$~\cite{altgodau,Rote}. For curves in $\Reals^1$, the lines are necessarily parallel, hence the ellipse degenerates to a slab bounded by lines at $\pm 45^\circ$~\cite{Buchin2017}.

%

\subparagraph*{Partially full cells.}
We now establish a necessary condition for curves $P, Q$ to realize $D_\eps$ that is used in Sections~\ref{sec:continuous2DHardness}--\ref{sec:continuous1D}.
Partially full cells give us information about the relative placement of the segments. 
We use the term \emph{relative placement} of two segments to mean that we know the distances between the intersection point of the lines containing the segments and the segment endpoints. 
Note that after fixing the position of one segment, this still allows for two symmetric placements of the second segment. 

\begin{restatable}{lemma}{realpartially}\label{lem:realpartiallyfull}
	Given a partially full free space cell, $\eps>0$, and four points on the boundary of the ellipse in the cell, none of which are mirror images of another with respect to the ellipse's major and minor axes, we can compute the corresponding segments' relative placement.
\end{restatable}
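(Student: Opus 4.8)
The plan is to recover the relative placement by reconstructing the equation of the ellipse that bounds the free space inside the cell, and then reading the geometric parameters off that equation. First I would coordinatize the two supporting lines by arc length, placing the (unknown) intersection point of the lines at parameter value $\alpha$ on the $P$-side and $\beta$ on the $Q$-side, and letting $\theta$ be the (unknown) angle between the lines. Expanding $\Vert P(r)-Q(t)\Vert^2$ then shows that the free-space boundary inside the cell is the curve
\[
(r-\alpha)^2 - 2\cos\theta\,(r-\alpha)(t-\beta) + (t-\beta)^2 = \eps^2 ,
\]
an ellipse whose axes are the lines of slope $\pm 1$ through its center $(\alpha,\beta)$, matching the known $\pm 45^\circ$ shape. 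Since $\eps$ is given, the only unknowns are $\theta,\alpha,\beta$, and recovering them is exactly recovering the relative placement: the distances from the intersection point to the four endpoints are $|\alpha-r_0|,|\alpha-r_1|,|\beta-t_0|,|\beta-t_1|$, where $r_0,r_1,t_0,t_1$ are the known cell boundaries.

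For the reconstruction I would pass to the rotated frame aligned with the $\pm 45^\circ$ axes, turning the equation into an \emph{axis-aligned} conic $\lambda_1\xi^2+\lambda_2\eta^2+\mu\xi+\nu\eta+\rho=0$ with no cross term, the absent cross term encoding precisely the known axis directions. Each of the four given points contributes one linear equation in the coefficients $(\lambda_1,\lambda_2,\mu,\nu,\rho)$; because the vanishing of the cross term is built in, four points suffice to pin the conic down up to scale, rather than the five a general conic would need. I would solve this homogeneous system, use the known $\eps$ to fix the overall scale and the split between $\lambda_1$ and $\lambda_2$ (equivalently $\theta$), complete the square to obtain $(\alpha,\beta)$, and output the endpoint distances. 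The two sign choices made when extracting $\alpha,\beta$ are exactly the two symmetric placements flagged before the lemma, so the relative placement, being reflection invariant, is well defined.

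The hard part is the uniqueness step: I must show that under the non-mirror hypothesis the four point-constraints are independent, so that the conic, and hence $\theta,\alpha,\beta$, is determined rather than lying in a pencil. The obstacle is genuinely subtle, because the five evaluation functions $\xi^2,\eta^2,\xi,\eta,1$ restricted to an ellipse span only a four-dimensional space, so the four constraints are automatically ``tight'' and one must rule out an accidental second admissible ellipse through the same four points. I would attack this through the reflection symmetry of the ellipse: the group generated by the reflections across the major and minor axes acts on the space of axis-aligned conics, and I expect a pair of points related by one of these reflections to be precisely the configuration that makes two of the constraints dependent on the symmetric coordinate functions, collapsing the solution set into a pencil and admitting a second valid $\eps$-ellipse. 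Excluding all such mirror pairs is then what keeps the four constraints in general position, after which the known value of $\eps$ removes the residual scaling freedom and leaves a single admissible ellipse. Making the correspondence between mirror pairs and rank drop exact, and confirming that the $\eps$-constraint selects uniquely within any remaining family, is the most delicate point and the main thing I would need to nail down.
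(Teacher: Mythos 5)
Your proposal follows essentially the same route as the paper's proof: both exploit the known $\pm 45^\circ$ axis orientation to write the cell's ellipse equation with unknown center and axis lengths (equivalently your $\theta,\alpha,\beta$), fit that equation to the four given points, convert the recovered parameters into the angle between the segments and their distances to the intersection point, and dismiss the residual sign choice as the two symmetric placements. The uniqueness step you flag as unresolved is treated no more rigorously in the paper, whose proof simply asserts that the ellipse's boundary equation can be computed from the data (deferring details to Lemma~5.5 of the cited thesis), so your sketch is at least as complete as the published argument.
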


The proof 
(see Lemma~5.5 in~\cite{phdleo}) relies on knowing that a cell is an ellipse at $45^\circ$. 
Note that we obtain much less information from full or empty cells, namely only that the segments do or do not lie within or not within distance $\eps$ from each other. 

\subparagraph*{Definitions: Discrete case.}
For discrete polygonal curves (i.e., point sequences) $P,Q$ with $n$ and $m$ points, resp., the free space is defined as $F_\eps(P,Q) = \{(i,j) \in [n] \times [m] \mid \Vert p_i- q_j\Vert \leq \eps\}$.
We define the \fsm $M_\eps(P,Q)$ as the $n \times m$ matrix featuring entries $a_{i,j}\in \{0,1\}$, $i \in [n]$, $j \in [m]$ where $a_{i,j}=1$ if and only if $(i,j)\in F_\eps(P,Q)$.
%
For a given matrix $M_\eps$ 
we ask whether there exist curves $P,Q$ such that $M_\eps=M_\eps(P,Q)$.

\section{$\exists\Reals$-Completeness for Continuous Curves\label{sec:continuous2DHardness}}

We first show that given a diagram $D_\eps$, the problem of finding two curves in $\Reals^2$ that realize $D_\eps$ is $\exists\Reals$-complete. We then generalize to higher dimensions in Section~\ref{sec:highDimContinuous}.
Containment in $\exists\Reals$ is shown by expressing the problem using real inequalities, see Lemma~\ref{lem:ER-containment-continuous}, Appendix~\ref{app:ER-cont2D}. 

We reduce from the problem of deciding whether a linkage has a planar realization which was shown $\exists\Reals$-hard by Abel et al.~\cite{abel2016needs,abel2016folding}.
A \emph{mechanical linkage} is a mechanism made of rigid bars connected at hinges.
The input is a weighted graph $G=(V(G),E(G), \ell_G)$, where  $\ell_G$ is the weight function, and a function $\Pi\colon W\rightarrow\Reals^2$, where $W\subseteq V(G)$, that represents vertices whose positions are \emph{pinned}.
A \emph{configuration} $C$ of a linkage $\mathcal{L}=(G,\Pi)$ is a straight-line drawing of $G$ where the length of each edge $e\in E(G)$ is $\ell_G(e)$ and the position of each vertex $w\in W$ is $\Pi(w)$. 
The linkage realization problem asks whether a given linkage admits a configuration.
A configuration $C$ is \emph{noncrossing} if $C$ is a plane drawing.
%
Abel et al.~\cite{abel2016needs,abel2016folding} showed that the linkage realization problem remains hard for a series of restrictions on the input linkage $\mathcal{L}$.
We restate a direct consequence of Theorems~2.2.13 and~2.4.6 in \cite{abel2016folding}. 
Although not all conditions in the theorem below are explicitly stated in \cite[Theorem 2.2.13]{abel2016folding}, they can be directly inferred by their construction in \cite[Section 2.7]{abel2016folding}.

\begin{theorem}
	\label{thm:abel}[Simplified from 
	\cite{abel2016folding}, Theorems 2.2.13 and 2.4.6]
	Given a linkage $\mathcal{L}=(G,\Pi)$ and a combinatorial embedding (clockwise circular order of edges around each vertex) $\sigma$ of $G$, deciding whether there exists a planar realization of $\mathcal{L}$ is $\exists\Reals$-hard even if the following constraints are enforced:
	\begin{enumerate}
		\item \label{res:int} $G$ is connected, and the length of every edge is an integer.
		\item \label{res:rigid} A set of edge disjoint subgraphs $H$ of $G$ can be assigned \emph{rigid}, i.e., each angle between consecutive 
		incident edges in $H$ is prescribed from $\{90^\circ, 180^\circ,270^\circ, 360^\circ\}$. 
		Each subgraph $H$ is a tree, and an edge in $E(G)\setminus E(H)$ incident to $H$ must be incident to a leaf of $H$.
		\item \label{res:pin}Only three vertices are pinned ($|\Pi|=3$), all three belong to the same rigid subgraph $H$ (described in constraint (\ref{res:rigid})), and they are not collinear.
		\item For every noncrossing configuration $C$ of $\mathcal{L}$ that satisfies constraints (\ref{res:int}--\ref{res:pin}), holds:
		\begin{enumerate}
			\item $C$ agrees with $\sigma$.
			\item \label{res:angle} Angles that are not prescribed by constraint (\ref{res:rigid}) lie strictly between $60^\circ$ and $240^\circ$.
			\item The minimum distance of a vertex and a nonincident edge is at least a constant $\phi$.
		\end{enumerate}
	\end{enumerate}
\end{theorem}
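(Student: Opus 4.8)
The plan is not to reprove $\exists\Reals$-hardness from scratch, but to derive the statement as a corollary of the two cited results of Abel et al., while verifying that the enumerated constraints (1)--(4) hold for the hard instances their reduction actually produces. First I would fix the starting point: Theorem 2.2.13 of \cite{abel2016folding} already yields $\exists\Reals$-hardness of deciding planar (noncrossing) realizability of a linkage together with a prescribed combinatorial embedding $\sigma$, and Theorem 2.4.6 supplies the machinery for rigidifying subgraphs and prescribing angles. The base hardness and the forced agreement with $\sigma$ (constraint (4a)) thus come essentially for free; the remaining work is to read the structural and metric side conditions off their explicit gadget construction in \cite[Section~2.7]{abel2016folding}, exactly as the surrounding text indicates.

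Next I would go through the constraints one at a time against that construction. For constraint (1), I would check that every bar length used in the gadgets is rational and then clear denominators by a single global scaling, which preserves realizability and makes all lengths integral. For constraint (2), I would identify the rigidified portions of each gadget and argue that they form trees whose prescribed angles take only the axis-aligned values $\{90^\circ,180^\circ,270^\circ,360^\circ\}$, and that every flexible edge attaches to such a tree only at a leaf -- precisely the interface discipline their rigid struts are built to respect. For constraint (3), I would locate the anchoring frame of the reduction and confirm that it pins exactly three non-collinear vertices, all lying inside a single rigid tree, rather than scattering pins across the linkage.

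The more delicate part is constraint (4), which quantifies the geometry of \emph{every} noncrossing configuration, not just one witness. For (4b) -- each non-prescribed angle strictly between $60^\circ$ and $240^\circ$ -- I would inspect the angular freedom the flexible joints permit, using that self-avoidance together with the rigid frame confines each free hinge to a cone bounded away from the degenerate $0^\circ/360^\circ$ extremes, and then verify the admissible range sits inside $(60^\circ,240^\circ)$. For (4c), the constant clearance $\phi$ between a vertex and a non-incident edge, I would use that the gadgets occupy well-separated regions at a fixed scale, so that in any valid noncrossing layout distinct gadget pieces cannot approach closer than a constant determined by the (now fixed, integral) lengths.

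The main obstacle I anticipate is precisely constraint (4): the original theorem statements do not assert these metric guarantees, so the entire burden falls on the construction of \cite[Section~2.7]{abel2016folding}. Establishing the uniform angle window $(60^\circ,240^\circ)$ and the uniform clearance $\phi$ requires tracking the precise dimensions of each gadget and arguing that $\sigma$ together with self-avoidance rules out any configuration violating them -- a case analysis over the gadget types rather than a single clean estimate.
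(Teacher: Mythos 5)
Your proposal follows exactly the paper's own route: the paper offers no independent proof of this theorem, but states it as a direct consequence of Theorems~2.2.13 and~2.4.6 of \cite{abel2016folding}, remarking that the conditions not explicitly stated there ``can be directly inferred by their construction in Section~2.7.'' Your constraint-by-constraint verification plan (integrality by scaling, rigid trees with leaf interfaces, the three-pin frame, and the angle window and clearance bound read off the gadgets) simply makes explicit the inspection of that construction which the paper leaves implicit, so it is correct and essentially identical in approach.
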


We call a vertex \emph{rigid} if it is incident to at least two edges of the same rigid subgraph $H$, and \emph{nonrigid} otherwise.
By (\ref{res:rigid}), every angle incident to a rigid vertex is prescribed while no angle in a nonrigid vertex is prescribed (which by (\ref{res:angle}) can only vary in the interval $(60^\circ,240^\circ)$).
Note that distance geometry is equivalent to linkage realization with $\Pi=\emptyset$. Since (\ref{res:pin}) makes $\Pi$ irrelevant, Theorem~\ref{thm:abel} also implies hardness for distance geometry.

\smallskip\noindent\textbf{Reduction description.}
Given $\mathcal{L}$ and  $\sigma$ satisfying the constraints in \thmref{thm:abel}, we construct an instance $D_\eps$ as follows. 
A full example can be seen in Figure~\ref{fig:full-example-f-s}. 
The idea is to build a free space such that realizing curves trace out the linkage, following the given combinatorial embedding. For this, we first transform $G$ into a tree. The angle constraints in the linkage can also be enforced in the free space using a specific gadget. 

\begin{figure}[ht]
	\centering
	\includegraphics[scale=.6]{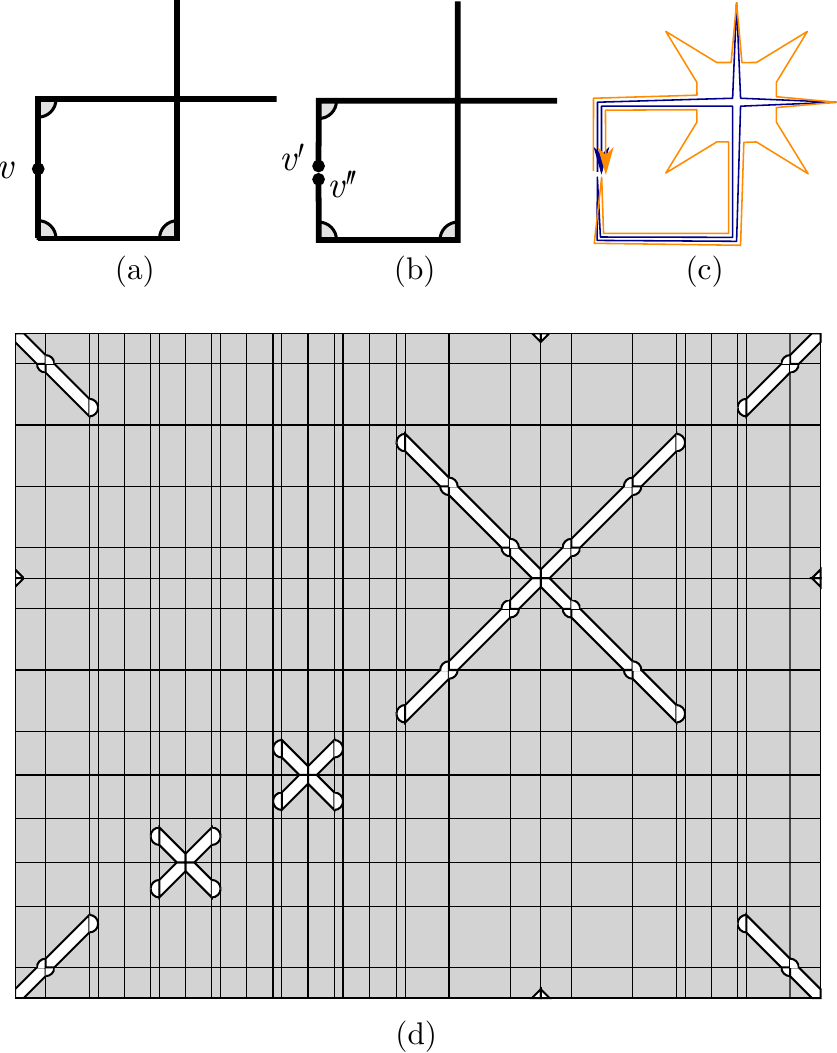}
	\caption{Example of our reduction from linkage realizability to \fsd realizability. (a) An input linkage $\mathcal{L}=(G,\Pi)$ and a subdivision vertex $v$ in a cycle of $G$. Rigid vertices are marked with gray angles. (b) Splitting $v$ transforms $G$ into a tree $T$. (c) The curves $P$ and $Q$ in $\Reals^2$ obtained from $T$. (d) The obtained \fsd.}
	\label{fig:full-example-f-s}
\end{figure}

While there is a cycle in $G$, split one edge in a cycle by placing a new vertex in its midpoint and performing a vertex split, creating two copies of the new vertex, each attached to half of the original edge.
We end up with a tree $T$.
Let $T'$ be the multigraph obtained by doubling each edge of $T$.
Intuitively, $D_\eps$ forces the curves $P$ and $Q$ to roughly trace a planar Eulerian circuit of $T'$ using the combinatorial embedding $\sigma$. 
(Up to a reflection and translation since $D_\eps$ can only specify the relative placement of $P$ and $Q$.)
$Q$ is exactly a planar Eulerian circuit of $T'$ while $P$ 
traces the same circuit but avoids an $\eps$-neighborhood of each nonrigid vertex using our \emph{angle gadget} (described later), which allows these angles to lie freely between $60^\circ$ and $240^\circ$.
Both $P$ and $Q$ trace the ``outline'' $\sigma$ \emph{counterclockwise}. 
W.l.o.g.~assume $\phi\ge 6$, scaling the linkage by a constant factor if necessary.
We chose $\eps=1$ so that edges of $P$ and $Q$ that correspond to an edge $e$ of $G$ are close to each other and far from other edges.
Since every partially full cell determines the relative position of the corresponding pair of edges,
the four edges (two from $P$ and two from $Q$) that correspond to the traversal of $e$ are fixed relative to one another and lie on top of each other. 
They then simulate edge $e$.
The angle gadget guarantees flexibility so that the angle between incident edges can vary accordingly.
We add free space components to make the newly introduced subdivision vertices rigid: Their relative position is locked by \lemref{lem:realpartiallyfull} forming a $180^\circ$ angle, see the four small components on the sides of the \fsd in Figure~\ref{fig:full-example-f-s}d.

\begin{figure}
	\centering
	\includegraphics[width=\textwidth]{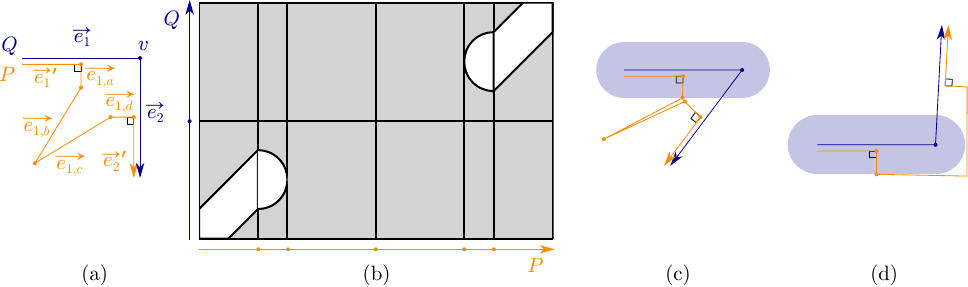}
	\caption{The angle gadget. (a) The $90^\circ$ configuration and (b) its free space diagram. (c) and (d) show the extremal configurations of the gadget with angles $2\cdot\tan^{-1}(1/2)\approx 53.13^\circ$ and $270^\circ$, resp.}
	\label{fig:angle-gadget}
\end{figure}

The \emph{angle gadget}, see \figref{fig:angle-gadget}, is represented by the 12 free space cells shown in \figref{fig:angle-gadget}(b). 
It is located at a small neighborhood of a vertex $v$ of $Q$; the figure only shows the portion of the free space relative to this neighborhood.
Note that $v$ is a degree-2 copy of a vertex $v^*$ of $G$. For clarity, we refer to all the copies of $v^*$ in $Q$ with different labels (by construction, there are $\deg(v^*)$ copies of each $v^*\in V(G)$, except for the starting vertex of the Eulerian circuit, which has an extra copy).
Let $\overrightarrow{e_1}$ and $\overrightarrow{e_2}$ be the two edges of $Q$ incident to $v$, and 
let $\overleftarrow{e_1}$ and $\overleftarrow{e_2}$ be the corresponding copies going in the opposite direction in $Q$, respectively.
Locally, $P$ has two edges $\overrightarrow{e_1}'$ and $\overrightarrow{e_2}'$ that overlap with $\overrightarrow{e_1}$ and $\overrightarrow{e_2}$, respectively.
We enforce the overlap by making all free space cells relative to $\overrightarrow{e_1}'$ (resp., $\overrightarrow{e_2}'$) empty except for the ones relative to $\overrightarrow{e_1}$ and $\overleftarrow{e_1}$ (resp., $\overrightarrow{e_2}$ and $\overleftarrow{e_2}$) which are partially full, containing an upward and downward $45^\circ$ full strip.
The distance between $v$ and the endpoints of $\overrightarrow{e_1}'$ and $\overrightarrow{e_2}'$ closest to $v$ is $2$ by \lemref{lem:realpartiallyfull}.
We place four edges $(\overrightarrow{e_{1,a}}, \overrightarrow{e_{1,b}},\overrightarrow{e_{1,c}},\overrightarrow{e_{1,d}})$ between $\overrightarrow{e_1}'$ and $\overrightarrow{e_2}'$ of lengths 1, 3, 3, and 1 in this order.
Only edges of length 1 have corresponding partially full cells: 
$C_{\overrightarrow{e_{1,a}}, \overrightarrow{e_1}}$  and $C_{\overrightarrow{e_{1,a}}, \overleftarrow{e_1}}$ contain half of a disk of radius 1.

\begin{restatable}{lemma}{lemAngleGadget}
	\label{lem:angle-gadget}
	Given a realization of $P$ and $Q$, assume that $(\overrightarrow{e_{1,a}}, \overrightarrow{e_{1,b}},\overrightarrow{e_{1,c}},\overrightarrow{e_{1,d}})$ lie to the right of $(\overrightarrow{e_1}, \overrightarrow{e_2})$.
	Then, $\overrightarrow{e_1}$ and $\overleftarrow{e_1}$ (resp., $\overrightarrow{e_2}$ and $\overleftarrow{e_2}$) lie exactly on top of each other, and the angle to the right of $(\overrightarrow{e_1}, \overrightarrow{e_2})$ is strictly between $2\cdot\tan^{-1}(1/2)\approx 53.13^\circ$ and $270^\circ$.
\end{restatable}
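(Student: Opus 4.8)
The plan is to fix coordinates with the shared vertex $v$ of $Q$ at the origin and to handle the two claims separately: first pin down the overlaps from the $45^\circ$ strips, then bound the right-hand angle $\theta$ at $v$ between $\overrightarrow{e_1}$ and $\overrightarrow{e_2}$ (the angle of the wedge that contains the chain).

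For the overlap claim I would read off the two partially full cells incident to $\overrightarrow{e_1}'$. By the preliminaries a free space component that is a full $45^\circ$ strip arises exactly when the two underlying segments are parallel; an \emph{upward} strip corresponds to equal orientation and a \emph{downward} strip to opposite orientation. Since $C_{\overrightarrow{e_1}',\overrightarrow{e_1}}$ carries an upward strip and $C_{\overrightarrow{e_1}',\overleftarrow{e_1}}$ a downward strip, while every other cell of $\overrightarrow{e_1}'$ is empty, applying \lemref{lem:realpartiallyfull} to these two cells fixes the relative placement of $\overrightarrow{e_1}'$ with both $\overrightarrow{e_1}$ and $\overleftarrow{e_1}$: each is collinear with $\overrightarrow{e_1}'$, hence with the other, and the strips being \emph{full} (spanning the cell corner to corner) forces complete overlap. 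As $\overrightarrow{e_1}$ and $\overleftarrow{e_1}$ are the two traversals of a single edge of $T'$, they must therefore lie exactly on top of one another; the identical argument at $\overrightarrow{e_2}'$ handles $\overrightarrow{e_2}$ and $\overleftarrow{e_2}$.

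For the angle claim I would first use the half-disk cells. A free space component that is half of a disk of radius $\eps=1$ occurs precisely when the two segments meet perpendicularly with the contact point at a cell corner, so $\overrightarrow{e_{1,a}}\perp\overrightarrow{e_1}$ emanating from the near endpoint $w_1$ of $\overrightarrow{e_1}'$, which by \lemref{lem:realpartiallyfull} sits at distance $2$ from $v$ along $\overrightarrow{e_1}$; symmetrically $\overrightarrow{e_{1,d}}\perp\overrightarrow{e_2}$ emanates from $w_2$ at distance $2$ from $v$. The hypothesis that the chain lies to the right fixes the side of these length-$1$ stubs, so their free endpoints become explicit functions of $\theta$: writing $\alpha=\theta/2$ with the angular bisector on the positive $x$-axis, $c_1=(P,R)$ with $P=2\cos\alpha+\sin\alpha$ and $R=2\sin\alpha-\cos\alpha$, and $c_2=(P,-R)$ its mirror image in the $x$-axis. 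The lower bound then comes from the empty cell $C_{\overrightarrow{e_{1,a}},\overrightarrow{e_2}}$: the stub $\overrightarrow{e_{1,a}}$ must stay at distance $>1$ from $\overrightarrow{e_2}$, and its closest point is the tip $c_1$, at distance $\sqrt5\,\sin\!\big(\theta-\tan^{-1}(1/2)\big)$ from the line through $\overrightarrow{e_2}$, which exceeds $1$ exactly when $\theta>2\tan^{-1}(1/2)$; at equality the tip touches $\overrightarrow{e_2}$ while $c_1=c_2$, giving the extremal configuration of \figref{fig:angle-gadget}(c). For the upper bound I would use that the two length-$3$ edges $\overrightarrow{e_{1,b}},\overrightarrow{e_{1,c}}$ have only empty cells, so the joint $p$ between them and both incident segments must avoid $\mathcal{B}_1(v)$, since $v$ lies on both edges. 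Crucially, the equal constraints $|c_1 p|=|p c_2|=3$ force $p$ onto the perpendicular bisector of $c_1 c_2$, i.e.\ the $x$-axis, collapsing a continuum of placements to the two points $P\pm\sqrt{9-R^2}$, and the ``to the right'' hypothesis selects the outer one $p=(P+\sqrt{9-R^2},0)$. A short computation gives distance $|R|\,(P+\sqrt{9-R^2})/3$ from $v$ to the line $c_1 p$, which equals $1$ exactly at $\theta=270^\circ$ and drops below $1$ for $\theta>270^\circ$, yielding $\theta<270^\circ$ and the configuration of \figref{fig:angle-gadget}(d).

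The main obstacle is the upper bound. The equal link lengths are what make it tractable, reducing admissible joints to a single point, but one still has to verify that the unit disk at $v$—rather than clearance from the edges far from $v$—is the binding obstacle throughout the relevant range, and that the perpendicular foot lies on the segment there, so that tangency at $270^\circ$ is genuinely the threshold of realizability. The overlap claim, by contrast, is essentially bookkeeping once the dictionary between strip/half-disk components and relative placements is in hand.
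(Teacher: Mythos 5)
Your proposal is correct and takes essentially the same route as the paper's proof: the overlap claim follows from \lemref{lem:realpartiallyfull} applied to the two partially full cells of $\overrightarrow{e_1}'$ (and symmetrically $\overrightarrow{e_2}'$), and the angle claim follows because the length-$1$ stubs are rigidly attached, the joint of the two length-$3$ edges is determined up to two points (the inner one being excluded), and the empty cells pin the angle strictly between $2\tan^{-1}(1/2)$ and $270^\circ$. The only difference is presentational: you carry out explicit coordinate computations (which check out, e.g.\ tangency of $\overrightarrow{e_{1,b}}$ to $\mathcal{B}_1(v)$ at exactly $270^\circ$, and $c_1=c_2$ touching $\overrightarrow{e_2}$ at exactly $2\tan^{-1}(1/2)$) where the paper argues qualitatively from the extremal configurations in \figref{fig:angle-gadget}(c) and (d).
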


\begin{proof}
	The fact that $\overrightarrow{e_1}$ and $\overleftarrow{e_1}$ lie exactly on top of each other is a consequence of applying \lemref{lem:realpartiallyfull} to $\overrightarrow{e_1}$ and $\overrightarrow{e_1}'$, and to $\overrightarrow{e_1}'$ and $\overleftarrow{e_1}$. 
	We now focus on the angle constraint.
	Note that by \lemref{lem:realpartiallyfull}, the relative positions of $\overrightarrow{e_1}$ and $\overrightarrow{e_{1,a}}$ (resp., $\overrightarrow{e_2}$ and $\overrightarrow{e_{1,d}}$) is fixed.
	If we fix the positions of $\overrightarrow{e_{1,a}}$ and $\overrightarrow{e_{1,d}}$, then the positions of $\overrightarrow{e_{1,b}}$ and $\overrightarrow{e_{1,c}}$ are completely determined: There are two points whose distance is 3 from the endpoints of $\overrightarrow{e_{1,a}}$ and $\overrightarrow{e_{1,d}}$; one of them causes $\overrightarrow{e_{1,b}}$ and $\overrightarrow{e_{1,c}}$ to intersect with $Q$ which cannot happen since their free space cells are empty.
	If the angle is $2\cdot\tan^{-1}(1/2)$ or smaller, the common endpoint of $\overrightarrow{e_{1,c}}$ and $\overrightarrow{e_{1,d}}$ would lie in the closed $\eps$-neighborhood of $\overrightarrow{e_1}$ and $C_{\overrightarrow{e_1}, \overrightarrow{e_{1,c}}}$ would not be empty (\figref{fig:angle-gadget}(c)), a contradiction.
	If the angle is $270^\circ$ or greater, a portion of $\overrightarrow{e_{1,b}}$ would lie in the closed $\eps$-neighborhood of $\overrightarrow{e_1}$ and $C_{\overrightarrow{e_1}, \overrightarrow{e_{1,b}}}$ would not be empty (\figref{fig:angle-gadget}(d)), a contradiction.
	For all values in between there is a placement for $\overrightarrow{e_{1,b}}$ and $\overrightarrow{e_{1,c}}$ away from $\overrightarrow{e_1}$ and $\overrightarrow{e_2}$, making the section of the free space diagram exactly as required.
\end{proof} 

Using Lemma~\ref{lem:angle-gadget} we can simulate a linkage $\mathcal{L}$ subject to the constraints in Theorem~\ref{thm:abel} using curves given by $D_\eps$, obtaining the following theorem. 

\begin{restatable}{lemma}{thmERcont}
	\label{thm:er-hardness}
	It is $\exists\Reals$-complete to decide if a given free space diagram  is realizable in $\Reals^2$.
\end{restatable}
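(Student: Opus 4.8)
The plan is to prove $\exists\Reals$-completeness by separately establishing membership and hardness. Membership in $\exists\Reals$ is already handled by the referenced Lemma~\ref{lem:ER-containment-continuous}: one writes down real variables for the coordinates of all vertices $p_i,q_j$ of $P$ and $Q$, and encodes the conditions that each cell has the prescribed status (empty, full, or the prescribed partially full ellipse) as a finite system of polynomial equalities and inequalities over the reals. Since realizability is equivalent to the satisfiability of this system, the problem lies in $\exists\Reals$. Thus the substance of the theorem is the hardness direction, and I would devote the proof to presenting the reduction from linkage realizability (Theorem~\ref{thm:abel}), which has already been set up in the surrounding text.

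For the hardness direction, I would first formalize the forward direction of correctness: given a noncrossing configuration $C$ of the linkage $\mathcal{L}$ that agrees with $\sigma$, I construct curves $P$ and $Q$ that realize the diagram $D_\eps$. The curve $Q$ traces a planar Eulerian circuit of the doubled tree $T'$ counterclockwise, and $P$ traces the same circuit while detouring around each nonrigid vertex via the angle gadget. Because the constraints in Theorem~\ref{thm:abel} guarantee that all nonprescribed angles lie strictly between $60^\circ$ and $240^\circ$, each angle gadget can be placed with its flexible bars $\overrightarrow{e_{1,b}},\overrightarrow{e_{1,c}}$ pushed away from $Q$, so that exactly the prescribed partially full and empty cells appear (by Lemma~\ref{lem:angle-gadget}, whose admissible range $(53.13^\circ,270^\circ)$ strictly contains $(60^\circ,240^\circ)$). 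The condition that the minimum distance between a vertex and a nonincident edge is at least $\phi\ge 6=6\eps$ ensures that no spurious free space components are created between bars that correspond to distinct edges of $G$, so the global diagram matches $D_\eps$ exactly.

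For the reverse direction I would argue that any realization of $D_\eps$ induces a valid noncrossing configuration of $\mathcal{L}$. Here Lemma~\ref{lem:realpartiallyfull} is the workhorse: each partially full cell fixes the relative placement of its two associated segments up to reflection, forcing the four bars corresponding to a single edge $e$ of $G$ to coincide and thereby recovering an edge of the linkage with the correct integer length. Lemma~\ref{lem:angle-gadget} then certifies that the recovered angle at each nonrigid vertex lies in the permitted interval, while the rigid subgraphs are reconstructed from the rigidly locked subdivision gadgets (again via Lemma~\ref{lem:realpartiallyfull} giving the $180^\circ$ locks). Finally, because both curves trace the outline in the order dictated by $\sigma$, the reconstructed drawing respects the combinatorial embedding and is noncrossing, hence a genuine configuration of $\mathcal{L}$. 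Combining both directions, $D_\eps$ is realizable iff $\mathcal{L}$ admits a configuration.

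The main obstacle I anticipate is the reverse direction, specifically ruling out ``unintended'' realizations: I must show that \emph{every} pair of curves realizing $D_\eps$ genuinely traces $T'$ in the prescribed way, rather than exploiting some degeneracy or global reflection to fake the cell pattern. This requires carefully using the empty-cell constraints and the distance bound $\phi\ge 6$ to show that non-adjacent portions of the curves stay far apart (so no accidental partial cells form) and that the gadgets cannot be satisfied by a configuration in which bars cross $Q$ or collapse. The reflection ambiguity inherent in Lemma~\ref{lem:realpartiallyfull} must be shown to be globally consistent (a single reflection of the whole picture), rather than locally inconsistent, and the agreement with $\sigma$ must be derived from the counterclockwise tracing together with the noncrossing property. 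Managing this case analysis cleanly, while keeping the bound $\phi\ge 6$ and the angle interval $(60^\circ,240^\circ)\subset(53.13^\circ,270^\circ)$ doing exactly the work needed, is where the real care lies.
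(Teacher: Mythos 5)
Your proposal follows essentially the same route as the paper's own proof: membership via Lemma~\ref{lem:ER-containment-continuous}, hardness via the already-described reduction from linkage realizability, the forward direction from Theorem~\ref{thm:abel}(4) together with Lemma~\ref{lem:angle-gadget}, and the reverse direction from Lemma~\ref{lem:realpartiallyfull} (with transitivity), Lemma~\ref{lem:angle-gadget}, and Lemma~\ref{lem:rigid}, with the promise in Theorem~\ref{thm:abel}(4) disposing of the angle-interval mismatch you flag. The subtleties you single out (reflection consistency, agreement with $\sigma$, ruling out unintended realizations) are exactly the ones the paper's proof addresses, so your plan is sound and matches the published argument.
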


\begin{proof}
	The described reduction produces a \fsd $D_\eps$ with of size $O(|E(G)|^2)$: $Q$ has length $2|E(G)|$ and each edge in $|E(G)|$ generates up to $10$ segments in $P$, depending on whether the endpoints are rigid or not.
	The runtime is linear in the size of $D_\eps$: each row corresponding to an edge of $P$ has precisely two partially full cells. All other cells are empty.
	
	Given a positive instance of linkage realization, \thmref{thm:abel}(4) and Lemma~\ref{lem:angle-gadget}  guarantee that we can find a placement of $P$ and $Q$ realizing $D_\eps$ as described in the reduction.
	The other direction is a little more subtle. 
	$D_\eps$ forces $Q$ to trace $\sigma$ exactly: using Lemma~\ref{lem:realpartiallyfull} with transitivity constraints the two edges of $Q$ corresponding to an edge in $E(G)$ to lie exactly on top of each other, while the angle gadgets force the circular order around each vertex.
	By Lemmas~\ref{lem:angle-gadget} and \ref{lem:rigid}, $Q$ traces a noncrossing configuration of $\mathcal{L}$ exactly.
	If there is a valid placement of $P$ and $Q$ one can find a noncrossing configuration of $\mathcal{L}$ obtained by the image of $Q$.
	If such a configuration does not satisfy \thmref{thm:abel}(4), that would contradict \thmref{thm:abel}.
	Thus the promise in \thmref{thm:abel}(4) must also be fulfilled by the \frechet realization instance and the angles in each angle gadget would indeed be between $60^\circ$ and $240^\circ$.
\end{proof}

\subsection{Higher Dimensions}
\label{sec:highDimContinuous}

In order to show that free space realization is $\exists\Reals$-hard in higher dimensions, we show that the realizability of linkages and, thus, distance geometry are also $\exists\Reals$-hard. 
Here, the linkage realization is not required to be injective since we are in $\mathbb{R}^{>2}$, but the reduction will force crossings to only happen between predictable pairs of edges. 
This will be important in our reduction to free space realization since crossings between the curves appear in the free space diagram.
We remark that, although all the ingredients of this proof were already known, the claim does not appear in the literature to the best of the authors' knowledge.

\begin{theorem}
	\label{thm:3D} Linkage Realization and Distance Geometry are $\exists\Reals$-hard in  $\Reals^{\ge 2}$.
\end{theorem}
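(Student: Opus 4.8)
The plan is to leverage the planar case as a black box and add a scaffold that prevents any realization from escaping a fixed plane. For $d=2$ the statement is immediate: as already noted after \thmref{thm:abel}, setting $\Pi=\emptyset$ turns linkage realization into distance geometry, and \thmref{thm:abel} gives $\exists\Reals$-hardness of the planar (noncrossing) problem. So the real work is $d\ge 3$ (which is exactly the range left open by Schaefer), where I would reduce \emph{from} the planar $\exists\Reals$-hard linkage instances of \thmref{thm:abel} \emph{to} realizability in $\Reals^d$. Given such a planar instance $\mathcal{L}=(G,\Pi)$ with its three non-collinear pinned vertices, I keep $G$ together with its edge lengths and augment it with auxiliary pinned vertices and bars forming a \emph{coplanarity scaffold}. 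The three original pins already span a plane $\pi$ in every $\Reals^d$, and the scaffold's sole purpose is to force the images of all vertices of $G$ to lie in $\pi$ while leaving the in-plane degrees of freedom of $\mathcal{L}$ (which carry the $\exists\Reals$-hardness) untouched.

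For the scaffold I would use the degenerate-simplex idea underlying Saxe's reduction~\cite{saxe1979embeddability}: a point constrained by a tight triangle inequality (two prescribed distances to two anchors that sum exactly to the anchor separation) is forced onto the line through those anchors, and analogously a tight Cayley--Menger configuration forces a point into the affine hull of its anchors. Placing pinned anchors along each of the $d-2$ extra coordinate axes and attaching every vertex of $G$ to them through such tight gadgets should pin each extra coordinate to $0$, collapsing the whole realization into $\pi\cong\Reals^2$. Crucially, the gadgets are intended to constrain only the off-plane coordinates, so that within $\pi$ each vertex keeps the same freedom it had in the planar instance.

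Correctness then splits in the usual two directions. For the forward direction, a planar noncrossing realization of $\mathcal{L}$ is placed into $\pi\subset\Reals^d$, and the scaffold is satisfied because all its tight gadgets are themselves coplanar; hence the augmented instance is realizable in $\Reals^d$. For the converse, any realization in $\Reals^d$ is forced by the scaffold to lie in $\pi$, and restricting to $\pi$ yields a planar realization of $\mathcal{L}$. Since injectivity is unavailable in $\Reals^{>2}$, this realization need not be noncrossing, but here I would invoke the promises of \thmref{thm:abel}(4): the minimum vertex-to-nonincident-edge distance $\phi$ and the restricted angle ranges guarantee that, once the drawing is coplanar, distinct edges of $G$ cannot coincide or cross except for the intended overlaps. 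This gives exactly the ``crossings only between predictable pairs'' behaviour that the later free-space reduction relies on, and it lets us read off a genuine planar realization, completing the equivalence and hence the $\exists\Reals$-hardness of both linkage realization and, via $\Pi=\emptyset$, distance geometry in $\Reals^d$.

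The main obstacle is the scaffold itself. Fixed-length bars to fixed anchors generically confine a vertex to a curved sphere rather than to a flat plane, and a naive family of such constraints either leaves off-plane freedom or, if made tight, over-determines the vertex and destroys the in-plane freedom that the reduction must preserve. The delicate point is therefore to implement the coplanarity constraint using only tight (degenerate) gadgets that kill precisely the $d-2$ off-plane coordinates of each vertex while imposing no relation among its two in-plane coordinates, and to verify that no spurious higher-dimensional realization of the augmented instance survives. A secondary technical point is to make the crossing-control argument quantitative, i.e., to check that the constant $\phi$ from \thmref{thm:abel} indeed rules out all unintended coincidences in the forced-coplanar drawing.
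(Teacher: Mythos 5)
There is a genuine gap, and it sits exactly where you placed your ``main obstacle'': the coplanarity scaffold. Any scaffold that ties a vertex $x$ of $G$ by rigid bars to anchors whose positions are fixed (whether by pins or by a globally rigid frame) constrains $x$ to an intersection of spheres centered at those anchors. Such an intersection is bounded and curved, so it can never contain, let alone equal, a flat two-dimensional patch of the plane $\pi$: there is no family of exact-distance constraints to \emph{fixed} anchors that kills precisely the $d-2$ off-plane coordinates while leaving both in-plane coordinates free. Your two candidate mechanisms fail for exactly this reason: a tight triangle inequality does not merely flatten, it places the vertex at one \emph{specific} point of the anchor segment, and a tight Cayley--Menger configuration requires fixing the distances to three or more affinely independent anchors in $\pi$, which determines the vertex's in-plane position up to at most a reflection. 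So the approach as sketched does not have a missing technical verification --- its core gadget provably cannot exist in the form you describe. A secondary but real problem is your use of pinned anchors ``along each of the $d-2$ extra coordinate axes'': distance geometry has $\Pi=\emptyset$, and unlike in \thmref{thm:abel}(3), where the three pins lie in one rigid subgraph and merely fix a congruence, your pins do genuine work, so the final step ``via $\Pi=\emptyset$'' does not go through without additionally building a globally rigid frame to replace them. Your appeal to \thmref{thm:abel}(4) to control crossings is also circular: (4) is a promise \emph{about} noncrossing configurations, not a tool for establishing that a flattened drawing is noncrossing.

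The paper escapes all of this by making the flattening structure move \emph{with} the linkage rather than anchoring it globally: every edge $uv$ of $G$ is replaced by Saxe's dimension gadget, a copy of $K_4$ on doubled vertices $u_1,u_2,v_1,v_2$. Since the gadget is globally rigid in $\Reals^2$, every realization of it in $\Reals^{>2}$ is congruent to the planar one; each gadget therefore realizes rigidly and forces $u_1u_2$ and $v_1v_2$ to be parallel, and by connectivity of $G$ all segments $v_1v_2$ are parallel throughout the construction. Hence all vertices $v_1$ lie in a common hyperplane perpendicular to that direction, where they reproduce the original edge lengths, giving: $\mathcal{L}'$ is realizable in $\Reals^d$ iff $\mathcal{L}$ is realizable in $\Reals^{d-1}$. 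Recursing from the planar base case of \thmref{thm:abel} yields $\exists\Reals$-hardness in every dimension, with no pins and with all in-plane degrees of freedom intact, because the gadget's auxiliary vertices float with the configuration --- precisely the property your fixed-anchor scaffold cannot provide. If you want to salvage your write-up, the fix is to abandon the global scaffold and adopt this edge-local, dimension-reducing gadget.
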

\begin{proof}
	Recall that linkage realization with no pinned vertices is equivalent to distance geometry. 
	The main ingredient of this proof is the \emph{dimension gadget} shown in Figure~\ref{fig:dim-gadget} that appears in~\cite{saxe1979embeddability}.
	The gadget is isomorphic to $K_4$ which is \emph{globally rigid} in $\mathbb{R}^2$~\cite{connelly2005generic}, meaning that there is a unique embedding of the gadget in $\mathbb{R}^2$, and every realization of the gadget in $\mathbb{R}^{>2}$ is congruent with the planar realization.
	Given a linkage $\mathcal{L}$ satisfying the constraints of Theorem~\ref{thm:abel}, replace every edge of $G$ by a copy of the dimension gadget. Note that every vertex $v$ is now represented by two vertices $v_1$ and $v_2$. 
	We call the resulting linkage $\mathcal{L}'$.
	The gadgets force all vertices $v_1$ for all $v\in V(G)$ to be in the same $(k-1)$-hyperplane, perpendicular to the edges $v_1 v_2$. 
	Thus, $\mathcal{L}'$ is realizable in $\mathbb{R}^d$ iff $\mathcal{L}$ is realizable in $\mathbb{R}^{(d-1)}$.
\end{proof}

\begin{figure}
	\centering
	\includegraphics[scale=0.39]{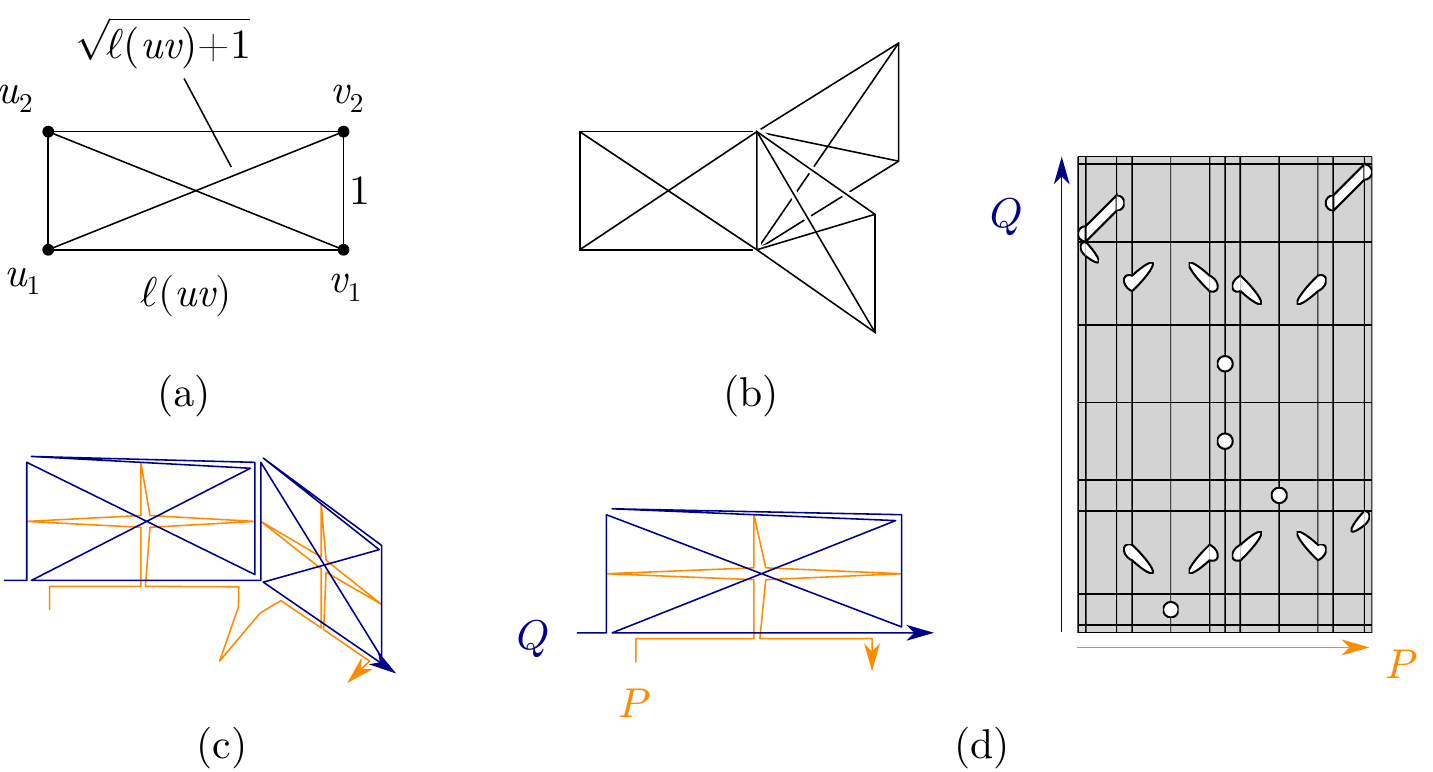}
	\caption{(a) The dimension gadget corresponding to an edge $uv$ with length $\ell(uv)$. (b) Three gadgets in $\mathbb{R}^3$ corresponding to a degree-3 vertex. The gadget forces all vertices to be in one of two planes.
		(c) The realization of an angle gadget after applying the dimension gadget.
		(d) The free space and its realization (perturbed for clarity). 
	}
	\label{fig:dim-gadget}
\end{figure}

\begin{restatable}{theorem}{thmERdim}
	\label{thm:er-hardness-dim}
	It is $\exists\Reals$-complete to decide if a given free space diagram  is realizable in $\mathbb{R}^{\ge 2}$.
\end{restatable}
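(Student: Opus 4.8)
The plan is to prove $\exists\Reals$-completeness for each fixed $d\ge 2$ by handling containment and hardness separately, reusing the planar construction of \thmref{thm:er-hardness} together with the dimension gadget of \thmref{thm:3D}. Containment generalizes the argument behind \lemref{lem:ER-containment-continuous} verbatim: I introduce $d$ real variables per vertex coordinate of $P$ and $Q$ and encode, for each cell $C_{i,j}$, its prescribed status (empty, full, or partially full with a given cropped-ellipse boundary) as a constant-size system of polynomial equalities and inequalities; since $d$ is constant the whole existential formula over the reals has polynomial size, so the problem lies in $\exists\Reals$.

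For hardness I reduce from linkage realization in $\Reals^d$, which is $\exists\Reals$-hard by \thmref{thm:3D}. Starting from a base linkage $\mathcal{L}$ satisfying the constraints of \thmref{thm:abel}, I apply the dimension gadget of \thmref{thm:3D} to obtain a linkage $\mathcal{L}'$ whose $\Reals^d$-realizability is hard, and then adapt the free space construction of \thmref{thm:er-hardness} to it: $P$ and $Q$ trace an Eulerian circuit of the doubled tree, partially full cells fix relative placements via \lemref{lem:realpartiallyfull}, full $\pm 45^\circ$ diagonal strips force paired edges to coincide, the angle gadget of \lemref{lem:angle-gadget} supplies angular freedom at nonrigid vertices, and each dimension gadget is installed as a rigid $K_4$ component.

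The reason the local analysis survives in $\Reals^d$ is that the squared distance of two arc-length-parameterized segments is a quadratic form in the two parameters whose cross term is $-2\cos\theta\,rt$; its eigendirections are $(1,1)$ and $(1,-1)$ regardless of $d$, so every free space cell remains a cropped ellipse with axes at $\pm 45^\circ$ and \lemref{lem:realpartiallyfull} applies unchanged. Furthermore, each dimension gadget is a globally rigid $K_4$ and is therefore congruent in any realization to its unique planar embedding; this pins the local frame at each vertex into a $2$-plane, so the angle gadget behaves exactly as in the planar analysis (see \figref{fig:dim-gadget}(c) and (d)).

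The main obstacle, and the one place where the planar proof does not transfer directly, is that realizations in $\Reals^{>2}$ need not be noncrossing, so edge pairs that \thmref{thm:abel}(4) kept apart in the plane may now come within $\eps$ and create unplanned partially full cells. As observed before \thmref{thm:3D}, the dimension-gadget reduction forces crossings to occur only between a predictable set of edge pairs; I would enumerate these pairs and prescribe the corresponding partially full cells directly in the target diagram $D_\eps$. With these cells fixed, the equivalence of \thmref{thm:er-hardness} carries over: a realization of $D_\eps$ in $\Reals^d$ yields a configuration of $\mathcal{L}'$, and hence of $\mathcal{L}$ in $\Reals^{d-1}$, while conversely any realization of $\mathcal{L}'$ traces out the prescribed diagram; this completes the reduction.
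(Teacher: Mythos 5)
Your overall strategy---containment via a polynomial-size existential formula over the reals, and hardness by composing the planar reduction of \thmref{thm:er-hardness} with the dimension gadget of \thmref{thm:3D}---is exactly the route the paper takes, and your supporting observations are sound: free space cells remain cropped ellipses with axes at $\pm 45^\circ$ in any dimension, and predictable crossings/coincidences must be prescribed as cells of $D_\eps$. The genuine gap is hidden in the single sentence ``each dimension gadget is installed as a rigid $K_4$ component.'' A free space diagram can only impose distance constraints between two polygonal \emph{paths}, and $K_4$ is not a path: the central difficulty of this theorem is to make $P$ and $Q$ \emph{trace} each gadget in such a way that the prescribed cells force the traced segments to fold into the rigid $K_4$ shape. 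The paper does this explicitly: each edge of $Q$ becomes a path of length $7$ (a traversal of the six gadget edges with one repetition), each edge of $P$ overlapping it becomes a path of length $10$, the cells between these two paths force $Q$'s path to embed as the gadget, and cells to non-neighboring gadgets are empty. Your proposal never specifies such a traversal, so the step from $\mathcal{L}'$ to a concrete diagram $D_\eps$ is not actually defined.

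Two further claims are asserted where the paper has to do real work. First, global rigidity of each $K_4$ does not by itself make ``the angle gadget behave exactly as in the planar analysis'': in $\Reals^3$ the nonrigid angle of $\mathcal{L}$ is represented by the \emph{dihedral} angle between the planes of two neighboring gadgets, and \lemref{lem:angle-gadget} transfers only because the construction adds partially full cells forcing the two length-$\eps$ edges $\overrightarrow{e_{1,a}}$ and $\overrightarrow{e_{1,d}}$ of $P$ to be perpendicular to the gadget planes, while the length-$3\eps$ edges remain far away; without this forcing, the angle-gadget chain can rotate out of plane and the bounds of \lemref{lem:angle-gadget} no longer hold. Second, for $d>3$ the reduction is not a one-shot application of your argument in fixed $d$: the paper recursively replaces each edge of $Q$ by the gadget construction and splits an overlapping edge of $P$ at its midpoint, inserting a length-$8$ path forming a cross through the gadget. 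Both are constructive steps your proof would need to supply before the equivalence ``realization of $D_\eps$ in $\Reals^d$ iff realization of $\mathcal{L}$ in $\Reals^{d-1}$'' can be claimed.
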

\begin{proof}
	We  adapt the dimension gadget to the \fsd realizability problem.
	We first argue for $\mathbb{R}^3$.
	Figure~\ref{fig:dim-gadget}(d) shows the adapted gadget and its free space.
	We use the same reduction as in Lemma~\ref{thm:er-hardness}, but replacing the edges of $Q$ and the edges of $P$ that overlap edges of $Q$ with the modified dimension gadget as follows. 
	Each edge of $Q$ is replaced by a path of length $7$, and each edge of $P$ that lies in the interior of an edge of $Q$ is replaced by a path of length $10$. The free space diagram between the two paths force the path of $Q$ to be embedded as the dimension gadget. 
	Two-dimension gadgets are neighbors if their corresponding edges share an endpoint.
	The cells between these paths and other non-neighbor dimension gadgets are empty. 
	The two length-$\eps$ edges of $P$ in the angle gadget ($\overrightarrow{e_{1,a}}$ and $\overrightarrow{e_{1,d}}$) induce partially full cells in the free space of dimension gadget (first and last collumns in the free space diagram of Figure~\ref{fig:dim-gadget}(d)), forcing these edges to be perpendicular to the plane of the dimension gadget. (See Figure~\ref{fig:dim-gadget}(c))
	The cells of the  two length-$3\eps$ edges of $P$ in the angle gadget ($\overrightarrow{e_{1,b}}$ and $\overrightarrow{e_{1,c}}$) remain empty and thus they must be realized far from the dimension gadget.
	Note that the $\eps$-neighborhood of $v_1$ and $v_2$ does not intersect $P$, and that the $\eps$-neighborhood of the edges of $P$ in the angle gadget still does not intersect $Q$ leaving the dihedral angle between the planes of the neighbor dimension gadgets to vary as in Lemma~\ref{lem:angle-gadget}.
	Thus the embedding of $P$ and $Q$ corresponds to a realization of $\mathcal{L}'$. 
	
	For dimension $d>3$, we recursively apply the dimension gadget construction in the following way. 
	Note that in the $d-1$-dimensional construction each edge of $Q$ overlaps with at least one edge of $P$ (possibly degenerate to a vertex).
	We recursively replace each edge of $Q$ with the dimension gadget construction as normal.
	We split one edge of $P$ that overlaps with the edge of $Q$ at its midpoint and insert the length-8 path forming a cross that contains the midpoints of the edges in the dimension gadget of $Q$ as in Figure~\ref{fig:dim-gadget}(d).
\end{proof}


\section{NP-Hardness and Algorithmic Results for Continuous Curves in $\Reals^1$}
\label{sec:continuous1D}

We briefly consider the realizability problem for curves in $\Reals^1$.  
In this case, the curves have less space to be placed in and hence the free space diagram has limited ``configurations''. 
Cells are still empty, full or partially full cells, but now free space ellipses degenerate to slabs, and the white space is bounded by  parallel line segments  oriented at  $+$ or $-45^\circ$,
see~\cite{Buchin, Rote}. 
Here, we present only sketches. For details, we refer to~\cite{eurocg22}. 
We note that for curves in 1D the problem is weakly NP-hard by a reduction from the \textsc{Partition} problem. 
The reduction is similar to the hardness of ruler-folding. 

\begin{restatable}{theorem}{onedhardness}
	\label{thm:1D-continuous-hardness}
	Realizability of continuous curves in $\Reals^1$ is weakly NP-complete.   
\end{restatable}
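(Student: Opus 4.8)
The plan is to establish the two halves of weak NP-completeness separately: membership in NP, and NP-hardness by a reduction from \textsc{Partition} that mirrors the classical hardness of ruler folding.

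For membership, I would exploit that in $\Reals^1$ each curve becomes rigid once its folding pattern is fixed. Concretely, fixing for every segment of $P$ and of $Q$ whether it points left or right determines both curves up to translation, since the segment lengths $\ell_i^P,\ell_j^Q$ are part of the input. Writing $p_i = p_0 + \sum_{k\le i}\pm\ell^P_k$ and analogously for $Q$, every coordinate difference $p_i-q_j$ is an affine function of the single relative-translation parameter $\delta = p_0-q_0$. Each prescribed cell state then becomes a constraint on $\delta$: a partially full cell fixes $\delta$ to one value (via \lemref{lem:realpartiallyfull}, which supplies the exact relative offset), while full and empty cells impose closed/open interval constraints. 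A realization consistent with the guessed folding pattern exists iff the intersection of these one-dimensional constraints is nonempty and reproduces $D_\eps$ exactly, which is checkable in polynomial time. Hence the $O(n+m)$ orientation bits form a polynomial certificate, placing the problem in NP.

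For hardness I would reduce from \textsc{Partition}. Given integers $a_1,\dots,a_n$ with $\sum_i a_i = 2S$, I would let the $n$ consecutive segments of $P$ have lengths $a_1,\dots,a_n$, so that a folding pattern of $P$ is exactly a choice of signs with endpoint $p_n - p_0 = \sum_i \pm a_i$. The diagram $D_\eps$ is then built so that any realization must make the first and last vertices of $P$ coincide: using partially full cells I pin both $p_0$ and $p_n$ to the same location relative to a reference feature of the curve $Q$, forcing $p_0 = p_n$, i.e. $\sum_i \pm a_i = 0$. This holds iff the $+$-segments and the $-$-segments each sum to $S$, i.e. iff the \textsc{Partition} instance is a yes-instance. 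The reference curve $Q$ together with empty cells is designed to confine the fold to a bounded interval and to leave the interior segments unconstrained beyond this endpoint condition, so that every balanced sign pattern is realizable and no unbalanced one is. As the $a_i$ are encoded in binary, this gives weak NP-hardness, consistent with the pseudo-polynomial and FPT algorithms established below; together with NP-membership it yields weak NP-completeness.

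The main obstacle I expect lies in the correctness of this gadget in the \emph{continuous} setting. Unlike the discrete case, the free space records $\Vert P(r)-Q(t)\Vert\le\eps$ for \emph{all} parameters, so the interior of a folded ruler sweeps an entire interval and can create unintended white regions against $Q$. The delicate part is therefore to choose $Q$ and the pattern of empty/full/partial cells so that, for every balanced folding, all interior cells take exactly their prescribed states regardless of which balanced pattern is used, while any unbalanced folding is forced to violate the endpoint-pinning cells. Proving this equivalence in both directions---exhibiting a realizing pair $P,Q$ for each balanced partition, and extracting a partition from any realization via \lemref{lem:realpartiallyfull}---is where the bulk of the argument sits; the remaining bookkeeping (segment-length consistency and the single translational degree of freedom) is routine. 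Full details of such a construction appear in~\cite{eurocg22}.
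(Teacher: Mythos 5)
Your proposal is correct and takes essentially the same route as the paper: NP membership via the polynomial certificate of segment-orientation bits (the paper uses exactly this bit sequence, and your single-translation-parameter $\delta$ argument fleshes out the verification step), and weak NP-hardness by the same \textsc{Partition} reduction in the style of ruler folding, with partially full cells pinning the fold's endpoints and empty cells constraining the interior. The one ingredient you flag but leave open---ensuring every balanced folding keeps all interior cells empty---is resolved in the paper by padding $P$ with two long buffer segments of length $1+S$ (with $Q$ a single unit segment and $\eps=1$) pinned so that $q_0=p_0=p_{n+2}$, which forces $p_1=p_{n+1}$ while carrying the entire folded portion outside the $\eps$-neighborhood of $Q$; your direct pinning of $p_0=p_n$ without such buffers would place the first and last interior segments too close to $Q$ for their cells to be empty.
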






Next, we sketch an FPT-algorithm for continuous curves in $\Reals^1$. 
%
Inspired by computational origami~\cite{Demaine}, we observe 
that a given diagram $D_\eps$ is realizable iff it can be folded at the grid lines so that the white space is aligned (overlapping only with other white space) into a single convex component.
In~\cite{eurocg22}, we developed an algorithm to enumerate and check the different foldings, inspired by the algorithm for \emph{simple-foldability in $\Reals^1$}~\cite{origami}.
Our algorithm runs in exponential time $O(mn2^k)$, where $k$ is the 
total number of (vertical and horizontal) grid lines of $D_\eps$ that do not intersect the white space 
%
(completely gray or completely white). 

\begin{restatable}{theorem}{algfptcontinuous}
	\label{thm:alg-fpt-continuous1d}
	Given an $m\times n$ diagram $D_\eps$, in $O(mn2^k)$ time one can find curves $P$ and $Q$ in $\Reals^1$, if they exist, such that $D_\eps=D_\eps(P,Q)$.
\end{restatable}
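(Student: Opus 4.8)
The plan is to convert the realizability question into a combinatorial folding problem and then brute-force only the genuinely undetermined fold choices. The starting point is the folding characterization sketched above: placing $P$ and $Q$ in $\Reals^1$ amounts to choosing, at every vertex, whether the curve continues straight or folds back, while the segment lengths are already dictated by the grid spacing of $D_\eps$, which is part of the input. Reflecting the portion of a curve after a folding vertex corresponds to reflecting the columns (for a fold of $P$) or the rows (for a fold of $Q$) of the diagram across the associated grid line. Thus $D_\eps$ is realizable if and only if the grid can be folded along a subset of its grid lines so that all white cells stack onto one another into a single convex slab --- the free space of two straight, identically oriented segments --- while no white region ever lands on a gray one. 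I would first make this equivalence precise: every realization induces such a folding, and conversely any valid folding can be read off as an orientation for each segment, from which a realizing pair $(P,Q)$ is reconstructed (up to the reflection and translation freedom noted after \lemref{lem:realpartiallyfull}).

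Next I would show that most fold choices are forced. Consider a grid line $g$ whose interior meets the white space, so that the $\pm 45^\circ$ slab crosses $g$. The orientation of the slab in the two cells flanking $g$ encodes the relative orientation of the two incident segments of the folding curve: if the curve is straight at the corresponding vertex the slab keeps its $\pm 45^\circ$ sign, and if it folds the sign flips. Hence the fold decision at every grid line that the white space crosses is determined by comparing the slab orientations on its two sides. The only grid lines whose fold remains free are precisely those the white space does not cross, namely the completely gray or completely white ones, of which there are $k$ by definition. This observation is the crux that replaces the naive $2^{n+m}$ search over all per-vertex orientations by a $2^k$ search.

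The algorithm then enumerates the $2^k$ assignments of the free grid lines. For each assignment it propagates the forced folds across all white-space-crossing grid lines to obtain a complete orientation of every segment of $P$ and of $Q$. From these orientations and the known segment lengths it reconstructs candidate curves, fixes the single remaining relative offset of $P$ against $Q$ from the slab position of one partially full cell, and recomputes the free space diagram. Comparing this diagram against $D_\eps$ cell by cell costs $O(mn)$, since there are $mn$ cells and the orientation and position of each slab can be checked in $O(1)$. If every cell matches we output the reconstructed $P$ and $Q$; otherwise we discard the assignment. Over all $2^k$ assignments the total running time is $O(mn2^k)$, and correctness is immediate from the folding characterization: a realization exists if and only if some assignment survives the verification.

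The main obstacle is the first step --- pinning down the folding characterization rigorously and, in particular, proving that white-space-crossing grid lines leave no freedom. One must verify that the slab orientation in a cell is exactly the parity of the relative orientation of its two segments, which follows from \lemref{lem:realpartiallyfull} together with the degeneration of the free space ellipse to a $\pm 45^\circ$ slab in $\Reals^1$, and that continuity of the white space across a grid line forces the two flanking orientations to agree. A secondary subtlety is that propagating the forced folds together with a single offset choice must reproduce not only the orientation but also the exact position of every slab; handling the translational degree of freedom cleanly, and confirming that a single partially full cell pins it down, is where the bookkeeping concentrates. Both points are carried out in detail in~\cite{eurocg22}.
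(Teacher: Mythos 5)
Your proposal is correct and follows the paper's skeleton up to the crucial last step: like the paper, you use the folding characterization (\corref{cor:foldability}), argue via \obsref{obs:folding} that the fold/straight decision is forced at every grid line crossed by the boundary of the white space, and enumerate the $2^k$ assignments of the remaining (completely gray or completely white) lines. Where you genuinely diverge is in how a fixed assignment is verified. The paper stays inside the folded-state formalism and invokes the simple-foldability machinery of~\cite{origami}: it applies a linear sequence of safe end folds and crimps, glues the at most three layers overlapped by each operation, and checks white-space alignment only within those layers, at $O(m)$ cost per operation, precisely to beat the naive $O((mn)^2)$ pairwise-overlap check it mentions. You instead exit the folding picture once the assignment is fixed: you reconstruct the two candidate curves from the propagated orientations and the known segment lengths, pin the single remaining relative offset (and relative reflection) of $Q$ against $P$ from one partially full cell (the 1D analogue of \lemref{lem:realpartiallyfull}, i.e.\ \lemref{lem:1Dpartfull}), recompute the diagram, and compare it cell by cell in $O(mn)$. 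This check is sound and complete by definition --- the recomputed diagram equals $D_\eps$ iff the candidate curves realize it, and any true realization is reproduced under its own forced-plus-enumerated assignment --- so your route yields the same $O(mn2^k)$ bound with a more elementary verification, essentially the simplification that a reviewer comment embedded in the paper's source also suggests.

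One gap to close: your offset-anchoring step presupposes that a partially full cell exists. If none does, then (since by closedness of the free space a full cell cannot be adjacent to an empty cell) the diagram is entirely white or entirely gray, and the paper treats exactly these two degenerate cases separately at the end of its proof: an all-empty diagram is always realizable by placing the curves far apart, while an all-full diagram requires checking that one curve can be placed inside the intersection of the $\eps$-neighborhoods of the other's segments, an $O(1)$ condition on the two spans once the fold assignments are fixed. Add this case distinction and your argument is complete.
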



We now describe an algorithm whose input is a diagram $D_\eps$ where the dimensions of each cell are integers upper-bounded by $W$, and that outputs a pair of curves $P, Q$ in $\Reals^1$ such that $D_\eps=D_\eps(P,Q)$ if they exist.
Otherwise, it returns \texttt{false}. 
We use the 
limited placement options of curves in $\Reals^1$;  
the main technical ingredient is the use of dynamic programming to decide the placement of the portions of $P$ and $Q$ for which we have no explicit information.

We use the following placement graph\footnote{This is a variation of the placement graph used for the same problem for continuous curves in $\Reals^2$~\cite{phdleo}.} $G$: 
The vertices $V(G)$ are the set of segments in the bipartite graph between segments of $P$ and $Q$ where an edge $(v_i^P,v_j^Q)$ encodes that the cell $C_{i,j}$ is partially full. 
$G$ can be computed in $\mathcal{O}(mn)$ time.
By 
Lemma~3.1 in~\cite{eurocg22}, if $G$ has a single component, we can either compute $P$ and $Q$, or report that no such curves exist in $\Reals^1$ in $\mathcal{O}(mn)$ time. 
We now show a key property of $G$ for curves in~$\Reals^1$.
We say that a curve in $\Reals^1$ \emph{spans} a distance $w$ if its image in $\Reals^1$ is an interval of length $w$. 
A component of $G$ is a \emph{singleton} component if its size is one (i.e., a single vertex).

\begin{figure}
	\centering
	\includegraphics[width=.7\textwidth]{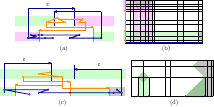}
	\caption{Regions defined by curves $P$ (orange) and $Q$ (blue). Certainty regions are green or pink, subdivision vertices are gray, and uncertainty regions are white (middle) or gray (left/right). 
	}
	\label{fig:pseudo-poly}
\end{figure}

\begin{restatable}{lemma}{onedtwocomp}
	\label{lem:1D-2comp}
	If $P$ and $Q$ are two curves in $\Reals^1$, then the placement graph $G$ computed from $D_\eps(P,Q)$ has at most two non-singleton components. If either $P$ or $Q$ spans more than $2\eps$, then $G$ has at most one non-singleton component.
\end{restatable}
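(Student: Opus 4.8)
The plan is to work in the image plane with coordinates $(x,y)=(P(r),Q(t))$ and reduce the combinatorics of $G$ to the way two fixed diagonal lines cut a grid of boxes. Writing $\mathrm{Im}(P)=[A,B]$ and $\mathrm{Im}(Q)=[C,D]$ (each an interval, since the curves are continuous in $\Reals^1$), every cell $C_{i,j}$ maps to the axis-parallel box $B_{i,j}=I_i^P\times I_j^Q$, where $I_i^P,I_j^Q$ are the segment images, and the free space is the pullback of the strip $\{|x-y|\le\eps\}$. A routine check shows that $C_{i,j}$ is partially full if and only if $B_{i,j}$ meets in its interior one of the two boundary lines $\ell^+\colon x-y=\eps$ or $\ell^-\colon x-y=-\eps$. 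Since every non-singleton component of $G$ must contain at least one edge, i.e.\ at least one partially full cell, it suffices to understand the components spanned by the $\ell^+$-cells and by the $\ell^-$-cells.

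The core step is the claim that all cells crossed by $\ell^+$ lie in a single component of $G$ (and symmetrically for $\ell^-$); this immediately yields the bound of two non-singleton components. To prove it I would traverse the segment $\Sigma^+=\ell^+\cap([A,B]\times[C,D])$, which is connected. Because $\bigcup_i I_i^P=[A,B]$ and $\bigcup_j I_j^Q=[C,D]$, every point of $\Sigma^+$ lies in some box, and a box is crossed by $\ell^+$ exactly when $\Sigma^+$ meets its interior. For a fixed point $(x_0,y_0)\in\Sigma^+$, all covering boxes $\{B_{k,l}: x_0\in I_k^P,\,y_0\in I_l^Q\}$ are mutually connected in $G$: $B_{k,l}$ and $B_{k',l}$ share the vertex $s_l^Q$, while $B_{k',l}$ and $B_{k',l'}$ share $s_{k'}^P$. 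Moving $(x_0,y_0)$ along $\Sigma^+$, the set of covering boxes changes only when $x_0$ meets a $P$-vertex value or $y_0$ meets a $Q$-vertex value; at such a transition the boxes incident to the boundary value cover the point from both sides, so connectivity is carried across. Chaining these local connections over the connected curve $\Sigma^+$ places every $\ell^+$-cell in one component.

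The main obstacle is exactly this traversal argument in the presence of \emph{folding}: a single image value may be covered by many segments, so the bookkeeping of covering boxes must be done carefully, and the one genuinely problematic event is when $\ell^+$ passes through a grid corner $(v,w)$ with $v$ a $P$-vertex value and $w$ a $Q$-vertex value, where the two diagonally crossed boxes need not share a segment. I would dispose of this codimension-one degeneracy by a general-position assumption, perturbing $\eps$ infinitesimally, which does not change the partial/full/empty type of any non-degenerate cell, so that no free-space boundary passes through a grid vertex; under this assumption all transitions above are of the simple ``share $s_l^Q$'' or ``share $s_k^P$'' type.

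It remains to prove the sharpened bound under the span hypothesis. If, say, $\mathrm{span}(Q)=D-C>2\eps$, I distinguish two cases. If only one of $\ell^+,\ell^-$ meets $[A,B]\times[C,D]$, then only that line contributes cells and there is at most one non-singleton component. Otherwise both lines are active, which (taking the binding inequalities) forces $A\le D-\eps$ and $B\ge C+\eps$; together with $C+\eps<D-\eps$ this gives $[A,B]\cap[C+\eps,D-\eps]\neq\emptyset$. Choosing $x^\ast$ in this intersection, generically not a vertex value, both points $(x^\ast,x^\ast-\eps)$ and $(x^\ast,x^\ast+\eps)$ lie in $[A,B]\times[C,D]$, the former on $\ell^+$ and the latter on $\ell^-$. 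Letting $s_i^P$ be a segment with $x^\ast\in I_i^P$, these two points certify that $s_i^P$ is incident both to an $\ell^+$-cell and to an $\ell^-$-cell; hence the $\ell^+$-component and the $\ell^-$-component coincide, leaving a single non-singleton component. The case $\mathrm{span}(P)>2\eps$ is symmetric, exchanging the roles of rows and columns.
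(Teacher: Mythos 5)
Your proposal is correct, but it takes a genuinely different route from the paper's. The paper argues directly in $\Reals^1$ along the curve $Q$: when $P$ (w.l.o.g.) spans more than $2\eps$, every point of $Q$ in $[p_\ell-\eps,\,p_r+\eps]$ is at distance exactly $\eps$ from some point of $P$ and hence lies on the free-space boundary; maximal subcurves of $Q$ in this zone give edges of a single component by continuity, overlapping subcurves are merged by transitivity through a common point of $P$, and two non-singleton components would force $Q$ to be disconnected. For the bound of two, when both spans are below $2\eps$, the exact-distance-$\eps$ set splits into the intervals $[p_\ell-\eps,\,p_r-\eps]$ and $[p_\ell+\eps,\,p_r+\eps]$, one component per interval. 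Your two lines $\ell^\pm$ are precisely the image-plane incarnation of these two intervals (they project onto them), but your bookkeeping differs: you obtain the bound of two directly (one component per boundary line of the strip), you make the folding/overlap issues fully explicit via the covering-box argument --- the paper's ``by transitivity'' sentence does this work implicitly --- and your span case is constructive, exhibiting a $P$-segment incident to cells of both lines, where the paper argues by contradiction. The price is the general-position machinery, which the paper's 1-D formulation sidesteps.

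Two spots in your sketch need minor care, both repairable with tools you already set up. First, at a transition where $x_0$ crosses a $P$-vertex value $v$, it is \emph{not} true in general that a box incident to $v$ covers the point from both sides: if every segment of $P$ containing the value $v$ folds exactly there, all intervals $I_k^P\ni v$ have $v$ as an endpoint. Connectivity still carries across, because a left-covering box $B_{k,l}$ and a right-covering box $B_{k',l}$ with the same (generic) $Q$-index $l$ are both interior-crossed --- a diagonal line through a non-corner point of the shared grid edge enters the interiors of both adjacent boxes --- and both edges are incident to the vertex $s_l^Q$. Second, your perturbation must move $\eps$ \emph{upward}: increasing $\eps$ slightly preserves the type of every cell, including degenerate partially full cells whose minimum distance is exactly $\eps$ (decreasing $\eps$ would turn these empty and change the placement graph). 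With an upward perturbation avoiding the finitely many corner-difference values, the graph is literally unchanged, all corner incidences on $\ell^\pm$ disappear, and the degenerate case where $[A,B]\cap[C+\eps,D-\eps]$ is a single vertex value in your span argument is eliminated as well.
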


\begin{proof}
	Let $p_\ell$ and $p_r$ be the leftmost and rightmost points of $P$, respectively.
	We first prove the claim when $P$, without loss of generality, spans more than $2\eps$.
	For contradiction assume there are 2 non-singleton components in $G$. 
	Every point of $Q$ in 
	$[p_\ell-\eps, p_r+\eps]$ 
	has exactly distance $\eps$ to some point in $P$ and thus defines the boundary of a free-space component and can be assigned an edge of $G$.
	By continuity, every maximal subcurve of $Q$ in $[p_\ell-\eps, p_r+\eps]$ 
	corresponds to edges in the same component of $G$. 
	By transitivity, any two overlapping subcurves of $Q$ in 
	$[p_\ell-\eps, p_r+\eps]$ are also represented in the same component of $G$ as both have at least one point at distance exactly $\eps$ from the same point in $P$. 
	Thus the two components in $G$ correspond to nonoverlapping maximal subcurves of $Q$ in $[p_\ell-\eps, p_r+\eps]$ 
	and there is no third subcurve of $Q$ that overlaps the first two. 
	Then, $Q$ is disconnected, a contradiction.
	
	Now, consider the case that both $P$ and $Q$ span less than $2\eps$. 
	The points in $\Reals^1$ that are exactly at distance $\eps$ from some point in $P$ form the intervals $[p_\ell-\eps,p_r-\eps]$ and $[p_\ell+\eps,p_r+\eps]$.
	The same argument as above shows that the subcurves of $Q$ in each of these intervals define a single component in $G$.
	Thus, there are at most $2$ non-singleton components. 
\end{proof}


\noindent\textbf{Algorithm description.}
First, we subdivide the two curves based on the orthogonal projections of the free space's boundary (see Figure~\ref{fig:pseudo-poly}).
We introduce subdivision vertices so that each point in the interior of a segment is either:
\begin{enumerate}
	\item \label{type:far} farther than $\eps$ from any point in the other curve (an edge whose corresponding row or column in $D_\eps$ is completely empty);    
	\item \label{type:close} within $\eps$ distance from every point in the other curve (an edge whose corresponding row or column in $D_\eps$ is completely full); or   
	\item \label{type:boundary} at exactly $\eps$ distance from a point of the other curve (an edge covered by the orthogonal projection of the boundary of the free space).
\end{enumerate}
We partition the segments of both curves based on these three types.
Singleton components of $G$ correspond to segments of types (\ref{type:far}) and (\ref{type:close}), and vertices of non-singleton components correspond to segments of type (\ref{type:boundary}).
The first correspond to segments of one curve that are farther than $\eps$ from every point in the other curve.
The presence of type (\ref{type:close}) segments implies that one of the 
curves spans less than $2\eps$.
Adding subdivision vertices does not asymptotically increase the complexity of the problem as each segment is subdivided at most twice.

For each of the two curves, we partition $\Reals$ into up to $5$ regions used to embed the segments of each of the types based on containment in the $\eps$-neighborhoods of the extreme points of the other curve.
Let $p_\ell$ and $p_r$ be the leftmost and rightmost points of $P$.
We note that we do not have previous knowledge of the points $p_\ell$ and $p_r$ but we later describe how to infer information about these points from $D_\eps$.
The regions serve as an abstraction that allows us to divide the problem into subproblems.
If the balls $\mathcal{B}_\eps(p_\ell)$ and $\mathcal{B}_\eps(p_r)$ intersect, we call the interval $[p_r-\eps,p_\ell+\eps]$ the \emph{middle uncertainty region} (colored white in Figure~\ref{fig:pseudo-poly}(a) and (c)). 
Segments of type (\ref{type:close}) must be embedded in this region.
The intervals of $\Reals^1$ contained in a single disk are called \emph{left}  and \emph{right certainty regions}, respectively $[p_\ell-\eps, p_r-\eps]$ and $[p_\ell+\eps, p_r+\eps]$ (colored green or pink in Figure~\ref{fig:pseudo-poly}(a) and in the bottom curve in (c)).
If $\mathcal{B}_\eps(p_\ell)$ and $\mathcal{B}_\eps(p_r)$ do not intersect, then we call the interval $[p_\ell-\eps, p_r+\eps]$ the \emph{middle certainty region} (colored green in the top curve of Figure~\ref{fig:pseudo-poly}(c)). 
The segments of type (\ref{type:boundary}) must be embedded in these regions.
In both cases, we call the intervals $(-\infty, p_\ell-\eps]$ and $[p_r+\eps, \infty)$ the \emph{left} and \emph{right uncertainty regions} (colored gray in Figure~\ref{fig:pseudo-poly}. Note that the figure only shows a closeup view and the only visible portion of a right uncertainty region is shown for the bottom curve in (c)).
The segments of type (\ref{type:far}) must be embedded in these regions.

\begin{restatable}{lemma}{onedseperate}
	\label{lem:1D-separation}
	Given $D_\eps$, in $O(nm)$ time we can partition the segments of $Q$ into the three types and assign each segment to a region.  
\end{restatable}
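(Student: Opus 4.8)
The plan is to extract every piece of information directly from $D_\eps$: first classify each (subdivided) segment of $Q$ by its type, and then locate it among the (at most five) regions. For the type, I would scan, for each segment $s_j^Q$, the $n$ cells of its row in $D_\eps$: if all of them are empty the segment is of type~(\ref{type:far}); if all are full it is of type~(\ref{type:close}); and otherwise (at least one partially full cell, equivalently a cell boundary carrying a $\pm 45^\circ$ piece of the free-space boundary) it is of type~(\ref{type:boundary}). Since the subdivision guarantees that every segment has a uniform type, this classification is exhaustive, and scanning each of the $O(m)$ rows of length $n$ costs $O(nm)$ in total. This step already fixes the coarse region class of each segment: type~(\ref{type:close}) forces $Q(t)\in\mathcal{B}_\eps(p_\ell)\cap\mathcal{B}_\eps(p_r)$, i.e.\ the middle uncertainty region; type~(\ref{type:boundary}) forces a certainty region; and type~(\ref{type:far}) forces a left or right uncertainty region.

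The remaining task is to resolve, for the type~(\ref{type:far}) and type~(\ref{type:boundary}) segments, which of the two symmetric regions (left vs.\ right) each one occupies. For type~(\ref{type:boundary}) I would use \lemref{lem:realpartiallyfull}: each partially full cell of such a segment yields the relative placement of the corresponding $P$- and $Q$-segments, in particular whether $Q$ meets $P$ from the left (touching point at $q+\eps$, hence left certainty region) or from the right (touching point at $q-\eps$, hence right certainty region). To make these local left/right decisions globally consistent I would invoke \lemref{lem:1D-2comp}: the partially full cells define the placement graph $G$, whose non-singleton components group exactly the type~(\ref{type:boundary}) segments, and there are at most two such components; transitivity within a component propagates a single left/right label to all of its segments, and the (at most two) components receive opposite labels. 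Type~(\ref{type:close}) segments need no disambiguation, as the middle region is unique.

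Finally, I would assign the type~(\ref{type:far}) segments by continuity of $Q$. Traversing the segments of $Q$ in order, consider a maximal run of consecutive type~(\ref{type:far}) segments. Since moving from the left uncertainty region to the right one (or vice versa) forces $Q$ to pass through a certainty region and then the middle region, such a run cannot contain both uncertainty regions; hence it lies entirely in one of them and is flanked by a type~(\ref{type:boundary}) segment (or by an endpoint of $Q$). The certainty region of the flanking type~(\ref{type:boundary}) segment then determines the adjacent uncertainty region: a run next to a left-certainty segment lies in the left uncertainty region, and symmetrically on the right; a run at an endpoint of $Q$ inherits the label of its single neighbor (and is assigned arbitrarily if $Q$ is entirely of type~(\ref{type:far})). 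Each of these decisions is a constant-time lookup, so the propagation costs $O(m)$, and together with the $O(nm)$ classification and the $O(nm)$ construction of $G$ the whole procedure runs in $O(nm)$ time. I expect the main obstacle to be exactly this left/right bookkeeping: one must argue that the per-cell relative placements from \lemref{lem:realpartiallyfull} are mutually consistent along each component of $G$, handle the case distinction between $P$ (or $Q$) spanning more or less than $2\eps$ (which changes whether there are three or five regions and one or two certainty components), and correctly treat the degenerate runs of type~(\ref{type:far}) segments at the ends of $Q$.
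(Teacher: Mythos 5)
Your proposal is correct and follows essentially the same route as the paper's proof: infer the three types by scanning the rows/columns of $D_\eps$ in $O(nm)$ time, build the placement graph $G$, fix an orientation arbitrarily within each non-singleton component and propagate relative placements by transitivity (Lemma~\ref{lem:1D-2comp}), and disambiguate left from right using the types of curve-adjacent segments, which the paper compresses into the observation that the left certainty region borders type-(\ref{type:close}) segments on its right boundary. The only step the paper makes explicit that you presuppose is the $O(nm)$-time computation of the subdivision vertices themselves, obtained by traversing the free space boundary and taking orthogonal projections.
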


\begin{proof}    
	The orthogonal projection of the components can be computed by a traversal of the free space's boundary. 
	Thus, we can compute the subdivision vertices in $O(nm)$ time.
	The types of all segments can be inferred from $D_\eps$ in $O(nm)$ time. 
	We can compute $G$ in $O(nm)$ time. If there are two non-singleton components, we arbitrarily fix the orientation of one segment and use 
	Lemma~3.1 in~\cite{eurocg22} to decide the relative placement for their respective segments. 
	Note that the type of the segments adjacent to a segment in a certainty region determines which of the components is the left and which is the right certainty region: the left certainty region is adjacent to segments of type (\ref{type:close}) on its right boundary. 
\end{proof}

We can use 
Lemma~3.1 in~\cite{eurocg22} to determine the relative embedding of $P$ and $Q$ in certainty regions.
It remains to determine whether the subcurves in uncertainty regions can be embedded.
We use a dynamic program (DP) to solve the problem in each uncertainty region separately.
We further divide the problem into two cases depending on whether we know the relative position of the boundaries of the respective region.
The input of each DP is a maximal subcurve of $P$ (resp., $Q$) in an uncertainty region.
The DP computes the possible placements of the subcurve for a set of boundary constraints.
We later describe how to combine the output of all the DPs into a single solution. 

\smallskip\noindent\textbf{Fixed boundary subproblem.}
If one of the curves does not have edges of type (\ref{type:close}), by 
Lemma~3.1 in~\cite{eurocg22} we know the size of the uncertainty regions.
We define DP problems for each maximal subcurve in an uncertainty region whose value is \texttt{true} iff it is possible to realize the subcurve in the region.
We define subproblems based on a suffix of the subcurve and the coordinate of the first point of the suffix (details are in Appendix~\ref{app:fixed-boundary}). 
The recursive definition tries embedding the next edge oriented towards the right or left, thus each subproblem depends on only two  subproblems.
The number of subproblems depends on the number of segments in the subcurve and the size of the uncertainty region. 

\begin{restatable}{lemma}{fixBoundary}\label{lem:fixBoundary}
	One can compute each fixed boundary subproblem defined by a subcurve $Q'$ with $n'$ segments, each with an integer length of at most $W$, and an integer interval $[0,r]$, in  $O(n'\cdot\min(r, n'W))$ time.    
\end{restatable}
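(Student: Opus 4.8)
The plan is to solve the fixed boundary subproblem with a dynamic program over the vertices of the subcurve and their admissible integer positions inside the region $[0,r]$. Write $Q'=(q'_0,\dots,q'_{n'})$ and let $w_i\in\{1,\dots,W\}$ be the integer length of the $i$-th segment. Because all segment lengths are integers and both region boundaries $0$ and $r$ are integers, once the first point $q'_0$ is pinned to an integer boundary coordinate $c_0$ every reachable vertex position $c_0\pm w_1\pm\cdots\pm w_i$ is an integer; hence it suffices to search over positions in $\{0,1,\dots,r\}$, and no realization is lost by this restriction.

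Next I would define, for a suffix starting at vertex $i$ and an integer coordinate $x\in\{0,\dots,r\}$, the Boolean value $T(i,x)$ that is \texttt{true} iff the subcurve $(q'_i,\dots,q'_{n'})$ can be embedded entirely inside $[0,r]$ with $q'_i$ placed at $x$ and the last vertex $q'_{n'}$ respecting the prescribed boundary condition. The base case $T(n',x)$ just records whether $x$ is an admissible endpoint, and the recurrence embeds the next segment to the right or to the left:
\[
T(i,x)=\bigl(x+w_{i+1}\le r \ \wedge\ T(i{+}1,x{+}w_{i+1})\bigr)\ \vee\ \bigl(x-w_{i+1}\ge 0\ \wedge\ T(i{+}1,x{-}w_{i+1})\bigr).
\]
Correctness follows because an embedding of the subcurve in $\Reals^1$ is exactly a choice of orientation (leftward or rightward) for each segment, and the recurrence enumerates precisely these choices while enforcing at every step that the new vertex stays in $[0,r]$; the answer to the subproblem is then read off from $T(0,c_0)$.

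For the running time, the table has $n'\times(r+1)$ Boolean entries, each evaluated in $O(1)$ from two predecessors, giving $O(n'r)$. To obtain the sharper $\min(r,n'W)$ factor, I would restrict every level $i$ to the coordinates actually reachable from the pinned start: after placing $i$ segments, $q'_i$ can lie only at distance at most $\sum_{j\le i}w_j\le n'W$ from $c_0$, so at most $2n'W+1$ integer coordinates, and of course at most $r+1$, are relevant per level. Computing $T$ only on this reachable band (for instance via a forward sweep that records the reachable positions before the backward evaluation of $T$) leaves $O(\min(r,n'W))$ states per level, hence $O(n'\min(r,n'W))$ total.

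I expect the only delicate points to be the two ``losslessness'' claims underlying the bound, namely that confining vertices to the integer lattice and to the reachable band discards no valid realization; both are immediate from the integrality of the segment lengths and of the region boundaries, after which the remainder is a routine Boolean dynamic program.
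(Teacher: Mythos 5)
Your proposal is correct and takes essentially the same route as the paper: the paper defines the identical suffix-plus-position table $R(k,s)$ with the same two-branch (left/right orientation) recurrence evaluated in $O(1)$ per entry, and it obtains the $\min(r,n'W)$ factor exactly as you do, by bounding the relevant integer coordinates by the maximum span $n'W$ of the image of $Q'$ (which is what makes the case $r=\infty$ work). The only cosmetic discrepancies are that the paper anchors the reachability argument at the boundary-constrained endpoint rather than at a pinned start (symmetric, since the DP runs backward from the constrained vertex), and that its recurrence uses strict inequalities $s+\|q_kq_{k+1}\|<r$ and $0<s-\|q_kq_{k+1}\|$ where you wrote non-strict ones -- an interior vertex touching the region boundary would put a curve point at distance exactly $\eps$ from the other curve, contradicting the emptiness of the corresponding cells, but this does not affect the stated running time.
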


\begin{proof}
	Since $k\in\{1,\ldots,n'\}$ and $s\in\{0,\ldots,r\}$ there are at most $O(n'r)$ subproblems.
	We can also upper-bound $s$ by $n'W$ since this is the maximum length of the image of $Q'$ (which is necessary in the case when $r=\infty$).
	Each subproblem can be computed in $O(1)$ time.
	Thus the total runtime is $O(n'\cdot\min(r, n'W))$.
\end{proof}

\smallskip\noindent\textbf{Variable boundary subproblem.}
When both  curves have segments of type (\ref{type:close}), the size of the middle uncertainty region of one curve depends on the size of the middle uncertainty region of the other.
We similarly define a DP problem for each maximal subcurve in an uncertainty region. 
However, each subproblem is also defined by a suffix of the subcurve, the coordinate of the first point, and, additionally, the size of the uncertainty region.
In this case, the size of the uncertainty region is upper-bounded by $2\eps$. 
\begin{restatable}{lemma}{varBoundary}\label{lem:varBoundary}
	For variable boundary subproblems, in $O(\max(n,m)\cdot\eps^2)$ time, one can compute all DP tables and, if there exist $P$ and $Q$ in $\Reals^1$ that realize $D_\eps$,  find $r_P$ and $r_Q$ that are compatible with a solution to all subproblems, where $r_P$ and $r_Q$ denote the sizes of the middle uncertainty regions of $P$ and $Q$ respectively.    
\end{restatable}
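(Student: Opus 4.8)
The plan is to first read off the structural consequences of being in the variable boundary case, then build a family of dynamic programs parameterized by the unknown region widths, and finally reconcile the two widths against each other. Being in the variable boundary case means both $P$ and $Q$ carry a segment of type (\ref{type:close}). A type (\ref{type:close}) segment of one curve forces the \emph{other} curve to span less than $2\eps$ (this is the observation already used), so applying it to both curves shows that each of $P$ and $Q$ spans less than $2\eps$; by \lemref{lem:1D-2comp} each therefore has exactly one middle uncertainty region, of sizes $r_P = 2\eps - \mathrm{span}(P)$ and $r_Q = 2\eps - \mathrm{span}(Q)$. The key point is that the type (\ref{type:close}) segments of $Q$ must be embedded inside $P$'s middle region (of width $r_P$) and those of $P$ inside $Q$'s middle region (of width $r_Q$), while a priori neither width is known. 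The type (\ref{type:boundary}) segments lie in certainty regions and are placed deterministically by Lemma~3.1 in~\cite{eurocg22}, and any type (\ref{type:far}) segments lie in a left/right uncertainty region whose boundary is fixed once $r_P$ (resp.\ $r_Q$) is fixed, so they are disposed of by the fixed boundary routine of \lemref{lem:fixBoundary}. Thus the only genuinely undetermined quantities are the two middle region widths.

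Next I would set up the dynamic program exactly as in the fixed boundary case, but with the region width added as an extra index. For a maximal close subcurve $Q'$ that must be embedded in $P$'s middle region, I index subproblems by a suffix of $Q'$, the integer coordinate of the first point of that suffix, and the integer width $r\le 2\eps$ of the region. Since (as in \lemref{lem:fixBoundary}) all cell dimensions and hence all relevant coordinates may be taken integral, both the coordinate and the width range over $O(\eps)$ integer values, so a close subcurve with $n'$ segments produces $O(n'\eps^2)$ subproblems. Each is filled in $O(1)$ time from the two subproblems obtained by orienting its next edge to the left or right and discarding those that leave the width-$r$ window, exactly as in the fixed boundary recursion. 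Summing over all maximal close subcurves of $P$ and $Q$, whose segment counts total at most $n$ and $m$, gives $O(\max(n,m)\eps^2)$ subproblems and the same bound on the time to fill all tables. What I retain from each table is, for every candidate width, the Boolean recording whether the corresponding close subcurve can be realized in a middle region of that width.

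Finally I would resolve the coupling, which I expect to be the main obstacle. The difficulty is that the two widths are not free: they are linked through $r_P = 2\eps - \mathrm{span}(P)$ and $r_Q = 2\eps - \mathrm{span}(Q)$, and each span is itself fixed only once an embedding (the close subcurve together with the deterministically placed boundary and far segments) has been chosen, so the constraints are self-referential. My approach is to enumerate the $O(\eps^2)$ integer pairs $(r_P,r_Q)$; for each pair, the spans of both curves, the positions of all certainty regions, and the boundaries of all left/right uncertainty regions are completely determined, so I can test in $O(1)$ whether both precomputed Booleans hold, whether the induced far-segment subproblems succeed, and whether the two span identities are self-consistent, returning any compatible pair. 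This combination step costs $O(\eps^2)$, which is dominated by the table computation and yields the stated $O(\max(n,m)\eps^2)$ bound. The crux to argue carefully is precisely that this reconciliation decomposes in the claimed way: that checking consistency of a guessed pair $(r_P,r_Q)$ with an actual simultaneous embedding of both curves reduces to independent table lookups plus the two span identities, rather than requiring a joint search over both curves at once.
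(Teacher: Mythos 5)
Your proposal follows essentially the same route as the paper's proof: the same DP with the region width $\alpha\in\{1,\ldots,2\eps\}$ added as an extra integer index (giving $O(n'\cdot\eps^2)$ subproblems per table, $O(1)$ each), followed by exhaustive enumeration of the $O(\eps^2)$ integer pairs $(r_P,r_Q)$ against per-width feasibility Booleans retained from each table. One minor accounting correction: there can be $O(m+n)$ maximal close subcurves (hence DP tables), not just one per curve, so the per-pair compatibility check costs $O(1)$ \emph{per table} (or needs precomputed per-width aggregates over all tables), making the search step $O(\max(n,m)\cdot\eps^2)$ rather than $O(\eps^2)$ --- still within the stated bound, which is exactly how the paper charges it.
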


\begin{proof}    
	There are $O(m+n)$ DP tables since this is the upper bound on the  number of maximal subcurves in the middle uncertainty regions. 
	Each table has $O(n'\cdot\eps^2)$ subproblems, each can be computed in $O(1)$ time, where $n'$ is the size of the maximal subcurve.
	Thus, it takes $O(\max(n,m)\cdot\eps^2)$ to compute all DP tables.
	For each table, we can keep track in a separate data structure what values of $\alpha$ have an entry $R(i, .,\alpha)=\texttt{true}$.
	Then, given values for $r_P$ and $r_Q$, we can check whether there exist a compatible solution in each table in $O(1)$ time per table.    
	Thus, we can try all possible $r_P, r_Q\in\{1,\ldots,2\eps\}$ searching for values compatible with a solution for each DP problem. 
	Then, searching for a set of compatible solutions takes $O(\max(n,m)\cdot\eps^2)$ time.
\end{proof}



\begin{theorem}
	\label{thm:pseudo-poly}
	Given an $m\times n$ free space diagram $D_\eps$, where $n\ge m$, every cell has integer dimensions of at most $W$, and $\eps$ is an integer, we can produce two curves in $\Reals^1$ that realize $D_\eps$ or answer \texttt{false} if no such curves exist in time $O(\max(n\eps^2, n^2W))$.
\end{theorem}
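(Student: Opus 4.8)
The plan is to assemble the structural lemmas and the two dynamic programs into one algorithm and then account for the running time carefully. The starting point is to build the placement graph $G$ and run the separation procedure of \lemref{lem:1D-separation}: in $O(nm)$ time this partitions the segments of both curves into the three types, assigns each to a region, and—via \lemref{lem:1D-2comp}—tells us whether we are in the one-component or the two-component regime. For the segments lying in certainty regions (those of type~(\ref{type:boundary}), which constitute the non-singleton components of $G$), the relative embedding is forced, so I would invoke Lemma~3.1 of~\cite{eurocg22} to compute it in $O(nm)$ time, or to report \texttt{false} if these segments cannot be consistently placed. This step also pins down the sizes of the certainty regions, and hence the positions of the boundaries of the adjacent uncertainty regions whenever those boundaries are determined.

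Next I would dispatch each uncertainty region to the appropriate subroutine. The type-(\ref{type:close}) segments occur only when a curve spans less than $2\eps$; if at least one curve has no such segment, every uncertainty region has a known boundary, so each maximal subcurve falls under the fixed boundary subproblem and is solved by \lemref{lem:fixBoundary}. Summing its $O(n'\cdot\min(r,n'W))$ cost over all such subcurves, and using $\sum n'\le n$ together with $\min(r,n'W)\le n'W$, yields a worst-case bound of $O(n^2W)$ (the extreme case being a single subcurve of size $\Theta(n)$ embedded in an unbounded left or right uncertainty region, where $r=\infty$ forces the $n'W$ factor). When instead both curves carry type-(\ref{type:close}) segments, the middle uncertainty regions of $P$ and $Q$ are mutually constraining, and I would call \lemref{lem:varBoundary}, which builds all variable boundary DP tables and searches for sizes $r_P,r_Q\in\{1,\ldots,2\eps\}$ simultaneously compatible with every table in $O(\max(n,m)\cdot\eps^2)=O(n\eps^2)$ time.

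Finally I would stitch the pieces together: the forced embedding in the certainty regions, together with one admissible placement returned by each (fixed or variable) DP, determines curves $P,Q$ in $\Reals^1$ up to the global reflection/translation symmetry. If any DP reports infeasibility, or if no compatible pair $(r_P,r_Q)$ exists in the variable boundary case, the algorithm returns \texttt{false}. The total cost is $O(nm)+O(n^2W)+O(n\eps^2)$, and since $n\ge m$ and $W\ge 1$ give $nm\le n^2\le n^2W$, this collapses to $O(\max(n\eps^2,n^2W))$, as claimed.

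I expect the main obstacle to be the bookkeeping that guarantees a \emph{global} solution rather than merely locally feasible subproblems. The certainty regions fix everything except the contents of the uncertainty regions, but in the two-component / variable boundary regime the left, right, and middle regions of the two curves are coupled through the shared parameters $r_P$ and $r_Q$; the correctness argument must show that independently selecting any compatible solution in each DP table is globally consistent, which is precisely what the simultaneous search in \lemref{lem:varBoundary} delivers. A secondary subtlety is correctly orienting the two non-singleton components—deciding which is the left and which is the right certainty region—which is resolved using the type of the adjacent segments as noted in the proof of \lemref{lem:1D-separation}.
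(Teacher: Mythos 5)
Your proposal is correct and follows essentially the same route as the paper: the theorem is obtained exactly by combining \lemref{lem:1D-separation} (and Lemma~3.1 of~\cite{eurocg22}) for the certainty regions with \lemref{lem:fixBoundary} for fixed boundary subproblems and \lemref{lem:varBoundary} for the coupled variable boundary case, and your runtime accounting ($O(nm)+O(n^2W)+O(n\eps^2)$ collapsing to $O(\max(n\eps^2,n^2W))$ using $n\ge m$ and $W\ge 1$) matches the paper's. Your closing remarks on global consistency via the simultaneous search over $r_P,r_Q$ and on orienting the two non-singleton components are precisely the subtleties the paper resolves in the same way.
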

\section{$\exists\Reals$-Completeness for Discrete Curves in $\Reals^2$\label{sec:discrete2DHardness}}

We now turn to the discrete \fd, and prove that realizability by curves in $\Reals^{\geq 2}$ for a given free space matrix is also $\exists\Reals$-complete.
We reduce from \textsc{$d$-Stretchability}, which asks whether there exists an arrangement of hyperplanes in $\Reals^{\ge 2}$ that realizes a combinatorial description. The formal description follows in the next paragraph.
We use the machinery developed by Kang and M{\"u}ller~\cite{kang2012sphere} to show that recognizing a $d$-sphere graph (generalization of unit disk graphs in $\Reals^d$) is $\exists\Reals$-hard for $d\ge 2$.
(Although only NP-hardness is claimed in~\cite{kang2012sphere}, their proof also extends to $\exists\Reals$-hardness as noted in their conclusion.)
Recall that we explain in Section~\ref{sec:intro} that, though they are similar, our problem differs from $d$-sphericity.
%

An instance of \textsc{$d$-Stretchability} is given by a set $S\subseteq\{-,+\}^n$ of size $1+\binom{n+1}{2}$.
An  arrangement of $n$ hyperplanes divides $\Reals^d$ into $1+\binom{n+1}{2}$ cells.
Each vector in $S$ corresponds to a cell in a potential arrangement.
We denote by $\textbf{v}_j\in S$ the $j$th vector in $S$ and by $\textbf{v}_j[i]$ its $i$th coordinate.
Then $\textbf{v}_j[i]=-$ (resp., $\textbf{v}_j[i]=+$) if the corresponding cell is below (resp., above) the $i$th hyperplane.
Note that $(-,\ldots,-)$ and $(+,\ldots,+)$ must be in $S$, and so we assume they are respectively $\textbf{v}_1$ and $\textbf{v}_2$.
The problem asks whether $S$ is the combinatorial description of an arrangement of $n$ hyperplanes.

\noindent\textbf{Reduction.}
Given an instance $S$ of \textsc{$d$-Stretchability} of $n$ hyperplanes we construct a $2n\times |S|$ free space matrix $M_{\eps}$ as follows.
We partition $P$ (whose vertices correspond to rows of $M_{\eps}$) into two subcurves $P^+=(a_1,\ldots,a_n)$ and $P^-=(b_1,\ldots,b_n)$.
Informally, $a_i$ (resp., $b_i$) will be a point in the upper (resp., lower) halfspace of a hyperplane $\ell_i$ in the arrangement.
Each column $j$ of $M_{\eps}$ (i.e., vertex of $Q$) represents a vector in $\textbf{v}_j\in S$, that is,  $M_{\eps}[i][j]=0$ and $M_{\eps}[n+i][j]=1$ (resp., $M_{\eps}[i][j]=1$ and $M_{\eps}[n+i][j]=0$) if $\textbf{v}_j[i]=-$ (resp., $\textbf{v}_j[i]=+$).

\begin{restatable}{theorem}{ERdiscrete}
	\label{thm:ER-discrete}
	Given a free space matrix $M_{\eps}$, it is $\exists\Reals$-complete to decide whether there exists a pair of curves $P$ and $Q$ in $\Reals^{\ge 2}$ that realizes $M_{\eps}$.
\end{restatable}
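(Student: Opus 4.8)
The plan is to prove the two halves of $\exists\Reals$-completeness separately. For \emph{membership}, I would first fix the dimension: a realization of a $2n\times|S|$ matrix uses $N=2n+|S|$ points, which span an affine subspace of dimension at most $N-1$, and conversely any realization in $\Reals^1$ also realizes the matrix in $\Reals^2$. Hence realizability in $\Reals^{\ge 2}$ is equivalent to realizability in the single space $\Reals^{N-1}$. I then introduce real variables for the coordinates of each $p_i$ and $q_j$ in $\Reals^{N-1}$ and write $\Vert p_i-q_j\Vert^2\le\eps^2$ whenever $a_{i,j}=1$ and $\Vert p_i-q_j\Vert^2>\eps^2$ whenever $a_{i,j}=0$. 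This is a finite conjunction of polynomial (in)equalities, so the existence of a realization is expressible as an $\exists\Reals$ sentence, exactly as in the continuous containment argument (cf.\ \lemref{lem:ER-containment-continuous}).

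For \emph{hardness}, the crux of the reduction is to identify each hyperplane $\ell_i$ with the pair $(a_i,b_i)$ via the perpendicular bisector of $\overline{a_i b_i}$. The claim to prove is that a set of points realizes the matrix $M_\eps$ built from $S$ if and only if the bisector hyperplanes $\ell_i$ form an arrangement in which the cell containing $q_j$ has sign vector $\textbf{v}_j$ for every $j$. The forward direction is immediate from a distance comparison: if $\textbf{v}_j[i]=+$ then, by construction, $\Vert q_j-a_i\Vert\le\eps<\Vert q_j-b_i\Vert$, so $q_j$ lies strictly on the $a_i$-side of $\ell_i$, and symmetrically for $\textbf{v}_j[i]=-$. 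Thus any realization yields an arrangement of $n$ hyperplanes in which each $q_j$ certifies that the cell $\textbf{v}_j$ is nonempty, recovering a realization of $S$.

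For the converse I would place the centers geometrically from a straight-line arrangement realizing $S$. After scaling the arrangement into a ball of small diameter $D$, I set $a_i$ (resp.\ $b_i$) at perpendicular distance $\eps$ from $\ell_i$ on its positive (resp.\ negative) side and pick a representative $q_j$ in the interior of cell $\textbf{v}_j$. A point at signed perpendicular distance $\rho$ and lateral distance $\lambda$ from $\ell_i$ satisfies $\Vert q_j-a_i\Vert^2=(\eps-\rho)^2+\lambda^2$, so $q_j\in\mathcal{B}_\eps(a_i)$ iff $\rho\ge\lambda^2/(2\eps)$; since every representative has perpendicular distance bounded below by a positive constant while $\lambda\le D$, choosing $D$ small enough forces $q_j\in\mathcal{B}_\eps(a_i)\setminus\mathcal{B}_\eps(b_i)$ exactly when $q_j$ is on the positive side of $\ell_i$, reproducing all prescribed entries of $M_\eps$.

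I expect the technical heart, and the \emph{main obstacle}, to be this geometric equivalence between equal-radius ball-membership patterns and halfspace membership. Because the free space matrix has a single threshold $\eps$, all balls share the same radius, so a sphere $\partial\mathcal{B}_\eps(a_i)$ is genuinely curved rather than flat; the scaling argument above is precisely what lets the curved spheres separate the finitely many representatives in the same way the flat hyperplanes do. The second delicate point is ensuring that recovering an arrangement from a realization (which a priori may live in high dimension) yields a certificate matching the combinatorial description $S$ rather than merely containing its cells; here I would use that $|S|=1+\binom{n+1}{2}$ is the maximal number of cells of the relevant arrangements, together with the Kang--M\"uller machinery~\cite{kang2012sphere} that makes the sign-vector/sphere correspondence exact, as invoked to establish the $\exists\Reals$-hardness of the source problem.
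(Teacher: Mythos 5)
Your proposal is correct and takes essentially the same route as the paper: membership via a direct encoding of the ball-membership constraints as polynomial inequalities (as in Lemma~\ref{lem:ER-containment-continuous}), and hardness via the identical reduction from \textsc{$d$-Stretchability}, recovering each hyperplane $\ell_i$ as the perpendicular bisector of $\overline{a_i b_i}$ in one direction and approximating halfspaces by radius-$\eps$ balls tangent to the $\ell_i$ in the other. Your ``shrink the arrangement with $\eps$ fixed'' scaling is exactly the paper's ``let $r\to\infty$, then rescale so $r^*=\eps$'' argument phrased dually.
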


\begin{figure}[ht]
	\centering
	\includegraphics[width=\textwidth]{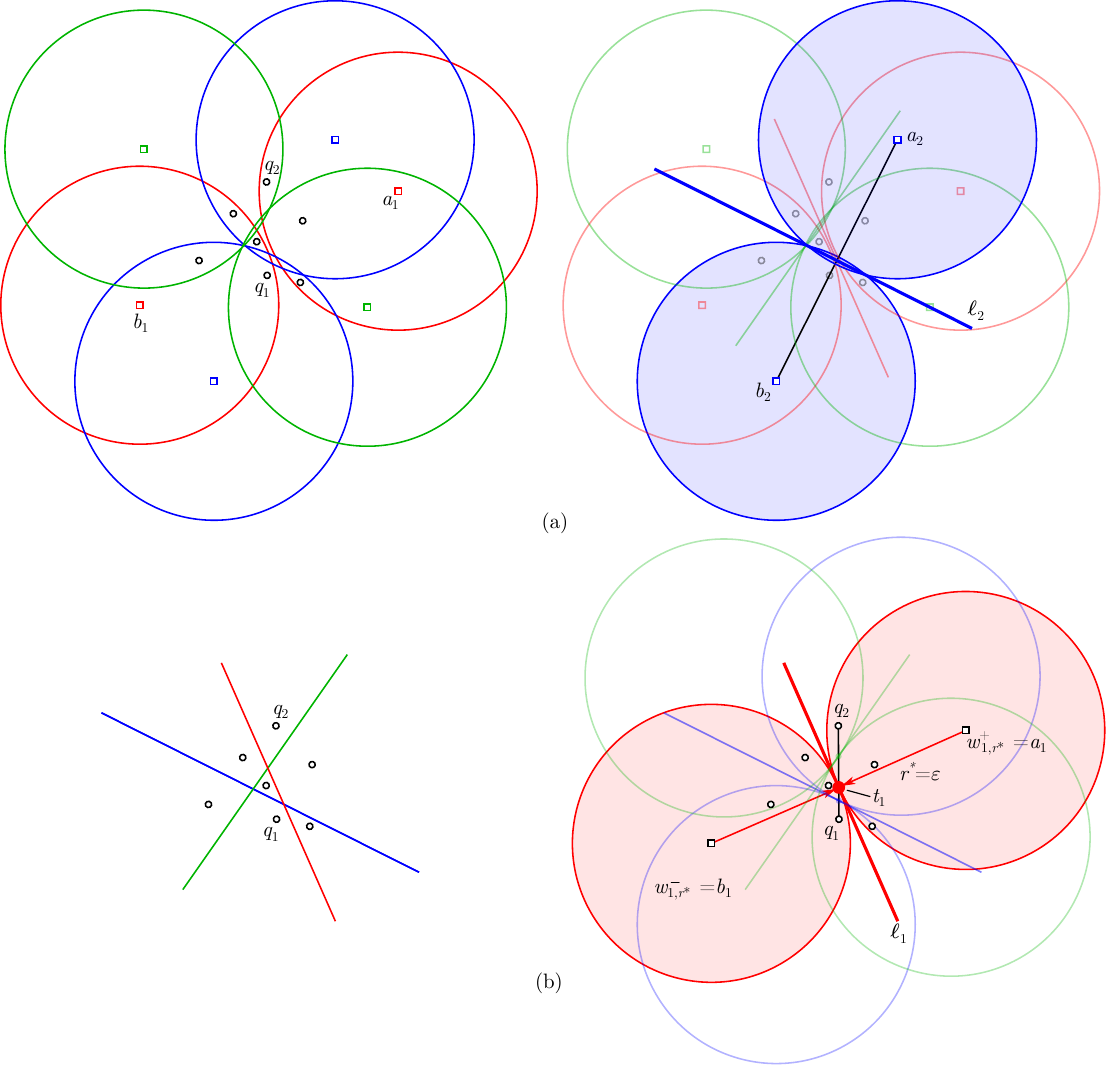}
	\caption{Reduction from $S=\{(-,-,-),(+,+,+), (-,-,+), (-,+,+), (-,+,-), (+,+-), $ $(+,-,-)\}$. (a) Transforming a solution to $M_\eps$ into a line arrangement that realizes $S$.
		(b) Transforming a solution to $S$ into curves $P$ and $Q$ that realize $M_\eps$. Here squares represent points of $P$ and circles represent points of $Q$.}
	\label{fig:er-discrete}
\end{figure}

\begin{proof}
	Containment in $\exists\Reals$ can be proven by a straightforward reduction to $\exists\Reals$ similar to the proof of Lemma~\ref{lem:ER-containment-continuous}.
	We now focus on the reduction defined above.
	It is clear that it runs in polynomial time.
	Assume that there exists a pair of curves $P$ and $Q$ that realizes $M_{\eps}$.
	Refer to Figure~\ref{fig:er-discrete}(a).
	We use the labels of point of $P$ defined in the reduction and assume $Q=(q_1,\ldots, q_{|S|})$.
	Recall that, informally, points $q_1$ and $q_2$ represent vectors $\textbf{v}_1=(-,\ldots,-)$ and $\textbf{v}_2=(+,\ldots,+)$, respectively.
	Rotate the solution in order to make the vector $\overrightarrow{q_1q_2}$ vertical and pointing upwards.
	We build a hyperplane arrangement as follows.
	For each $i\in [n]$, create a hyperplane $\ell_i$ bisecting the segment $a_i b_i$.
	Now, we argue that $q_j$, $j\in\{1,\ldots,|S|\}$ is in a cell in the produced arrangement with description $\textbf{v}_j$.
	Let $C_1$ and $C_2$ be the cells in the arrangements of circles of radius $\eps$ containing $q_1$ and $q_2$, respectively.
	By definition, if $\textbf{v}_j[i]=+$, then $q_j$ must be within $\eps$ distance from $a_i$ and farther than $\eps$ from $b_i$, that is $q_j\in \mathcal{B}_\eps(a_i)\setminus\mathcal{B}_\eps(b_i)$.
	Thus, $C_1= (\bigcap_{i=1}^n \mathcal{B}_\eps(b_i)\setminus \bigcup_{i=1}^n \mathcal{B}_\eps(a_i))$ and $C_2= (\bigcap_{i=1}^n \mathcal{B}_\eps(a_i)\setminus \bigcup_{i=1}^n \mathcal{B}_\eps(b_i))$.
	Note that every hyperplane $\ell_i$ must separate $C_1$ and $C_2$ by definition.
	Thus every $\ell_i$ intersects the line segment $\overline{q_1q_2}$.
	We focus on a specific hyperplane $\ell_i$.
	Without loss of generality assume $\textbf{v}_j[i]=+$.
	Then, $\mathcal{B}_\eps(a_i)\setminus\mathcal{B}_\eps(b_i)$ is above $\ell_i$ and so is $q_j$.
	Therefore, the produced hyperplane arrangement realizes $S$.
	
	Now assume that there exists a hyperplane arrangement realizing $S$.
	Refer to Figure~\ref{fig:er-discrete}(b).
	For each cell in the arrangement described by $\textbf{v}_j$, choose a point $q_j$ in the interior of the cell.
	As before, every hyperplane intersects the line segment $\overline{q_1q_2}$, since $q_1$ is below all the hyperplanes and $q_2$ is above.
	Let $t_i$ be the intersection of $\ell_i$ and $\overline{q_1q_2}$.
	Define the balls $\mathcal{B}_r(w_{i,r}^+)$ and $\mathcal{B}_r(w_{i,r}^-)$ respectively above and below $\ell_i$, tangent to $\ell_i$ at $t_i$.
	Note that $\mathcal{B}_r(w_{i,r}^+)$ (resp., $\mathcal{B}_r(w_{i,r}^-)$) equals the upper (resp., lower) halfspace of $\ell_i$ when $r\rightarrow \infty$.
	Thus, for sufficiently large $r$, $\mathcal{B}_r(w_{i,r}^+)$ contains all points $q_j$ above $\ell_i$ and $\mathcal{B}_r(w_{i,r}^-)$ contains all points $q_j$ below $\ell_i$.
	Let $r^*$ be a sufficiently large $r$ such that the previous statement is true for all $i\in [n]$.
	Scale the entire construction to make $r^*=\eps$.
	Then, we can construct $P$ by making $a_i=w_{i,\eps}^+$ and $b_i=w_{i,\eps}^-$.
	Now, each $q_j$ is contained in the appropriate cell of the arrangement of circles of radius $\eps$ centered at points of $P$.
	Thus, the constructed $P$ and $Q$ realize $M_\eps$.
\end{proof}



\section{A Polynomial Time Algorithm for Discrete Curves in $\Reals^1$\label{sec:discrete1Dpoly}}
We now turn to realizability for discrete curves in $\Reals^1$ and show that this can be decided in polynomial time.  
We lend our main idea from the {\em unit-interval graph recognition} (\textsc{UIGR}) in~\cite{CORNEIL}: given an abstract graph $G$ whose nodes are intervals and two intervals intersect iff there is an edge between the two corresponding nodes in $G$, the goal is to find a placement of intervals in the real line such that the intersections induced by them fulfill $G$. See \figref{fig:uigr} for an example. 
In free space realizability, we derive a unit interval graph $G$ from the free space matrix $M_\eps$. We then adapt the idea in~\cite{CORNEIL} to handle the realizability in our case. 

\begin{figure}[htpb]
	\centering
	\includegraphics[width=.35\textheight]{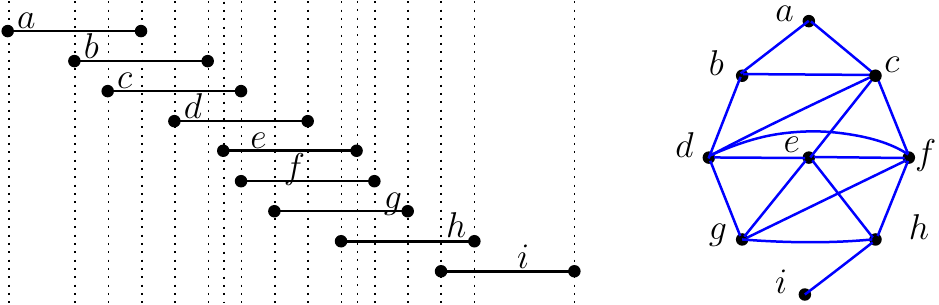}
	\caption{An abstract graph of intervals and its recognition in $\Reals^1$}
	\label{fig:uigr}
\end{figure}

First, we borrow some notations from~\cite{CORNEIL}. For a vertex $v \in G$ the neighboring vertices of $v$ is denoted by $N(v)$. We also define $N[v] = N(v) \cup \{v\}$. Two vertices $u$ and $v$ are {\em indistinguishable} if $N[u] = N[v]$.
In order to recognize a unit-interval graph, Corneil et al.~\cite{CORNEIL} propose a linear-time algorithm: (i) find the left anchor in $G$ (the left-most interval in the recognition), (2) perform a BFS search starting at the left anchor to get a {\em partial order} of the intervals, meaning that some groups of intervals are ordered properly, but still intervals belonging to each group need to be re-ordered, and (3) refine the partial order, i.e., the intervals within in each group, to get the {\em global order}. 
If the global order exists, return `YES', and `NO', otherwise. 
To handle (1), they perform a BFS from an arbitrary node in $G$. Find a vertex $z$ at the last level $L_t$ of the BFS search such that $\deg(z) = \min\{\deg(w): w \in L_t\}$. In (2), they locally order the vertices in $G$ based upon the level in which they are encountered along the BFS search from $z$. Finally in (3), in each level $L_k$ 
obtained in (2)
they sort each vertex $v \in L_k$ in an increasing order of $D(v) = |\nex(N(v))| - |\pre(N(v))|$. Here, $\nex(v)$ is the set of adjacent vertices to $v$ 
in $L_{k+1}$
and $\pre(v)$ is the set of adjacent vertices to $v$ 
in $L_{k-1}$.
Next, for each vertex $v$ in $G$, they compute $\alpha(v)= \min\{\mbox{order}(u):~u \in N[v]\}$ and $\omega(v)= \max\{\mbox{order}(u):~u \in N[v]\}$, where $\mbox{order}(u)$ is the order in which $u$ is encountered on the line. In the end, if there exists a $v$ for which $\alpha(v) \neq \omega(v)$, `NO' is returned, and `YES' is returned, otherwise. We modify Step (3) to handle our case.  

The problem of realizing $M_\eps$ is more restrictive than UIGR as follows.
Let $\eps=1/2$.
Similar to the $\Reals^2$ case, we can see the realization of $Q=(q_1,\ldots,q_n)$ as an arrangement of intervals $\mathcal{B}_\eps(q_j)$ partitioning $\Reals^1$.
Each row $i$ of $M_\eps$ describes a ``cell'' in the arrangement (which is an interval in $\Reals^1$) where we place a point $p_i$ of $P$. 
Thus, $M_\eps$ requires a unit interval realization of $P$ with not only prescribed adjacencies in $G$ but prescribed cells.
Our goal is to take the partial order that we get from (2), and refine it to be closer to a global order using information from $M_\eps$. 
In the following, we refer to vertices of $G$ and the interval they refer to interchangeably, and we call the maximal set of vertices that are pairwise indistinguishable an \emph{equivalence class}. See Appendix~\ref{app:discretePoly} for a full algorithmic description. 

\noindent$\textsc{DiscreteRealizabilityAlgo}(M_\eps):$
{\bf (1)} Construct the unit-interval graph $G$ from $M_{\eps}$. 
{\bf (2)}
Choose a left anchor $v_0$ by running a BFS search on $G$.
{\bf (3)}  Perform a BFS on $G$ starting at $v_0$ to obtain a partial order of the intervals.  
{\bf (4)} Refine the partial order 
by sorting the intervals at each level of the BFS under the criterion of $D(v)=|\nex(N(v))|- |\pre(N(v))|$. 
{\bf (5)} 
Refine the partial order $D$ to an order $D'$: 
For each row of $M_\eps$, place intervals of entry 1 in the equivalence class $C$ towards those intervals that don't belong to $C$ whose entries are 1 and orders are different than intervals in $C$.
{\bf (6)} Extend the partial order defined by the BFS levels and $D'$ to a global order breaking ties arbitrarily.
{\bf (7)} Verify whether the produced arrangement is compatible with $M_\eps$ or not. 
\begin{restatable}{lemma}{Discrete} \label{lem:uig-fsm}
	The algorithm returns `YES' if and only if $M_\eps$ is realizable.
\end{restatable}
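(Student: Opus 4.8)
The plan is to prove the two implications separately, with the forward direction (the algorithm returning \texttt{`YES'} implies realizability) being essentially immediate and the reverse direction (realizability implies the algorithm returns \texttt{`YES'}) carrying the real content. Throughout I write $I_j:=\mathcal{B}_\eps(q_j)$ for the candidate unit intervals; since $\eps=1/2$ each $I_j$ has length $1$, and a point lies in $I_j$ exactly when the corresponding entry of $M_\eps$ is $1$.

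For the easy direction, I would observe that Step~(7) explicitly builds the arrangement of the intervals $I_j$ from the computed global order and verifies, for every row $i$, that the support $\{j:a_{i,j}=1\}$ occurs as a cell of this arrangement, in which $p_i$ is then placed. Because the intervals all have unit length, this check is faithful: if it passes, the resulting placement of $P$ and $Q$ literally satisfies $M_\eps(P,Q)=M_\eps$, so $M_\eps$ is realizable.

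For the hard direction, I would start from an arbitrary realization $P,Q$ and analyze the left-to-right order $\prec$ of the centers $q_j$. The key structural fact is that, all intervals having equal length, the set of intervals containing any fixed point forms a block of consecutive indices in $\prec$; hence every row support is an interval of $\prec$ (the consecutive-ones property). I would then argue that $\prec$ is a valid unit-interval ordering for the graph $G$ constructed in Step~(1), so $G$ is a unit interval graph, and invoke the correctness of Corneil et al.'s procedure (Steps~(2)--(4)): the BFS from the left anchor together with the $D(v)$-refinement produces, up to a single global reversal, the canonical order of the \emph{equivalence classes} of $G$, which must coincide with the class order induced by $\prec$. The only remaining freedom is the permutation of pairwise indistinguishable intervals within each equivalence class.

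The crux, and the step I expect to be the main obstacle, is to show that Step~(5) resolves this intra-class freedom correctly, i.e.\ that the within-class order forced by the cell information in $M_\eps$ is both necessary in every realization and safe in the sense that the remaining ties may be broken arbitrarily. Here I would show that two twin intervals $I_j,I_{j'}$ (with $N[j]=N[j']$ in $G$) have their relative order pinned down exactly when some row $i$ separates them ($a_{i,j}\neq a_{i,j'}$): the point $p_i$ must then lie in the symmetric difference $I_j\triangle I_{j'}$, which forces the interval with the smaller center toward the appropriate side, and this is precisely the direction selected by the refinement $D'$ in Step~(5). When no row separates $j$ and $j'$, the cell structures obtained from either order agree on all prescribed supports, so the arbitrary tie-breaking of Step~(6) cannot destroy realizability. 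A subtlety I would address carefully is that $G$ is the co-occurrence graph of $M_\eps$ and may be a proper subgraph of the intersection graph of the realized intervals, since two intervals can overlap without a placed point witnessing it; I would handle this by showing that such an unwitnessed overlap keeps every prescribed support consecutive and does not alter the class order, so the order computed by the algorithm still induces a valid placement and Step~(7) succeeds, whence the algorithm returns \texttt{`YES'}.
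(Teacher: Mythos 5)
Your decomposition matches the paper's: soundness (``YES'' $\Rightarrow$ realizable) is immediate from the Step~(7) check in both, and all the content is in completeness. Where you genuinely diverge is the supergraph subtlety you correctly flag. The paper resolves it by surgery on a given realization: it picks an edge of the intersection graph $G^*$ absent from $G$ whose overlap $\mathcal{B}_\eps(q_i)\cap\mathcal{B}_\eps(q_j)$ is shortest, translates a suffix set $S$ of points to make that overlap vanish, verifies one by one that no prescribed cell disappears, and iterates until the intersection graph is \emph{exactly} $G$; only then does the UIGR machinery apply. Your alternative — row supports are blocks of consecutive centers, hence every edge of $G$ enjoys the umbrella property with respect to the center order, hence $G$ is a proper/unit interval graph without modifying the realization — is more elementary and does appear to work for establishing that $G$ is a unit interval graph. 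Note, however, that you then still owe the fact that the proper-interval order of (each component of) $G$ is unique up to reversal and permutation of indistinguishable vertices, in order to conclude that the order from Steps~(2)--(4) agrees with the class order induced by your $\prec$; this is what the paper imports from Corneil et al.

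The genuine gap is in your within-class analysis of Step~(5). You assert that the order of twins $I_j, I_{j'}$ is pinned down ``exactly when some row $i$ separates them,'' and that this is ``precisely the direction selected by the refinement $D'$.'' But Step~(5) only acts when the separating row $r$ also contains an interval \emph{outside} the equivalence class ($i\in I_r\setminus C$) to orient the refinement. The unhandled subcase is a row whose support is a nonempty \emph{proper subset of a single equivalence class} surviving into Step~(6): there Step~(5) refines nothing, the tie is broken arbitrarily, and your ``safe tie-breaking'' claim does not apply, since that row does distinguish the two columns. The paper handles this case explicitly in its analysis of Steps~(6)--(7): since all intervals of a class intersect exactly the same intervals, including intervals forced to lie to their left and right (this is the reason $v_0$ and $v_f$ are added, so every class has a predecessor and a successor), the cell demanded by such a row cannot exist in any arrangement consistent with the order, so Step~(7) returns `NO' --- and, by the same token, such an instance is not realizable, so the case never arises under the hypothesis ``$M_\eps$ is realizable.'' Without this argument (or an equivalent one) your completeness direction silently assumes that every separating row triggers a Step~(5) refinement, which is false, so the proof as written does not close.
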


\begin{proof}
	By Step (7), it is clear that when the algorithm returns `YES' the instance $M_\eps$ is realizable. 
	If the algorithm returns `NO', we show that there are no $P$ and $Q$ realizing $M_\eps$. 
	Note that the constraints in the ordering obtained from Steps (1--4) are the same as for UIGR.
	Thus, we must show that if $G$ is not a unit interval graph, then $M_\eps$ is not realizable.
	We show the constrapositive: if $M_\eps$ is realizable, then $G$ is a unit interval graph.
	As discussed before, the realization of $Q$ implies the realization of a unit interval graph $G^*$ whose intervals are $\mathcal{B}_\eps(q_j)$, for $\eps=1/2$.
	It is clear that $G^*$ is a supergraph of $G$ since the required cells in the interval arrangement exist containing points of $P$.
	Let $q_iq_j$ be an edge that exists in $G^*$ and not in $G$ such that $q_i$ is to the left of $q_j$ and with shortest interval $\mathcal{B}_\eps(q_i)\cap \mathcal{B}_\eps(q_j)$.
	Since $q_iq_j$ is not in $G$, $\mathcal{B}_\eps(q_i)\cap \mathcal{B}_\eps(q_j)$ contain no points of $P$ and all the cells in this interval are not necessary.
	By the assumption that $\mathcal{B}_\eps(q_i)\cap \mathcal{B}_\eps(q_j)$ is shortest, there is no point $q_k$ of $Q$ to the right of $q_j$ whose interval $\mathcal{B}_\eps(q_k)$ intersects $\mathcal{B}_\eps(q_i)\cap \mathcal{B}_\eps(q_j)$.
	Let $S$ be the set of points including $q_j$ and all points $q_k$ of $Q$ to the right of $q_j$ whose interval $\mathcal{B}_\eps(q_k)$ does not intersect $\mathcal{B}_\eps(q_j)$. 
	Move all points in $S$ until the intersection $\mathcal{B}_\eps(q_i)\cap \mathcal{B}_\eps(q_j)$ disappears.
	By construction, ass cells previously in $\mathcal{B}_\eps(q_i)\cap \mathcal{B}_\eps(q_j)$ disappear.
	We show that no other cell does, and thus the modified $Q$ is still a solution for $M_\eps$.
	A cell to the left of $q_j-\eps$ is not affected since we only move points to the left of $q_j$.
	A cell to the right of $q_j+\eps$ would disappear if $\mathcal{B}_\eps(q_j)$ starts to intersect with an interval that it didn't before.
	This does not happen since $S$ contains all such intervals.
	A cell in $\mathcal{B}_\eps(q_j)\setminus \mathcal{B}_\eps(q_i)$ would disappear if an interval defined by $q_k\in S\setminus \{q_j\}$ stopped intersecting another interval. 
	Recall that there are no intervals whose right endpoint are between $q_j+\eps$ and $q_j+\eps+|\mathcal{B}_\eps(q_i)\cap \mathcal{B}_\eps(q_j)|$ by the ``shortest'' assumption.
	Then, such cells do not exist.
	Thus, we produced a solution to $M_\eps$ whose interval intersection graph has fewer edges than $G^*$.
	Applying this argument successively we conclude that there exist a solution whose intersection graph is $G$.
	
	We now show that Step (5) is necessary.
	Suppose there are four intervals $\{a,b,c,d\}$ in row $r$, $I_r = \{a, c\}$, $C = \{b,c,d\}$, and $C' = \{c\}$. Also by assumption 
	$a <_D b =_D c =_D d$.
	For the sake of contradiction, suppose that there exists a positive solution that does not place the interval $c$ before $C\backslash C' = \{b, d\}$ and after $I_r\backslash C = \{a\}$. Placing $c$ before $a$ implies that 
	$c <_D a$
	which is a contradiction.
	Placing $c$ after $b$ implies that 
	the intersection $\mathcal{B}_\eps(a)\cap \mathcal{B}_\eps(c)$ is contained in $\mathcal{B}_\eps(b)$.
	This means that the cell required by $I_r = \{a, c\}$ does not exist, which is again a contradiction.  
	
	Finally, we analyse Steps (6--7).
	The only worry is that there might exist two  extensions of the partial order produced in Step (5) such that one is a positive solution and the other isn't.
	Let $q_i$ and $q_j$ be two incomparable vertices in the partial order. 
	Since Step (5) refines equivalence classes, $q_i$ and $q_j$ are also in the same equivalence class $C$.
	Then, there exist no row containing an interval not in $C$ whose entries relative to $q_i$ and $q_j$ are different.
	If there is a row containing only a proper subset of $C$, the arrangement must contain a cell in which the proper subset of $C$ intersect and that does not intersect any other interval. 
	Since intervals in $C$ intersect the same intervals by definition, including intervals that must be to the left and to the right of intervals in $C$ (note that we add $v_0$ and $v_f$ so that every equivalence class has a predecessor and successor), this cell cannot exist.
	Then Step (7) returns `NO'.
	Else, the columns relative to $q_i$ and $q_j$ are identical and interchangeable.
\end{proof}
The construction of $G$ takes $O(k^2n)$ time where $k\le m$ is the maximum number of entries filled with 1 over all rows in $M_\eps$, since $G$ might contain $n$ cliques of size $k$. 
The other steps can be implemented in linear time. The full algorithm description is given in  Appendix~\ref{app:discretePoly}. 

\begin{restatable}{theorem}{Discreteruntime}\label{thm:uig-runtime}
	Given an $n\times m$ free space matrix $M_\eps$, we can decide whether there exist curves $P$ and $Q$ in $\Reals^1$ that realize $M_\eps$ in  $O(nm+k^2n)$ time, where $m\leq n$. 
\end{restatable}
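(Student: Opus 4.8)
The plan is to establish the runtime bound by carefully accounting for the cost of each of the seven steps of \textsc{DiscreteRealizabilityAlgo} and showing they sum to $O(nm+k^2n)$. The correctness is already handled by Lemma~\ref{lem:uig-fsm}, so here I would focus purely on the time complexity, with the dominant contributions coming from constructing $G$ and from the linear-time subroutines borrowed from~\cite{CORNEIL}.

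First I would analyze Step (1), the construction of the unit-interval graph $G$ from $M_\eps$. The vertices of $G$ correspond to the $n$ columns of $M_\eps$ (the points of $Q$), and two vertices are adjacent iff their $\eps$-balls must intersect, which can be read off from the rows of $M_\eps$: whenever a point $p_i$ of $P$ lies within $\eps$ of both $q_a$ and $q_b$, the intervals $\mathcal{B}_\eps(q_a)$ and $\mathcal{B}_\eps(q_b)$ overlap and we add the edge $q_aq_b$. Since each row has at most $k$ entries equal to $1$, it induces a clique on at most $k$ vertices, contributing $O(k^2)$ candidate edges per row, and there are up to $n$ such rows (using $m\le n$), giving the $O(k^2n)$ term. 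I would note that this bound is tight in the worst case precisely because $G$ might consist of $n$ cliques of size $k$, as stated in the text.

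Next I would bound Steps (2)--(4), which are exactly the unit-interval-graph recognition procedure of Corneil et al.: choosing the left anchor $v_0$ via a BFS, performing the BFS from $v_0$ to obtain the level-based partial order, and refining within each level by sorting on $D(v)=|\nex(N(v))|-|\pre(N(v))|$. Each of these runs in time linear in the size of $G$, i.e. $O(n + |E(G)|) = O(nm)$, since the number of edges is $O(nm)$ (each of the $n$ columns can be adjacent to at most $m$ others via the $\le m$ rows, or more simply the total is bounded by the input size). The refinement Step (5), which places the intervals of each equivalence class according to the $1$-entries in each row of $M_\eps$, can be carried out in one pass over the matrix, again $O(nm)$, since each entry is examined a constant number of times. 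Steps (6) and (7), extending to a global order and verifying compatibility with $M_\eps$, likewise require only a traversal of the resulting arrangement and a scan of $M_\eps$, costing $O(nm)$.

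Summing these contributions, the construction of $G$ dominates with $O(k^2n)$ while all remaining steps are $O(nm)$, yielding the claimed total of $O(nm+k^2n)$. The main obstacle I anticipate is not in any single step but in justifying that the adaptations in Step (5) do not increase the asymptotic cost beyond the linear-time guarantees inherited from~\cite{CORNEIL}: one must verify that the equivalence-class refinement using the rows of $M_\eps$ can be implemented so that each $1$-entry is processed in amortized constant time, rather than naively re-sorting classes. Provided the classes are maintained as ordered buckets and updated incrementally, this stays within $O(nm)$, and the theorem follows.
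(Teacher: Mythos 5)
Your proposal is correct and follows essentially the same route as the paper's proof: correctness is delegated to \lemref{lem:uig-fsm}, Step (1) contributes the $O(k^2n)$ term from the per-row cliques, and Steps (2)--(7) are each bounded by $O(nm)$ (the paper uses the slightly sharper edge bound $|E|=\min(m^2,k^2n)$ for the BFS, but since $m\le n$ your looser $O(nm)$ bound suffices for the stated total). One minor slip worth fixing: the vertices of $G$ are the $m$ columns of $M_\eps$ (the points of $Q$), not ``$n$ columns''---this mislabeling does not affect the analysis, since your per-row clique accounting over the $n$ rows is what actually drives the bound.
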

\begin{proof}
	Correctness is given by Lemma~\ref{lem:uig-fsm}.
	First note that the size of $V$ derived from $M_\eps$ is $|V| = m$, and $|E|= O(m^2)$ due to the size of the adjacency matrix we use in Step (1), however, we need $O(k^2n)$ time to add all edges in $G$ due to the size of the clique induced by the intervals with entry 1 in each row. The size of each clique is at most $k^2$. Thus 
	$|E| = \min(m^2,k^2n)$. 
	Steps (2--3) takes $O(|V| + |E|) = O\big(m+ \min(m^2,k^2n)\big)$ for performing the BFS on $G$. 
	Step (4) takes $O(|V|) = O(m)$ time using a counting sort per level of the BFS. 
	Step (5) takes $O(mn)$ time by processing each row of $M_\eps$ and 
	partitioning the relevant intervals into their equivalence classes in $O(m)$ time.
	Step (6) takes $O(m)$ time. 
	In step (7), the real line can be partitioned into $2m = O(m)$ cells. Verifying that all $n$ points of $P$ fall into the induced cell takes $O(nm)$ time. 
	Thus, the algorithm's total runtime is  
	$O(mn+k^2n)$. 
\end{proof}
%


\bibliography{references}

\newcommand{\SortNoop}[1]{}
\begin{thebibliography}{10}

\bibitem{abel2016needs}
Zachary Abel, Erik~D Demaine, Martin~L Demaine, Sarah Eisenstat, Jayson Lynch,
  and Tao~B Schardl.
\newblock Who needs crossings? {H}ardness of plane graph rigidity.
\newblock In {\em 32nd International Symposium on Computational Geometry
  ({SoCG} 2016)}, volume~51, pages 3:1--3:15, 2016.

\bibitem{abel2016folding}
Zachary~Ryan Abel.
\newblock {\em On folding and unfolding with linkages and origami}.
\newblock PhD thesis, Massachusetts Institute of Technology, 2016.

\bibitem{eurocg22}
Hugo~A. Akitaya, Maike Buchin, Majid Mirzanezhad, Leonie Ryvkin, and Carola
  Wenk.
\newblock Realizability of free space diagrams for 1{D} curves.
\newblock In {\em 38th European Workshop on Computational Geometry (EuroCG)},
  2022.

\bibitem{isaac}
Hugo~Alves Akitaya, Maike Buchin, Leonie Ryvkin, and J{\'{e}}r{\^{o}}me
  Urhausen.
\newblock The k-{F}r{\'{e}}chet distance: How to walk your dog while
  teleporting.
\newblock In {\em 30th International Symposium on Algorithms and Computation,
  {{ISAAC} 2019}}, pages 50:1--50:15, 2019.

\bibitem{altgodau}
Helmut Alt and Michael Godau.
\newblock Computing the {F}r\'echet distance between two polygonal curves.
\newblock {\em International Journal of Computational Geometry \&
  Applications}, 5(1-2):75--91, 1995.

\bibitem{knauer}
Helmut Alt, Christian Knauer, and Carola Wenk.
\newblock Comparison of distance measures for planar curves.
\newblock {\em Algorithmica. An International Journal in Computer Science},
  38(1):45--58, 2004.

\bibitem{origami}
Esther~M. Arkin, Michael~A. Bender, Erik~D. Demaine, Martin~L. Demaine,
  Joseph~S.B. Mitchell, Saurabh Sethia, and Steven~S. Skiena.
\newblock When can you fold a map?
\newblock {\em Computational Geometry}, 29(1):23--46, 2004.
\newblock Special Issue on the 10th Fall Workshop on Computational Geometry,
  SUNY at Stony Brook.

\bibitem{ahkww-fdcr-06}
Boris Aronov, Sariel Har-Peled, Christian Knauer, Yusu Wang, and Carola Wenk.
\newblock {F}r\'echet distance for curves, revisited.
\newblock In {\em 14th Annual European Symposium on Algorithms}, pages 52--63,
  2006.

\bibitem{ash1985recognizing}
Peter~F Ash and Ethan~D Bolker.
\newblock Recognizing {D}irichlet tessellations.
\newblock {\em Geometriae Dedicata}, 19:175--206, 1985.

\bibitem{adaptiveDisFrechet-18}
J{\'e}r{\'e}my Barbay.
\newblock Adaptive computation of the discrete fr{\'e}chet distance.
\newblock In {\em String Processing and Information Retrieval}, pages 50--60,
  2018.

\bibitem{boomari2018recognizing}
Hossein Boomari, Mojtaba Ostovari, and Alireza Zarei.
\newblock Recognizing visibility graphs of polygons with holes and
  internal-external visibility graphs of polygons.
\newblock {\em arXiv preprint arXiv:1804.05105}, 2018.

\bibitem{bringmann}
Karl Bringmann.
\newblock Why walking the dog takes time: Fr\'echet distance has no strongly
  subquadratic algorithms unless {SETH} fails.
\newblock In {\em 2014 IEEE 55th Annual Symposium on Foundations of Computer
  Science}, pages 661--670, 2014.

\bibitem{bw-adfd-15}
Karl Bringmann and Wolfgang Mulzer.
\newblock Approximability of the discrete {F}r{\'e}chet distance.
\newblock {\em Journal of Computational Geometry}, 7(2):46--76, 2016.

\bibitem{Buchin2017}
Kevin Buchin, Maike Buchin, Wouter Meulemans, and Wolfgang Mulzer.
\newblock Four soviets walk the dog: Improved bounds for computing the
  fr{\'e}chet distance.
\newblock {\em Discrete {\&} Computational Geometry}, 58(1):180--216, 2017.

\bibitem{Buchin}
Kevin Buchin, Jinhee Chun, Maarten L{\"{o}}ffler, Aleksandar Markovic, Wouter
  Meulemans, Yoshio Okamoto, and Taichi Shiitada.
\newblock Folding free-space diagrams: Computing the {F}r{\'{e}}chet distance
  between 1-dimensional curves (multimedia contribution).
\newblock In {\em 33rd International Symposium on Computational Geometry,
  ({SoCG} 2017)}, pages 64:1--64:5, 2017.

\bibitem{fastweak}
Kevin Buchin, Tim Ophelders, and Bettina Speckmann.
\newblock {SETH} says: Weak {F}r{\'{e}}chet distance is faster, but only if it
  is continuous and in one dimension.
\newblock In {\em Proc. 30th {ACM}-{SIAM} Symposium on Discrete Algorithms
  ({SODA})}, pages 2887--2901, 2019.

\bibitem{shortcut2}
Maike Buchin, Anne Driemel, and Bettina Speckmann.
\newblock Computing the {F}r\'{e}chet distance with shortcuts is {NP}-hard.
\newblock In {\em Proceedings of the Thirtieth Annual Symposium on
  Computational Geometry (SOCG'14)}, page 367–376, 2014.

\bibitem{eurocg21}
Maike Buchin, Leonie Ryvkin, and Carola Wenk.
\newblock On the realizability of free space diagrams.
\newblock In {\em 37th European Workshop on Computational Geometry (EuroCG)},
  2021.

\bibitem{ch-rcpvg-17}
Jean Cardinal and Udo Hoffmann.
\newblock Recognition and complexity of point visibility graphs.
\newblock {\em Discrete \& Computational Geometry}, 57, 01 2017.

\bibitem{connelly2005generic}
Robert Connelly.
\newblock Generic global rigidity.
\newblock {\em Discrete \& Computational Geometry}, 33(4):549, 2005.

\bibitem{CORNEIL}
Derek~G Corneil, Hiryoung Kim, Sridhar Natarajan, Stephan Olariu, and Alan~P
  Sprague.
\newblock Simple linear time recognition of unit interval graphs.
\newblock {\em Information Processing Letters}, 55(2):99--104, 1995.

\bibitem{Demaine}
Erik~D. Demaine and Joseph O'Rourke.
\newblock {\em Geometric Folding Algorithms: Linkages, Origami, Polyhedra}.
\newblock Cambridge University Press, 2008.

\bibitem{dhw-afdrc-12}
Anne Driemel, Sariel Har-Peled, and Carola Wenk.
\newblock Approximating the {F}r\'{e}chet distance for realistic curves in near
  linear time.
\newblock {\em Discrete \& Computational Geometry}, 48:94--127, 2012.

\bibitem{driemel}
Anne Driemel, Sariel Har-Peled, and Carola Wenk.
\newblock Approximating the {F}r\'echet distance for realistic curves in near
  linear time.
\newblock {\em Discrete \& Computational Geometry. An International Journal of
  Mathematics and Computer Science}, 48(1):94--127, 2012.

\bibitem{eghmm-nsmbp-02}
Alon Efrat, Leonidas~J.\ Guibas, Sariel\ Har-Peled, Joseph~S.B.\ Mitchell, and
  T.M Murali.
\newblock New similarity measures between polylines with applications to
  morphing and polygon sweeping.
\newblock {\em Discrete \& Computational Geometry}, 28(4):535--569, 2002.

\bibitem{glw-mpstd-07}
Joachim Gudmundsson, Patrick Laube, and Thomas Wolle.
\newblock Movement patterns in spatio-temporal data.
\newblock In S.\ Shekhar and H.\ Xiong, editors, {\em Encyclopedia of {GIS}}.
  Springer-Verlag, 2007.

\bibitem{havel1982combinatorial}
Timothy~Franklin Havel.
\newblock {\em The combinatorial distance geometry approach to the calculation
  of molecular conformation}.
\newblock University of California, Berkeley, 1982.

\bibitem{jxz-pssad-08}
Minghui Jiang, Ying Xu, and Binhai Zhu.
\newblock Protein structure-structure alignment with discrete {F}r\'{e}chet
  distance.
\newblock {\em Journal of Bioinformatics and Computational Biology},
  6(1):51--64, 2008.

\bibitem{kang2012sphere}
Ross~J Kang and Tobias M{\"u}ller.
\newblock Sphere and dot product representations of graphs.
\newblock {\em Discrete Comput Geom}, 47:548--568, 2012.

\bibitem{noggle2012nuclear}
Joseph Noggle.
\newblock {\em The nuclear Overhauser effect}.
\newblock Academic Press, 2012.

\bibitem{Rote}
G{\"u}nter Rote.
\newblock Lexicographic {F}r{\'e}chet matchings.
\newblock In {\em 30th European Workshop on Computational Geometry}, 2014.

\bibitem{phdleo}
Leonie Ryvkin.
\newblock {\em On distance measures for polygonal curves bridging between
  Hausdorff and Fr{\'{e}}chet distance}.
\newblock doctoralthesis, Ruhr-Universit{\"a}t Bochum,
  Universit{\"a}tsbibliothek, 2021.
\newblock \href {https://doi.org/10.13154/294-8275}
  {\path{doi:10.13154/294-8275}}.

\bibitem{saxe1979embeddability}
James~B Saxe.
\newblock Embeddability of weighted graphs in k-space is strongly {NP}-hard.
\newblock In {\em 17th Allerton Conf. Commun. Control Comput., 1979}, pages
  480--489, 1979.

\bibitem{schaefer2012realizability}
Marcus Schaefer.
\newblock Realizability of graphs and linkages.
\newblock In {\em Thirty Essays on Geometric Graph Theory}, pages 461--482.
  springer, 2012.

\bibitem{skb-fdbas-07}
R.\ Sriraghavendra, Karthik K., and Chiranjib Bhattacharyya.
\newblock {F}r\'{e}chet distance based approach for searching online
  handwritten documents.
\newblock In {\em Proc.\ 9th International Conference on Document Analysis and
  Recognition (ICDAR)}, pages 461--465, 2007.

\end{thebibliography}

\appendix

\section{Omitted Proof from Section~\ref{sec:realdef}}\label{app:2Dlemma}

\realpartially*


\begin{proof}
	Knowing that the angles of the ellipse axes are at $45^\circ$, we can compute the ellipse's boundary equation:
	\begin{equation*}
		\left(\frac{(x-p_i^0)+(y-q_j^0)}{a}\right)^2+\left(\frac{(y-q_j^0)+(p_i^0-x)}{b}\right)^2=2,
	\end{equation*}
	where $a$ and $b$ denote the lengths of the ellipse's semi-major and semi-minor axes, and $(p_i^0,q_j^0)$ denotes its center point.
	
	Since the lengths of the segments $s_i^P, s_j^Q$ are given with the input diagram, 
	to determine a realizing pair of segments for a partially full cell, it remains to compute the angle $\alpha$ between the two segments and the distances of the segments to the intersection point:
	\[\alpha = \arcsin\left(\frac{\eps/2}{a/\sqrt{2}}\right)=\arcsin\left(\frac{\eps}{\sqrt{2}a}\right).\]
	
	We observe that the sign of the angle does not affect the free space: consider a pair of segments $s_i^P$ and $s_j^Q$ corresponding to a given free space cell $C_{i,j}$. 
	We fix the placement of $s_j^Q$ and compute intersection and enclosed angle $\alpha$ of the two segments' supporting lines. 
	Now, placing $s_i^P$ such that the enclosed angle equals $-\alpha$ results in the same free space component as placing it at an enclosed angle of $\alpha$. 
\end{proof}

\figref{fig:realcelltoplacement} illustrates the placement of segments: the extremal points of the ellipse's boundary curve correspond to intersections of one curve with the boundary of the second segment's $\eps$-neighborhood (marked with a circle).
For a fixed position of one segment, say $s_j^P$, the other segment's position is fixed up to symmetry.
That is, the enclosed angle can be either positive or negative, both images of $s_i^Q$ mirrored at $s_j^P$ are equally valid. 

\begin{figure}[htbp]
	\centering
	\includegraphics[scale=1]{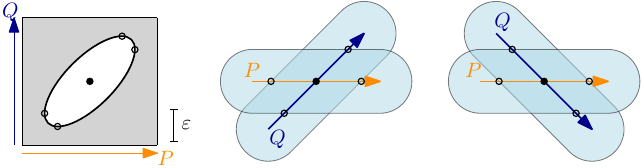}
	\caption{Input diagram consisting of one cell with marked extremal points and center, and both corresponding realizations. We fix $Q$ and place $P$ choosing a positive (left realization) or a negative (right realization) enclosed angle.}
	\label{fig:realcelltoplacement}
\end{figure}

\section{Omitted Proofs and Details from Section~\ref{sec:continuous2DHardness}}\label{app:ER-cont2D}

\begin{lemma}
	\label{lem:ER-containment-continuous}
	The problem of finding curves $P$ and $Q$ in $\Reals^2$ that realize an input diagram $D_\eps$ is in $\exists\Reals$.
\end{lemma}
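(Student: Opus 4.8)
The plan is to show membership in $\exists\Reals$ by exhibiting a first-order formula over the reals, using only existential quantifiers, whose free variables are the coordinates of the vertices of $P$ and $Q$, and which is satisfiable precisely when the input diagram $D_\eps$ is realizable. Recall that $\exists\Reals$ is exactly the class of problems polynomial-time reducible to the existential theory of the reals, so it suffices to encode realizability as a polynomial-size system of polynomial equalities and inequalities over real variables. First I would introduce $2(n+1)$ real variables for the coordinates of the vertices $p_0,\ldots,p_n$ of $P$ and $2(m+1)$ variables for $q_0,\ldots,q_m$ of $Q$ in $\Reals^2$.

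The core of the encoding is to translate each constraint imposed by $D_\eps$ into polynomial relations. The segment lengths are given as part of the input, so for each $i$ I would add the equation $\lVert p_i - p_{i-1}\rVert^2 = (\ell_i^P)^2$ and analogously for $Q$; these are polynomial in the coordinates since squaring removes the square root. The shape of each cell is determined by the relative placement of the supporting segments, so for each cell $C_{i,j}$ I would encode its type (empty, full, or partially full) as a condition on the distance function $\lVert P(r) - Q(t)\rVert$ restricted to that cell. For empty and full cells this is a statement that the squared distance stays above or below $\eps^2$ over the whole cell, which by convexity of the distance-squared function reduces to checking the values at finitely many critical points (cell corners and the foot of the perpendicular between the two supporting lines). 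For partially full cells, by \lemref{lem:realpartiallyfull} the relative placement is fully determined by the ellipse data, so I would encode the prescribed distances between the segment-line intersection point and the endpoints, again as polynomial constraints; I would also quantify existentially over the few auxiliary variables (such as the intersection point coordinates) needed to express these. The boundary-curve equations supplied with the input give additional polynomial equalities pinning down the exact geometry of each partially full component.

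The resulting formula is a finite conjunction of polynomial (in)equalities with existential quantifiers over a number of real variables polynomial in $n+m$, which is precisely the form required for membership in $\exists\Reals$. I would then observe that $D_\eps$ is realizable if and only if this sentence is true over the reals, which follows directly from the definitions in \secref{sec:realdef}.

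The main obstacle I expect is the uniform encoding of the partially-full-cell conditions: one must verify that ``the free space within $C_{i,j}$ is exactly the prescribed cropped ellipse'' can be written with only finitely many polynomial constraints rather than a universally quantified statement over all points of the cell. The key observation that makes this work is that the distance-squared function between two line segments is a low-degree polynomial whose sub-level and super-level sets within a cell are controlled by a constant number of boundary and extremal points; combined with \lemref{lem:realpartiallyfull}, which guarantees that the relative placement is determined by the cell data, each cell contributes only $O(1)$ polynomial constraints. This keeps the overall formula of polynomial size and purely existential, establishing containment in $\exists\Reals$.
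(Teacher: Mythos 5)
Your proposal is correct and follows essentially the same route as the paper: both take the vertex coordinates of $P$ and $Q$ as the existentially quantified real variables and translate each cell of $D_\eps$ into a constant number of polynomial (in)equalities, treating full and empty cells via endpoint/distance constraints and partially full cells via the relative placement forced by the ellipse data (the paper cites a quadratic-equation lemma from~\cite{eurocg22} where you invoke Lemma~\ref{lem:realpartiallyfull}, but these play the same role). The resulting polynomial-size existential sentence is exactly the paper's argument, so there is nothing further to add.
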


\begin{proof}
	We reduce the problem to a system of real polynomial inequalities.
	The coordinate of vertices of $P$ and $Q$ are the variables.
	Each cell represents a constraint:
	The proof of 
	Lemma~3.1~\cite{eurocg22} gives a quadratic equation relating the endpoints of the corresponding segments.
	Completely full cells require each segment to lie in the intersection of the $\eps$-neighborhood of the other segment's endpoints.
	Completely empty cells require that each segment lies outside of the other segment's $\eps$-neighborhood.
	All constraints can be expressed with a constant number of quadratic inequalities.
\end{proof}


\smallskip\noindent\textbf{Reduction description.}
Here, we detail the overview shown in Section~\ref{sec:continuous2DHardness}.
Refer to Figure~\ref{fig:full-example-f-s}.
Note that every edge in $Q$ corresponds to an edge of $T$. Every edge of $T$ has two correspondents on $Q$. 
Every edge in $P$ either corresponds to an edge of $T$ or is part of an angle gadget. 
A free space cell is empty unless it corresponds to a pair of edges of $P$ and $Q$ that (i) correspond to the same edge, (ii) correspond to adjacent edges and their shared vertex is rigid, (iii) one is part of an angle gadget and the other corresponds to one of the edges in the angle represented by the gadget, or (iv) correspond to a pair of edges in $T$ that were the result of a subdivision of an edge of $G$.
In case (i), the corresponding cell has a $\pm 45^\circ$ slab of width $\sqrt{2}$ depending on the direction of the traversal of the edge.
In case (ii), if the rigid acute angle between the corresponding edges of $G$ is $90^\circ$, the corresponding cell has a quarter of a unit disk centered at the corner that corresponds to the rigid vertex.
Else, the rigid angle is $180^\circ$ the corresponding cell has a right isosceles triangle with two edges of length $1$ and whose vertex incident to the right angle is at the corner of the cell that corresponds to the rigid vertex.
In case (iii), as explained in the description of the angle gadget, the cells corresponding to long edges are empty. 
The cells between a short edge of $P$ and an edge of $Q$ incident to the angle will contain half of a unit disk so that the distance between the half-disk and the corner of the cell that correspond to the nonrigid vertex is $1$. 
Finally, case (iv) is the same as case (ii) with a rigid angle of $180^\circ$.

\begin{lemma}\label{lem:rigid}
	Given a realization of $P$ and $Q$, the subcurves that correspond to a rigid subgraph $H$ exactly cover a rigid transformation of $H$. 
\end{lemma}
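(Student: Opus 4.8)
The plan is to argue edge by edge and vertex by vertex, leaning entirely on \lemref{lem:realpartiallyfull} to convert the partially full cells of the reduction into metric constraints, and then on the tree structure of $H$ to assemble these local constraints into global rigidity. First I would handle the edges. Every edge $e$ of $H$ is traced by two copies in $Q$ (from doubling $T$) and by the corresponding edge of $P$; the case-(i) cells carry the $\pm 45^\circ$ slab of width $\sqrt 2$. Applying \lemref{lem:realpartiallyfull} to these cells, together with the transitivity argument already used in \thmref{thm:er-hardness}, forces all these segments to lie exactly on top of one another. Since the length of each such segment is the fixed cell dimension, namely $\ell_G(e)$, the realization maps $e$ to a segment of the prescribed length.

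Next I would pin the angles. For each rigid vertex $v$ of $H$ and each pair of consecutive edges $e,e'$ of $H$ incident to $v$, the reduction installs a case-(ii) partially full cell $C_{e,e'}$: a quarter of a unit disk when the prescribed acute angle is $90^\circ$, and a right isosceles triangle when the prescribed angle is $180^\circ$ (the reflex values $270^\circ,360^\circ$ being the analogues forced by the fixed counterclockwise traversal). Applying \lemref{lem:realpartiallyfull} to $C_{e,e'}$ fixes the relative placement of the images of $e$ and $e'$, and the particular shape of the component pins the enclosed angle at $v$ to exactly its prescribed value; this is precisely the locking mechanism already invoked for the subdivision vertices in the reduction. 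Hence every prescribed angle at every rigid vertex of $H$ is realized exactly.

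To finish, I would use that $H$ is a tree. Once the position and orientation of a single edge of $H$ is fixed, propagating the prescribed edge lengths (from the first step) and the prescribed angles (from the second step) along the tree determines the image of every remaining edge uniquely. Therefore any realization of the $H$-subcurve differs from the canonical drawing of $H$ only by the placement of that initial edge, i.e.\ by a plane isometry; this is exactly the assertion that the $H$-subcurve covers a rigid transformation of $H$.

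The main obstacle is the second step: one must verify that the degenerate components — the quarter disk and the isosceles triangle — force \emph{exactly} the prescribed angle rather than merely constraining the relative placement up to a family of angles, which requires checking that each such free-space shape occurs for one and only one angle value and that the non-mirror four-point hypothesis of \lemref{lem:realpartiallyfull} is met. A secondary subtlety is the reflection ambiguity inherent in \lemref{lem:realpartiallyfull} (``two symmetric placements''): I would note that, because $H$ is connected and the propagation proceeds along the tree from a single fixed edge, the reflection choice is made once and then transmitted consistently, so the outcome is a single congruent copy of $H$ and the ambiguity is absorbed into the ambient isometry.
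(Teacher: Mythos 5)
Your argument is correct and is essentially the paper's own proof, elaborated: the paper likewise observes that every pair of edges corresponding to the same edge of $T$, or to adjacent edges within the same rigid subgraph $H$, defines a partially full cell, and then concludes directly from \lemref{lem:realpartiallyfull}. Your extra bookkeeping---edge lengths read off the cell dimensions, angle locking at rigid vertices via the case-(ii) cells, and propagation along the tree with the reflection ambiguity absorbed into the ambient isometry---makes explicit the steps the paper leaves implicit, at the same level of rigor regarding the degenerate (quarter-disk and triangle) components.
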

\begin{proof}
	By construction, every pair of edges from $P$ and $Q$ that either correspond to the same edge of $T$, or to two adjacent edges of $T$ that in turn correspond to edges in the same rigid subgraph $H$, define a partially full cell. Then, the claim follows by Lemma~\ref{lem:realpartiallyfull}.
\end{proof}

\section{Omitted Details from Section \ref{sec:continuous1D}}
\label{app:NPh1D}


We will show NP-hardness by a reduction from the partition problem. 
\subparagraph{Partition problem.} Given a set of positive integers $A = \{a_1, \ldots , a_n\}$, decide whether there exist two sets $A_1$, $A_2$, such that $\sum\limits_{\substack{a_i \in A_1}} a_i = \sum\limits_{\substack{a_j \in A_2}} a_j$, where $A_1 \cap A_2 = \emptyset$ and $A_1 \cup A_2 = A$.

\onedhardness*

\begin{proof}
Given partition instance $\{a_1, \ldots , a_n\}$, we set $S=\sum_{i=1}^n a_i$. 
We construct an input diagram of size $(n+2) \times 1$, see  \figref{fig:reduction}, and simplify our notation of cells $C_{i1}=C_i$. 
Realizing this constructed diagram through curves $P$, defined by endpoints $p_0, \ldots p_{n+2}$, and $Q$, consisting of a single segment $s^Q=\overline{q_0q_1}$, has to correspond to partitioning our integers into two sets of equal ``weight'' $S/2$.
The constructed diagram is a strip of height $\lvert s^Q \rvert$, which we set to~1. 
The total width of our diagram is set to $2(1+S)+S$:
Each cell width corresponds to the length of a segment in $P$, and we choose the length of a segment $s^P_i$ to be $\vert a_{i-1}\vert$, for $i=2, \ldots, n+1$. 
Segments $s^P_1$ and $s^P_{n+2}$ both have length $1+\sum_{i=0}^n\vert a_i\vert$, and we set $\eps=1$.

\begin{figure}[ht]
	\centering
	\includegraphics[scale=0.8]{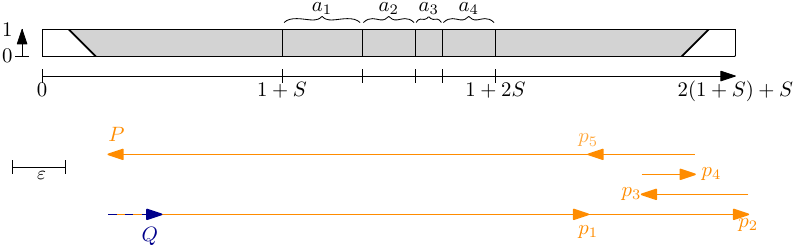}
	\caption{Reduction for partition instance $\{a_1, \dots , a_4\}=\{3, 2, 1, 2\}$. The corresponding segments are scaled by factor 2 and placed parallel instead of on top of each other to increase readability.}
	\label{fig:reduction}
\end{figure}

Now, the first and last cell of our constructed diagram are set to be partially full, the bottom left and top right corner being contained in white space.
All intermediate cells $C_2, \ldots C_{n+1}$ are empty.
For fixed position of $s^Q$, we construct the first and last cell such that segments $s^P_1$ and $s^P_{n+2}$ need to be aligned with $s^Q$; namely, we force $q_0=p_0=p_{n+2}$.
Consequently, $p_1=p_{n+1}$. 
For the remaining segments $s^P_2, \ldots , s^P_{n+1}$, we compare their orientation with placing the corresponding integer in either of the two sets $A_1, A_2$.
Starting with $s^P_2$, we can decide to place it on top of its predecessor, such that $\Vert p_2 - p_0\Vert = \Vert p_1 - p_0 \Vert-a_1$, or facing the same direction, such that $\Vert p_2 - p_0\Vert = \Vert p_1 - p_0 \Vert+a_1$.
If we choose the first option, we say $s^Q_1$ is \emph{oriented to the left}, otherwise it is \emph{oriented to the right}. 

Assume we are given an 
instance of the partition problem. 
The constructed strip is realizable iff we can place the curve $P$ such that points $p_1$ and $p_{n+1}$ coincide. 
This is the case iff the total length of segments oriented to the right equals the total length of segments oriented to the left.
Thus, the information on orientations of $s^P_2, \ldots s^P_{n+1}$ directly encodes a partition of integers $a_1, \ldots, a_n$ into sets $A_1$ and $A_2$.
We conclude
\begin{equation*}
	\sum_{a_i \in A_1}a_i =\sum_{a_j \in A_2}a_j =\frac{S}{2} \hspace{0.3cm} \Longleftrightarrow \hspace{0.3cm} p_{1}=p_{n+1}.
\end{equation*}
This proves NP-hardness.
The problem lies in NP since the realization of a polygonal curve  with $n$ segments can be described by a bit sequence of length $n-1$, specifying for each vertex whether the incident segments have the same orientation.
%
\end{proof}

\noindent {\bf FPT-algorithm for continuous curves in 1D.} In this case the free space diagram has limited ``configurations''. 
Cells are still empty, full or partially full cells, but now free space ellipses degenerate to slabs and the white space is bounded by  parallel line segments tilted by $\pm 45^\circ$, see~\cite{Buchin, Rote} and Figure~\ref{fig:partiallycell}. 
In this case, we can infer relative placements of segments already from one point on the free space boundary and cell boundary. 

\begin{figure}[tb]
	\centering
	\includegraphics[scale=0.65]{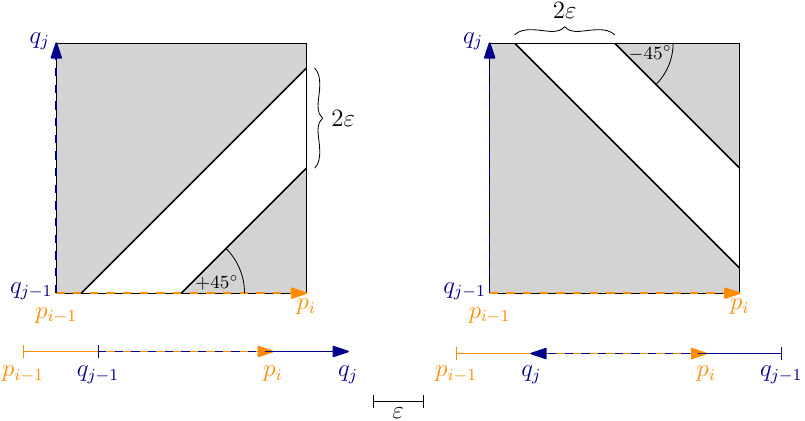}
	\caption{
		Partially full cells $C_{ij}$ of segments $s^Q_j$ and $s^P_i$ oriented in the same or  opposite direction.}
	\label{fig:partiallycell}
\end{figure}


\begin{lemma}
	\label{lem:1Dpartfull}
	For each partially full cell $C_{ij}$ corresponding to a pair of segments $s^P_i, s^Q_j$ in 1D, there exists at least one point $x$ on the intersection of the free space boundary and the cell boundary. 
	Its position fixes the distance of two endpoints $(p,q) \subset \{p_i, p_{i+1}\}\times \{q_j,q_{j+1}\}$, and thus allows to fully determine the relative positions of $s^P_i$ and $s^Q_j$. 
\end{lemma}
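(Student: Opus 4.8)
The plan is to exploit that in $\Reals^1$ the free space of two segments is not a genuine ellipse but a slab bounded by two parallel lines at $\pm 45^\circ$, so that the only remaining degree of freedom (a translation of the slab) can be pinned down by a single incidence. First I would fix $s^Q_j$ and parameterize $P(r)=p_i+\sigma_P r$ and $Q(t)=q_j+\sigma_Q t$ with orientations $\sigma_P,\sigma_Q\in\{+1,-1\}$, where $r\in[0,|s^P_i|]$ and $t\in[0,|s^Q_j|]$. Setting $f(r,t):=P(r)-Q(t)=(p_i-q_j)+\sigma_P r-\sigma_Q t$, the free space inside $C_{ij}$ is $\{(r,t):|f(r,t)|\le\eps\}$, and its boundary consists of the two lines $f=+\eps$ and $f=-\eps$. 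Since $f$ is affine with gradient $(\sigma_P,-\sigma_Q)$ whose entries are $\pm1$, these lines have slope $\sigma_P/\sigma_Q=\pm1$, which recovers the $45^\circ$ structure cited from~\cite{Buchin,Rote}.

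For existence of $x$, I would argue that because $C_{ij}$ is partially full, its white region is a nonempty proper subset of the cell; hence at least one of the two slab lines meets the interior of $C_{ij}$ and therefore crosses $\partial C_{ij}$. This yields a point $x=(r_0,t_0)$ lying simultaneously on the free space boundary and on the cell boundary, as claimed.

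To determine the relative placement, I would first read off from the white region (recorded in the input) the diagonal along which the slab runs, NE--SW versus NW--SE; this reveals the sign of the slope $\sigma_P/\sigma_Q$, i.e.\ whether $s^P_i$ and $s^Q_j$ have the same or opposite orientation. This is the one-dimensional analogue of the enclosed angle in \lemref{lem:realpartiallyfull}, except that here it is forced to be $\pm 45^\circ$, and this is exactly why a single incidence suffices in $\Reals^1$, whereas four boundary points are needed for the true ellipse in $\Reals^2$. Knowing the slope and the point $x$, I would then solve the affine equation $f(r_0,t_0)=\pm\eps$ for the offset $c:=f(0,0)=p_i-q_j$, which is precisely the signed distance of the endpoint pair $(p_i,q_j)$. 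Its magnitude $|c|$ is the distance between these endpoints, and together with the lengths $|s^P_i|,|s^Q_j|$ and the known orientations it fixes the distance of every pair in $\{p_i,p_{i+1}\}\times\{q_j,q_{j+1}\}$, hence the full relative placement.

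The main obstacle is the sign bookkeeping: a priori $x$ could lie on either of the two lines $f=\pm\eps$, and moreover $f$ and $-f$ describe the same white region, so $c$ is only recovered up to the reflection that swaps the two symmetric placements noted after \lemref{lem:realpartiallyfull}. I would resolve the first ambiguity by reading from the white region on which side of $x$ the free space lies, which distinguishes the two lines, and then argue that the residual two-fold ambiguity is exactly the admissible global reflection of $\Reals^1$, so that all interpoint distances, and in particular $|p_i-q_j|$, are determined unambiguously. A minor degenerate case to check is when $x$ is a corner of $C_{ij}$, in which the corresponding pair of endpoints is already at distance exactly $\eps$; this only makes the computation of $c$ more direct.
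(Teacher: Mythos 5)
Your proposal is correct and follows essentially the same route as the paper's proof: existence of $x$ from partial fullness, relative orientation read off the $\pm 45^\circ$ slope of the slab, and the endpoint distance recovered from the position of $x$ offset by $\eps$ (your solving of $f(x)=\pm\eps$ for the offset $c$ is an algebraic restatement of the paper's explicit computation $\Vert p_{i-1}-q_{j-1}\Vert = d(x)+\eps$ with $x$ on the left cell boundary). Your observation that the residual sign ambiguity ($f\leftrightarrow -f$) is exactly the admissible global reflection of $\Reals^1$ also matches the paper's treatment of symmetric placements, including the corner cases $d(x)\in\{0,\vert s^Q_j\vert\}$ that the paper handles explicitly.
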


\begin{proof}
	As both segments are placed on the real line, at least one endpoint lies with $\eps$ distance of a point of the other segment. 
	W.l.o.g., we fix the position of $s^P_i$ such that $p_{i-1}=0$, $p_i=\vert s_i^P \vert$, where $\vert s_i^P \vert$ denotes the segment's length.
	Let $x$ lie on the left boundary of $C_{ij}$, which corresponds to $p_{i-1}\times s_j^Q$, and we call $d(x)$ the distance between $x$ and the bottom left corner $p_{i-1} \times q_{j-1}$, see \figref{fig:partiallyfull}.
	For $d(x)=0$, we have that $\Vert p_{i-1}- q_{j-1} \Vert = \eps$, for $d(x)=\vert s^Q_j\vert$ it holds that $\Vert p_{i-1}-q_{j}\Vert = \eps$.
	In both cases, the orientation of $s^Q_j$ depends on whether $C_{ij}$ contains a free space region.
	Iff this is the case both segments have the same orientation.
	If $0 < d(x) < \vert s^Q_j \vert$ there could be one such intersection point, or two at distance $2\eps$, see \figref{fig:partiallycell}.
	Assuming $x$ denotes the lower one, i.e., the interval between bottom left corner and $x$ is not contained in free space, it holds that $\Vert p_{i-1}-q_{j-1}\Vert = d(x) + \eps$, because the point $q_x \in s^Q_j$ at distance $d(x)$ from $q_{j-1}$ has distance exactly $\eps$ to $p_{i-1}$.
	The mirrored case holds for $x$ denoting the upper intersection point and endpoints $p_{i-1}, q_j$.
	The orientation of $s^Q_j$ depends on the angle of the free space boundary within the cell; for $+45^\circ$, both segments $s^P_i, s^Q_j$ face in the same direction.
\end{proof}

\begin{figure}[htb]
	\centering
	\includegraphics[scale=0.65]{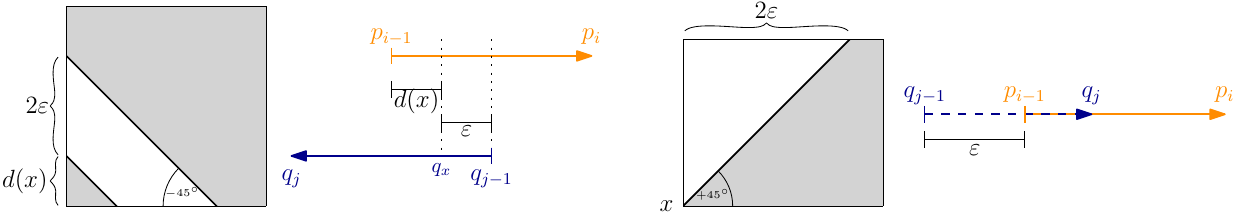}
	\caption{Placing $s_j^Q$ for fixed $s_i^P$ to realize a corresponding partially full cell.}
	\label{fig:partiallyfull}
\end{figure}

Next we observe a symmetry in the free space, see \figref{fig:mirror}.

\begin{observation}
	\label{obs:folding}
	Free space diagrams of curves in 1D are in some sense symmetrical, as described in~\cite{Buchin}: 
	Consider a point $p_i \in P$ at which the curve folds, and some point $q \in Q$. 
	We choose points $p \in s^P_i, p' \in s^P_{i+1}$ that are equidistant to $p_i$, so $p=p' \in \Reals$.
	Now $\| p-q \| \leq \eps$ holds iff $\| p'-q \| \leq \eps$. 
	Thus, the strip to the right of the grid line $p_i \times Q$ is a reflection of the strip to the left of that line.
	For consecutive partially full cells, this implies that a curve folds at the common endpoint $p_i$ iff the incident portions of free space have alternating slopes. 
\end{observation}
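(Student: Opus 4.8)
The plan is to argue entirely in arc-length coordinates, using the single fact that a fold at $p_i$ reverses the orientation of the curve in $\Reals^1$. First I would set up the pointwise symmetry that is the heart of the observation. Let $r_i$ be the arc-length parameter with $P(r_i)=p_i$, and write the two equidistant points as $p=P(r_i-\delta)\in s_i^P$ and $p'=P(r_i+\delta)\in s_{i+1}^P$ for $\delta>0$. Since $P$ folds at $p_i$, the orientation sign flips between the two incident segments, so if $s_i^P$ has direction $\sigma\in\{\pm1\}$ then $P(r_i-\delta)=p_i-\sigma\delta$ while $P(r_i+\delta)=p_i-\sigma\delta$ as well; hence $p=p'$ as real numbers. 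Consequently $\Vert p-q\Vert=\Vert p'-q\Vert$ for \emph{every} $q\in Q$, and in particular $\Vert p-q\Vert\le\eps$ iff $\Vert p'-q\Vert\le\eps$, which is exactly the claimed equivalence.

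Next I would lift this to the free space diagram to obtain the reflection statement. Fix a parameter $t\in[0,\ell_Q]$ and set $q=Q(t)$. The diagram point $(r_i-\delta,t)$ lies just left of the grid line $\{r_i\}\times[0,\ell_Q]$ and $(r_i+\delta,t)$ just right of it, and by the identity above the former is white iff the latter is white. As this holds for all $\delta>0$ and all $t$, the white set in the right strip is precisely the image of the white set in the left strip under reflection across the grid line $p_i\times Q$, which is the reflection claim.

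For the slope statement I would invoke the degenerate-ellipse (slab) structure of the free space in $\Reals^1$ established in \cite{Buchin, Rote}. Within a cell $C_{i,j}$ write $P(r)=a+\sigma_P r$ and $Q(t)=b+\sigma_Q t$ with $\sigma_P,\sigma_Q\in\{\pm1\}$ encoding the two orientations. The boundary $\Vert P(r)-Q(t)\Vert=\eps$ becomes $\sigma_P r-\sigma_Q t=\text{const}$, a line of slope $\sigma_P\sigma_Q$ in the $(r,t)$-plane, i.e.\ $+45^\circ$ when $s_i^P$ and $s_j^Q$ share orientation and $-45^\circ$ when they are opposite. Comparing the two cells $C_{i,j}$ and $C_{i+1,j}$ incident to the grid line $p_i\times Q$, both use the same $\sigma_Q$, so the slope products differ iff $\sigma_P$ flips between $s_i^P$ and $s_{i+1}^P$, that is, iff $P$ folds at $p_i$. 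This establishes both directions of the ``iff''.

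The symmetry itself is essentially immediate once folding is read as orientation reversal, so I expect no real obstacle there; the only part requiring care is the slope claim, where one must be precise that ``equidistant'' is measured in arc length (so that the reflection in the diagram is across the \emph{vertical} grid line and the cells $C_{i,j},C_{i+1,j}$ share exactly the segment $s_j^Q$), and must correctly tie the local orientation sign to the $\pm45^\circ$ slope via the slab description of partially full cells in $\Reals^1$.
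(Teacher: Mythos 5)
Your proof is correct and follows essentially the same route as the paper, which states this observation with its justification inline: a fold means orientation reversal, so arc-length-equidistant points coincide in $\Reals^1$, giving the mirrored strips, and the $\pm 45^\circ$ slab structure of cells ties the slope sign to the product of the two segment orientations. Your version merely makes the orientation signs $\sigma_P,\sigma_Q$ explicit, which is a sound formalization of the paper's argument.
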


We now borrow some definitions from computational origami, giving intuitive descriptions.
We refer to~\cite{Demaine} for formal definitions.
The \emph{crease pattern} $C(D_\eps)$ of a diagram $D_\eps$ is the crease pattern obtained by considering grid lines that correspond to folding vertices as creases. 
The \emph{folded state} of a crease pattern $C(D_\eps)$ is a continuous function that maps each face isometrically and reflects adjacent faces.
From \obsref{obs:folding}, we conclude 

\begin{corollary}
	\label{cor:foldability}
	A given diagram $D_\eps$ is realizable iff there is an assignment of the grid lines to {\em \{\texttt{fold}, \texttt{straight}\}}, such that overlapping white space aligns in the folded state $C(D_\eps)$.
\end{corollary}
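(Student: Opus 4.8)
The plan is to read the corollary as a geometric reformulation of \obsref{obs:folding} and to make that precise through the evaluation map that sends the parameter rectangle to physical space. Define $\Phi\colon R\to\Reals^2$ by $\Phi(r,t)=(P(r),Q(t))$. Because $P$ and $Q$ are parameterized by arc length and neither folds in the interior of a cell (folds occur only at vertices, i.e.\ at grid lines), each coordinate of $\Phi$ restricted to a cell $C_{i,j}$ is a one-dimensional isometry (a translation composed with at most one reflection), so $\Phi|_{C_{i,j}}$ is a planar isometry onto its image. Crossing a grid line at which the corresponding curve folds flips the orientation of that coordinate, i.e.\ reflects the adjacent cell, whereas a \texttt{straight} grid line leaves orientations unchanged. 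Hence, once each grid line is labelled \texttt{fold} or \texttt{straight} according to whether the curve folds there, $\Phi$ is exactly a folded state of the crease pattern $C(D_\eps)$. The crucial point is that under $\Phi$ the free space collapses to the fixed diagonal strip: $(r,t)\in F_\eps$ iff $|P(r)-Q(t)|\le\eps$, so $\Phi(F_\eps)\subseteq\{(x,y):|x-y|\le\eps\}$, and the $\pm45^\circ$ slab boundary of each partially full cell maps (by the isometry property) onto the $45^\circ$ lines $y=x\pm\eps$ bounding that strip.

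For the forward direction, suppose $D_\eps$ is realizable by curves $P,Q$ in $\Reals^1$. I take the \texttt{fold}/\texttt{straight} assignment induced by the actual folds of $P$ and $Q$; by the discussion above this yields the folded state $\Phi$. If two diagram points $(r,t)$ and $(r',t')$ overlap in the folded state then $\Phi(r,t)=\Phi(r',t')$, so they share the same physical coordinates $(x,y)$ and hence the same membership in the diagonal strip. By the definition of $D_\eps$ they therefore receive the same color, so white space overlaps only white space and the alignment condition holds.

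For the converse, suppose some assignment yields a folded state in which white space aligns. The cell widths and heights supply the segment lengths of $P$ and $Q$, and the \texttt{fold}/\texttt{straight} labels fix every turn direction, so I can lay $P$ and $Q$ out on the line, uniquely up to a global translation and reflection (and up to the at most two-component ambiguity of \lemref{lem:1D-2comp}). This layout determines $\Phi$, and it remains to verify $D_\eps(P,Q)=D_\eps$, i.e.\ that a point is white in $D_\eps$ exactly when its $\Phi$-image lies in the strip $|x-y|\le\eps$. \lemref{lem:1Dpartfull} already fixes the relative placement of each partially full pair of segments from a single boundary point, so each partially full cell is realized with its correct $\pm45^\circ$ slab; the alignment hypothesis then propagates this across every pair of sheets that overlap, forcing all white and all gray regions to stack only on like-colored regions.

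The main obstacle I expect is precisely this consistency step in the converse: I must rule out that alignment is satisfied on the partially full cells—which carry sharp $\pm45^\circ$ boundary data—while some empty or full cell is nonetheless violated, since such cells encode only one-sided inequalities ($|x-y|>\eps$ throughout, resp.\ $|x-y|\le\eps$ throughout) rather than an exact boundary. The key to resolving it is that these inequalities are automatically met once the segment endpoints are placed consistently with the aligned boundary slabs, so the folded state leaves no room for a mismatched empty or full cell; checking that the global layout is well defined and that the diagonal strip is covered \emph{exactly} (not merely contained), despite the placement ambiguities, is the delicate part.
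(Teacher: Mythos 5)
Your $\Phi$-map formalization is a finer-grained route than the paper's, which offers essentially no written proof: the corollary is stated as a one-line consequence of \obsref{obs:folding} (``From \obsref{obs:folding}, we conclude''). Your forward direction is a correct and rigorous fleshing-out of exactly that reflection argument: the realization induces the \texttt{fold}/\texttt{straight} assignment, overlapping parameter points share physical coordinates under $\Phi$, hence share membership in the strip $\{|x-y|\le\eps\}$, hence share color.

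The genuine gap is in your converse, and you half-see it yourself: the claim that the empty/full constraints are ``automatically met once the segment endpoints are placed consistently with the aligned boundary slabs'' fails, because the alignment hypothesis constrains only points where layers of the folded state actually overlap. Cells whose folded images are disjoint from all other white space carry diagonal-offset information that alignment cannot see. Concretely, take a one-row diagram with $Q$ a single segment and $P$ three segments, where $C_{1,1}$ and $C_{3,1}$ are partially full with $+45^\circ$ slabs at mutually inconsistent diagonal offsets and $C_{2,1}$ is empty: under the all-\texttt{straight} assignment no two layers overlap, so ``overlapping white space aligns'' holds vacuously, yet no curves realize the diagram, since for a straight layout $\Phi$ is a global isometry on the whole rectangle and the free space would have to be a single slab (and dimensions can be chosen so that every folding assignment misaligns). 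What rescues the statement is the stronger condition the paper actually uses in the body of Section~\ref{sec:continuous1D}: the white space must fold into a \emph{single convex component}, i.e., the folded white space must coincide exactly with the diagonal strip for some relative offset of the two curves. With that condition your own machinery closes the proof: \lemref{lem:1Dpartfull} pins the relative offset from any slab boundary, convexity of the folded white component forces every white point into the strip and every gray point outside it, and your verification that $D_\eps(P,Q)=D_\eps$ goes through. So either prove the corollary with the single-convex-component condition added (as phrased in the main text), or note explicitly that the appendix statement is only correct under that reading; as written, your resolution of the ``delicate part'' asserts rather than proves the step at which the literal statement actually breaks.
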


\todo{reviewer comment: I am not familiar with computational origami. More figures would help in understanding the notions of "folded state", "crease pattern", "end fold", "crimp", etc. Also, add more details in the description of Figure \ref{fig:mirror}.}

\begin{figure}[ht]
	\centering
	\includegraphics[scale=.75]{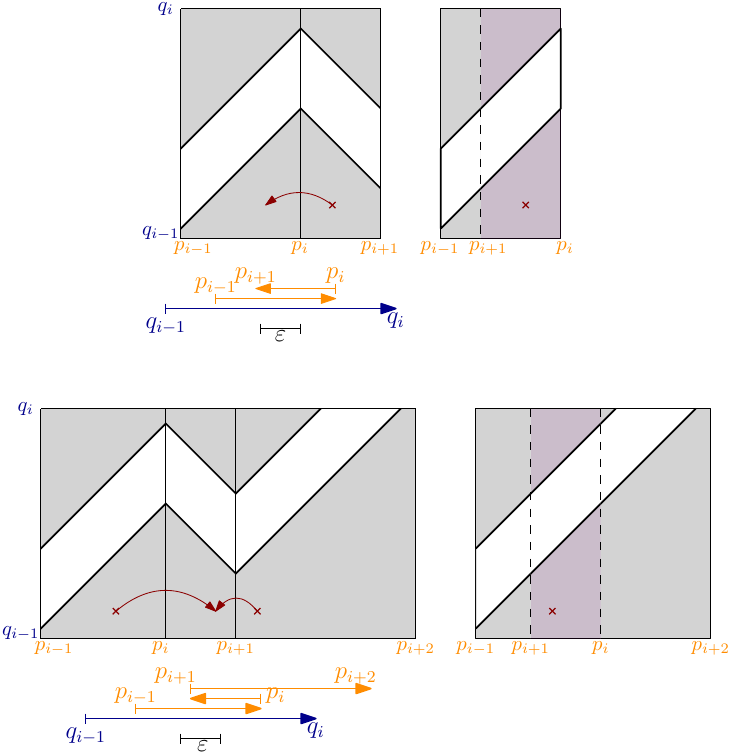}
	\caption{An end fold and a crimp of two diagrams along the grid lines of folding vertices.}
	\label{fig:mirror}
\end{figure}




We use \corref{cor:foldability} to obtain an algorithm for 1D continuous realizability. 
The algorithm runs in exponential time $O(mn2^k)$ for general inputs. Here, $k$ is the number of rows of $D_\eps$ that do not intersect the boundary of the white space, i.e., the number of vertical or horizontal strip ``gaps'' (completely gray or completely white).
Intuitively, gaps correspond to segments of one curve that are either too far or too close to the other curve, and do not provide direct information about curve  placement.
If there is a white gap in $D_\eps$ (a row of only full cells) then the diameter of one of the curves is smaller than $2\eps$.
For inputs where $k\in O(\log (mn))$, the algorithm thus runs in polynomial time in the size of the diagram.

For a given assignment in \corref{cor:foldability}, one could naively check if the corresponding folded state is valid by checking the pairwise overlap of free-space cells in $O((mn)^2)$ time. 
\todo{reviewer comment: Given one of the $2^k$ folding patterns, I believe that if we check its validity from left to right, then we always just have a single slab for each row that we have to check consistency with or extend further after folding, which clearly is an $O(nm)$ algorithm. Is there something that I am missing about the difficulty? If so, consider pointing it out more clearly}
Instead, we achieve a better runtime by incrementally checking for consistency using some theory from origami.
Our algorithm is inspired by the algorithm for \emph{simple-foldability in 1D}~\cite{origami},
which asks whether a 1D crease pattern can be folded through a sequence of simple folds ($\pm 180^\circ$ rotations of a portion of the paper).
Their linear-time greedy algorithm constructively decides whether a given crease pattern admits a sequence of simple folds.
Although it works with crease patterns where each crease has a \emph{mountain/valley} assignment,
for our problem this assignment is irrelevant.
Without a mountain/valley assignment, every crease pattern is simple-foldable.
We use their algorithm for the properties that it maintains during a linear-sized sequence of operations (simple folds).
The algorithm identifies one of two sufficient operations that can be greedily applied:
(i) an \emph{end fold} which is a simple fold applied to either the first or last crease; and (ii) a \emph{crimp} which folds through two consecutive creases.
They show that these operations are \emph{safe} if 
\begin{enumerate}
	\item [$(\star)$] the resulting folded state after each operation does not cause a portion of the paper without creases to overlap with a not yet folded crease.
\end{enumerate}
After applying the operation, they reduce the problem to a smaller crease pattern obtained by ``gluing'' the overlapping layers.
They show that if the crease pattern is not trivial (no creases), there is always a safe operation, which they can find in $O(1)$ time with $O(n)$-time preprocessing.
We call any operation (end fold or crimp) \emph{valid} if the resulting folded state aligns the white space patterns, see \figref{fig:mirror}.
We use the above mentioned algorithm to efficiently check if a given crease pattern induces a realizable input space diagram. 


\algfptcontinuous*

\begin{proof}
	We describe a constructive algorithm.
	The main idea is to test every possible corresponding crease pattern that is compatible to $D_\eps$.
	Note that if a grid line intersects the white space, we can determine if it corresponds to a folding vertex or not as follows. \todo{reviewer comment: I think you miss the case where the whole grid line is white. It is obvious, but you should still handle it.}
	If two adjacent cells align the white space after a reflection through the grid line, then the grid line should be assigned to \texttt{fold}; otherwise, we assign it to \texttt{straight}.
	If there is an inconsistency of the assignment given by different pairs of adjacent cells incident to a grid line, the instance is not realizable, and we return ``no''.
	For the remaining $k$ grid lines that do not intersect the white space, we try all possible assignments to $\{\texttt{fold},\texttt{straight}\}$.
	We delete all grid lines that are assigned \texttt{straight}, merging pairs of adjacent cells through the deleted line.
	That defines a crease pattern and allows us to check realizability using \corref{cor:foldability}.
	
	Now, we could pay an extra linear factor using brute force to check for every pair of overlapping cells of $D_\eps$ whether their white space aligns, leading to an $O(m^2n^2)$ time algorithm.
	Instead, we use the simple-foldability algorithm~\cite{origami} to obtain an $O(mn)$-time algorithm as follows.
	We fold one dimension at a time.
	W.l.o.g., we focus on the horizontal dimension, corresponding to $P$, which contains $O(n)$ creases.
	Identify a safe operation and apply the corresponding fold(s). 
	Note that each operation causes at most three layers to overlap.
	Recall that we merge these layers into a single layer by ``gluing'', and apply induction.
	It suffices to check the alignment of the white space in the overlapping layers.
	By property~$(\star)$, the number of cells in these layers is $O(m)$, so the check can be performed in $O(m)$ time.
	In future operations, if the white space of a merged cell and the white space of another cell align, the white space of all original overlapping cells align since alignment is transitive.
	This proves the induction step.
	After $O(m)$ operations, we obtain a single segment in the horizontal dimension, and we can apply the same algorithm for the vertical dimension.
	The runtime of the check step is then $O(mn)$, proving our claim.
	
	When the input diagram is completely empty or completely full, \corref{cor:foldability} is trivially satisfied.
	If the diagram is empty, we place the curves sufficiently far apart from each other (farther than $\eps$).
	If it is full, we compute the intersection of the $\eps$-neighborhoods of every segment of one curve, and check whether we can place the other curve in this intersection. 
	With a fixed crease pattern, the diameters of the curves are deterministically defined.
\end{proof}

\section{Omitted Details from Section \ref{sec:continuous1D}}
\label{app:ppt1D}

\subsection{Fixed Boundary Subproblems}
\label{app:fixed-boundary}

In this case, we know the relative position of the boundary points of the uncertainty region.
That includes left and right uncertainty regions, and middle uncertainty regions when there is a single non-singleton component in $G$.
Without loss of generality, we assume that the left boundary points of the uncertainty region is $0$. 
We describe the DP for a region $[0,r]$.
The modification for $r=\infty$ is straightforward.
Let $Q'=(q_i,\ldots, q_{j-1},q_j)$ be a maximal subcurve of $Q$ in the uncertainty region $[0,r]$.
The segments $q_iq_{i+1}$ and/or $q_{j-1}q_j$ might come with a \emph{boundary constraint}, i.e.,  whether $q_i$ (resp., $q_j$) is mapped to $0$ or $r$.
Such conditions depend on whether the segments $q_{i-1}q_i$ and $q_jq_{j+1}$ exist and in which certainty region they are.
We assume that at least one of them is subject to a boundary constraint, or else $D_\eps$ is completely empty and the problem is trivial: embed the two curves far apart.
Without loss of generality, assume that $q_j$ is constrained to be at~$0$.
We define the subproblem $R(k, s)$, $i\le k \le j$ as \texttt{true} if the subcurve $(q_k,\ldots,q_j)$ can be embedded in $[0,r]$ while fixing the position of $q_k$ to the point $s$, where $0\le s\le r$.
The base case is $R(j, 0)=\texttt{true}$ and $R(j, s)=\texttt{false}$ for $s\neq 0$.
The recursive case below tries both possible orientations for $q_kq_{k+1}$ (pointing towards the right or left) and checks if that causes $Q'$ to go outside of $[0,r]$.
\begin{align}
\label{eq:DP}
\begin{split}
	R(k, s) = &(R(k+1, s+\|q_k q_{k+1}\|)\wedge(s+\|q_k q_{k+1}\| <r)) \ \vee \\
	&(R(k+1, s-\|q_k q_{k+1}\|)\wedge(0< s-\|q_k q_{k+1}\|))).       
\end{split}
\end{align}

We call a subcurve $Q'$ \emph{realizable} if $R(1, s)=\texttt{true}$ for some $s$ and $q_1$ has no boundary constraint, or $R(1, 0)=\texttt{true}$ if 
$q_1=0$, or $R(1, r)=\texttt{true}$ if 
$q_1=r$.


\subsection{Variable Boundary Subproblems}
\label{app:var-boundary}

We cannot fix both boundaries of the middle uncertainty region if both curves have segments of type (\ref{type:close}) (and thus have middle uncertainty regions).
The span of both curves depend on the sizes of their uncertainty regions, which in turn impose constraints on the size of the uncertainty region of the other curve and, thus, we cannot fix the boundary of the DPs for these regions.
We further divide into three cases based on boundary constraints.
In the first case, $D_\eps$ is completely full, meaning that both curves lie entirely in the middle uncertainty region of the other (every segment is of type (\ref{type:close})).
Thus, none of the segments has boundary constraints.
In the second case, $G$ has at most one non-singleton component. 
Then every maximal subcurve in the middle uncertainty region has at least one end point at the boundary point between the uncertainty region and the certainty region containing the segments in the non-singleton component.
In the third case, $G$ has two non-singleton components, so there exists at least one maximal subcurve in the middle uncertainty region that has endpoints at the two boundary points of the uncertainty region (spanning the entire region) since the curves are connected and contain segments in both their certainty regions.

We solve all cases with a slight variation of the DP for the fixed boundary case where we also guess  $\alpha$, the length of the image of the curve in the uncertainty region.
The subproblem $R(k, s, \alpha)$ where $k\in\{i,\ldots,j\}$ and $s\in \{0,\ldots, \alpha\}$ is then defined as before, and $\alpha$ is an integer in $\{1,\ldots,2\eps\}$.
Note that we can upper-bound $\alpha$ by $2\eps$ since middle uncertainty regions are defined by the intersection of two $\eps$ radius disks.
The recursive cases are computed with Equation~\ref{eq:DP} replacing $r$ with $\alpha$.
Now the base cases will depend on the type of boundary constraint.
For problems with no boundary constraint, we simply set $R(j, .,.)=\texttt{true}$.
For problems with at least one boundary constraint, we assume without loss of generality that $p_j$ is constrained to be at $0$.
Thus $R(j, 0,.)=\texttt{true}$ and $R(j, s,.)=\texttt{false}$ for $s\neq 0$.

As the lengths of the middle uncertainty regions depend on each other, we need to check if a compatible solution exists among all solutions of the DPs.
Let $r_P$ (resp., $r_Q$) be the length of the middle uncertainty region of $P$ (resp., $Q$) in a positive solution.
For a subcurve $Q'=(q_i,\ldots,q_j)$ with no boundary constraint at $q_i$ (resp., when $q_i=0$), 
we call a solution $R(i, .,\alpha)=\texttt{true}$ (resp., $R(i, 0,\alpha)=\texttt{true}$) \emph{compatible} with $r_Q$ if $\alpha\le r_Q$.
If $q_i$ is constrained to be at $\alpha$, then we call a solution $R(i, \alpha,\alpha)=\texttt{true}$ \emph{compatible} with $r_Q$ if $\alpha = r_Q$.

\section{Omitted Proofs from Section \ref{sec:discrete2DHardness}}
\label{app:discrete}

\ERdiscrete*


\begin{proof}
	Containment in $\exists\Reals$ can be proven by a straightforward reduction to $\exists\Reals$ similar to the proof of Lemma~\ref{lem:ER-containment-continuous}.
	We now focus on the reduction defined above.
	It is clear that it runs in polynomial time.
	Assume that there exists a pair of curves $P$ and $Q$ that realizes $M_{\eps}$.
	Refer to Figure~\ref{fig:er-discrete}(a).
	We use the labels of point of $P$ defined in the reduction and assume $Q=(q_1,\ldots, q_{|S|})$.
	Recall that, informally, points $q_1$ and $q_2$ represent vectors $\textbf{v}_1=(-,\ldots,-)$ and $\textbf{v}_2=(+,\ldots,+)$, respectively.
	Rotate the solution in order to make the vector $\overrightarrow{q_1q_2}$ vertical and pointing upwards.
	We build a hyperplane arrangement as follows.
	For each $i\in [n]$, create a hyperplane $\ell_i$ bisecting the segment $a_i b_i$.
	Now, we argue that $q_j$, $j\in\{1,\ldots,|S|\}$ is in a cell in the produced arrangement with description $\textbf{v}_j$.
	Let $C_1$ and $C_2$ be the cells in the arrangements of circles of radius $\eps$ containing $q_1$ and $q_2$, respectively.
	By definition, if $\textbf{v}_j[i]=+$, then $q_j$ must be within $\eps$ distance from $a_i$ and farther than $\eps$ from $b_i$, that is $q_j\in \mathcal{B}_\eps(a_i)\setminus\mathcal{B}_\eps(b_i)$.
	Thus, $C_1= (\bigcap_{i=1}^n \mathcal{B}_\eps(b_i)\setminus \bigcup_{i=1}^n \mathcal{B}_\eps(a_i))$ and $C_2= (\bigcap_{i=1}^n \mathcal{B}_\eps(a_i)\setminus \bigcup_{i=1}^n \mathcal{B}_\eps(b_i))$.
	Note that every hyperplane $\ell_i$ must separate $C_1$ and $C_2$ by definition.
	Thus every $\ell_i$ intersects the line segment $\overline{q_1q_2}$.
	We focus on a specific hyperplane $\ell_i$.
	Without loss of generality assume $\textbf{v}_j[i]=+$.
	Then, $\mathcal{B}_\eps(a_i)\setminus\mathcal{B}_\eps(b_i)$ is above $\ell_i$ and so is $q_j$.
	Therefore, the produced hyperplane arrangement realizes $S$.
	
	Now assume that there exists a hyperplane arrangement realizing $S$.
	Refer to Figure~\ref{fig:er-discrete}(b).
	For each cell in the arrangement described by $\textbf{v}_j$, choose a point $q_j$ in the interior of the cell.
	As before, every hyperplane intersects the line segment $\overline{q_1q_2}$, since $q_1$ is below all the hyperplanes and $q_2$ is above.
	Let $t_i$ be the intersection of   $\ell_i$ and $\overline{q_1q_2}$.
	Define the balls $\mathcal{B}_r(w_{i,r}^+)$ and $\mathcal{B}_r(w_{i,r}^-)$ respectively above and below $\ell_i$, tangent to $\ell_i$ at $t_i$.
	Note that $\mathcal{B}_r(w_{i,r}^+)$ (resp., $\mathcal{B}_r(w_{i,r}^-)$) equals the upper (resp., lower) halfspace of $\ell_i$ when $r\rightarrow \infty$.
	Thus, for some sufficiently large $r$, $\mathcal{B}_r(w_{i,r}^+)$ contains all points $q_j$ above $\ell_i$ and $\mathcal{B}_r(w_{i,r}^-)$ contains all points $q_j$ below $\ell_i$.
	Let $r^*$ be a sufficiently large $r$ such that the previous statement is true for all $i\in [n]$.
	Scale the entire construction to make $r^*=\eps$.
	Then, we can construct $P$ by making $a_i=w_{i,\eps}^+$ and $b_i=w_{i,\eps}^-$.
	By construction, each $q_j$ is contained in the appropriate cell of the arrangement of circles of radius $\eps$ centered at points of $P$.
	Thus, the constructed $P$ and $Q$ realize $M_\eps$.
\end{proof}

\section{Omitted Details from Section \ref{sec:discrete1Dpoly}}
\label{app:discretePoly}
\vspace{-1mm}

The full description of realizability of discrete curves in $\Reals^1$ is presented below:

\noindent$\textsc{DiscreteRealizabilityAlgo}(M_\eps):$


{\bf Step (1):}  Construct an abstract unit-interval graph $G$ from $M_{\eps}$ whose rows are vertices in $P$ and columns are vertices in $Q$: The vertex set of $G$ represents the columns of $M_{\eps}$, i.e., the interval of length $2\eps$ centered at a point $q\in Q$. For every row ($p \in P$), the edge set of $G$ is defined by including a clique between the columns (vertices in $Q$) that are filled with $1$ in that row. 
We store the edges of $G$ in an adjacency matrix. 
In the following we assume $G$ connected, or else, we treat each component separately.
\label{step:uig_const}
{\bf Step (2):}
Choose a left anchor as follows. 
As in~\cite{CORNEIL}, we get a set of candidates for left anchor by running a BFS search on $G$ from an arbitrary vertex.
The set contains the vertices with minimum degree in the deepest level of the BFS.
The set of candidates must be in at most two equivalence classes, or else $G$ is not an interval graph.
We augment $G$ by adding a new vertex $v_0$ connected to each vertex in one of the equivalence classes of candidates. 
We choose $v_0$ as a left anchor.
%
%
{\bf Step (3):}  Perform a BFS on $G$ starting at $v_0$ to obtain a partial order of the vertices.  
{\bf Step (4):} Refine the partial order to get the global order by sorting the intervals at each level of the BFS under the criterion of $|\nex(N(v))|- |\pre(N(v))|$. 
We denote the current partial order as $D$ and use $<_D$, $>_D$ and $=_D$ to denote whether intervals appear in order, in reverse order, or are incomparable in $D$, respectively.
By Theorem 2.2 in~\cite{CORNEIL}, a pair of vertices $u$ and $v$ with $u=_D v$ are indistinguishable.
Thus a set of pairwise incomparable vertices in this partial order forms an equivalence class.
Add a vertex $v_f$ to the end of the order, connecting it to all vertices in the last equivalence class.
%
{\bf Step (5):} 
Further refine the partial order as follows. We refer to such refinement as $D'$.
For each row $r$ in $M_\eps$, let $I_r$ be the set of intervals with entry 1 in $r$, and let $C' = C\cap I_r\neq \emptyset$ where $C$ is an equivalence class.
If there are $i \in I_r\setminus C$ and $c'\in C'$ where $i<_D c'$ (resp., $i>_D c'$), make $c'<_{D'} c$ (resp., $c'>_{D'} c$) for all $c\in C\setminus C'$.
%
	%
	%
%
%
%
{\bf Step (6):} Extend the partial order defined by the BFS levels and $D'$ to a global order breaking ties arbitrarily.
Obtain the arrangement of unit intervals based on the global order and $G$.
This can be done as in Theorem 3.2 in~\cite{CORNEIL}.
%
{\bf Step (7):} Verify whether the produced arrangement is compatible with $M_\eps$. 
Each row $r$ specifies the existence of a cell where exactly the intervals with entry 1 intersect.
If a cell specified by a row does not exist in the arrangement, return `NO'.
Otherwise, return `YES'.

\end{document}